\def\mytitle{Valid Post-Selection Inference in Robust Q-Learning}
\title{\mytitle}
\author{
  \small{
  Jeremiah Jones$^{1*}$\href{j.miahjones+research@gmail.com}, Ashkan Ertefaie$^2$,
  James R. McKay$^3$,} \\
  \small{David W. Oslin$^{4,5}$,
  Robert L. Strawderman$^6$}  \\ 
  \footnotesize{$^1$Eli Lilly and Company, Indianapolis, IN} \\
  \footnotesize{$^2$  Department of Biostatistics, Epidemiology and Informatics,
University of Pennsylvania}\\
  \footnotesize{$^3$Center on the Continuum of Care in the Addictions,} \\
  \footnotesize{Department of Psychiatry, Perelman School of Medicine, University of Pennsylvania} \\
  \footnotesize{$^4$VISN 4 Mental Illness Research and Education Center, Crescenz VA Medical Center} \\
  \footnotesize{$^5$Treatment Research Center and Center for Studies of Addictions,} \\
  \footnotesize{Department of Psychiatry, Perelman School of Medicine, University of Pennsylvania} \\
  \footnotesize{$^6$Department of Biostatistics and Computational Biology, University of Rochester} \\
}
\date{August 2025}
\newcommand{\thebib}{
  \bibliographystyle{biom}
  \bibliography{uposi-rql}
}
\begin{document}


\maketitle

\begin{abstract}
  Q-learning facilitates the development of an optimal adaptive treatment strategy through stagewise regression on a pre-specified set of tailoring variables and confounders. Semiparametric robust Q-learning eliminates the residual confounding that can occur when parametric working models for  confounding influences are misspecified. However, in the presence of many potential tailoring variables, constructing an optimal adaptive treatment strategy using either approach may lead to including extraneous variables that contribute little or no benefit while increasing implementation costs, thereby placing an undue burden on patients. Using data-driven selection processes to identify a smaller set of informative prognostic factors is straightforward; however, proper statistical inference must account for this selection process. In this paper, we adapt the Universal Post-Selection Inference (UPoSI) procedure to the semiparametric Robust Q-learning method. UPoSI, introduced for use with linear models, allows for very general variable selection mechanisms. Our approach addresses the unique challenges stemming from the use of UPoSI with semiparametric multistage decision methods. Theoretical and simulation results demonstrate the validity of the proposed confidence regions. We illustrate our proposed methods through an application to adaptive treatment strategy estimation for substance 
  abuse.
\end{abstract}

{\bf Keywords:}{ \it
Adaptive strategy; Personalized medicine; Post-selection inference; Robust Q-learning
}




\doublespacing

\section{Introduction}
\label{sec:intro}

Data-driven individualized adaptive strategies have drawn substantial  attention in recent years in economics, statistics and health research \citep{meier2012glp, kitagawa2018should, watts2020optimizing, xu2022estimating}. The key goal is to optimize the expected value of a specified outcome by tailoring treatments to  individuals based on their ongoing performance and characteristics. The quality of the constructed strategies, however, can be severely hampered by confounding bias or inclusion of many spurious variables. 

Existing methods focused on finding optimal individualized treatment strategies (i.e., strategies leading to the best outcome) can be categorized as either direct \citep{zhao2012estimating,zhangEstimatingOptimalTreatment2012} or indirect methods \citep{ertefaieRobustQLearning2021,wallaceDoublyrobustDynamicTreatment2015,schulteQandAlearningMethods2014,murphyOptimalDynamicTreatment2003}. The former requires modeling an outcome and the latter relies on modeling the treatment assignment mechanism.  Direct methods  
can be 
inefficient and fail to provide reasonable inference for the parameters that define non-smooth decision rules (e.g., indicator or max operators) due to slow rates of convergence \citep{chakrabortyStatisticalMethodsDynamic2013}. Indirect methods do not  suffer from these shortcomings at the cost of relying on the correctly specified outcome model. This can be an important limitation in observational settings where model misspecification can lead to residual confounding and lack of causal interpretability.

Importantly, the form of the decision rules arising from  indirect methods do not depend on the full outcome model, but on the so-called ``blip'' \citep{robinsOptimalStructuralNested2004} or ``contrast'' function \citep{zhangEstimatingOptimalTreatment2012,schulteQandAlearningMethods2014}. Motivated in part by this fact, \citet{ertefaieRobustQLearning2021} proposed an indirect semiparametric method, or Robust Q-learning, to mitigate the residual confounding bias issue by leveraging the Robinson-Speckman transformation \citep{robinsonRootnconsistentSemiparametricRegression1988,speckmanKernelSmoothingPartial1988}. Robust Q-learning enables semiparametric estimation of an optimal strategy within a class of decision rules defined using a finite dimensional vector of parameters and a pre-specified set of tailoring variables.

Data-driven selection of tailoring variables has received recent attention. \cite{qian2011performance} demonstrated that excess variables included in the decision rule can negatively impact the quality of 
individualized 
treatment rules,
using an $\ell_1$-penalized least-squares approach was proposed to adapt to the underlying sparsity.
\citet{wallaceModelSelectionEstimation2019} developed an information criterion for the G-estimation framework, although such methods do not scale well with dimension. 
\citet{shi2018high} proposed an adaptive lasso technique in the A-learning setting \citep{luVariableSelectionOptimal2013, schulteQandAlearningMethods2014}, while
\citet{bianVariableSelectionRegressionbased2021} introduced a Lasso-type penalty in the framework of \citet{wallaceDoublyrobustDynamicTreatment2015}. \cite{shi2016robust} propose a two-stage concave penalization estimator to accommodate high-dimensional estimation and selection in both the outcome and mediator models. However, none of these techniques provide valid inference for the coefficients corresponding to the selected variables. 
Classical inference frameworks fail in the presence of variable selection because of the bias that is introduced by only performing inference on those coefficients 
estimated to be further away from zero than others \citep{leebModelSelectionInference2005,leebSparseEstimatorsOracle2008,berkValidPostselectionInference2013}. 
For a recent review of methods introduced
to handle such problems, see  \cite{kuchibhotlaValidPostselectionInference2020}, which motivates the study of post-selection inference from a replicability perspective  \citep{leeExactPostselectionInference2016}.

In an individualized decision making problem, the ability to carry out valid statistical inference is also crucial after selecting a model using a data driven approach, as it enables investigators to statistically evaluate whether an
optimized treatment strategy is significantly better than the other ones.  If the null hypothesis cannot be rejected for some treatments (e.g., treatment 1 and 2 are equally beneficial), then a caregiver may consider other factors such as treatment cost, side effects and patient preferences to choose among them. Recently, \cite{zhao2022selective} proposed a post selection inference method for effect modifier selection in a 
single-stage setting by generalizing the selective inference framework \citep{leeExactPostselectionInference2016}. One of the key  challenges in providing inference after selection in multi-stage decision making is to account for the effect of the  randomness of the selected models throughout the stages which also impacts the target parameters.

We propose a new post-selection inference framework in multi-stage individualized decision making problems that decouples the variable selection problem from the post-selection inference one. In practice, this allows the analyst to use a variety of formal and informal approaches to select the variables used in estimating adaptive treatment strategies. We then link the variable selection problem to the primary goal of estimating an optimal treatment strategy through the framework of
Robust Q-learning. This approach has several 
advantages: (1) existing regression-based selection methods are easily extended to this setting; (2) confounding control functions can be modeled nonparametrically using data-adaptive methods while preserving the possibility of
root-$n$ inference; and (3) in contrast to the method of \citet{wallaceDoublyrobustDynamicTreatment2015}, this framework does not require a correctly-specified blip function in order to target a meaningful projection of the blip function.

The main contributions of this article are summarized as follows. First, Universal Post-Selection Inference \citep[][hereafter referred to as ``UPoSI'']{kuchibhotlaValidPostselectionInference2020} is generalized to Robust Q-learning to handle the subtleties arising in multi-stage settings. Our procedure provides strong asymptotic coverage guarantees for the selected parameter.  Second, we propose a version of the perturbation bootstrap,
proposed earlier for single-stage lasso-type variable selection methods \citep{dasPerturbationBootstrapAdaptive2019,minnierPerturbationMethodInference2011}, for the UPoSI setting that handles
multiple stages of model selection.
Third, we establish the theoretical properties of the proposed methods under both  fixed- and random-design settings. 
Simulation studies are used
to examine the finite sample performance of our methods and demonstrate notable improvements over selective inference in settings where both approaches are applicable. Data from the "Extending Treatment Effectiveness of Naltrexone" multi-stage randomized trial is used to illustrate our proposed methods.


\section{Notation}
\label{sec:notation-prob-statement}

For simplicity, we consider a two-stage study where binary treatment decisions are made at each time point. 
Suppose $n$ i.i.d. trajectories of $\bO \defined (\bX_1, A_1, \bX_2, A_2, Y)$ from an unknown distribution $P_0$
are observed.
For each stage $\ell=1,2$, the candidate tailoring variables $\bX_{\ell} \in \Xcal_{\ell} \subseteq \R^{ q_{\ell} }$ are assumed to precede the Stage $\ell$ binary treatment, $A_{\ell} \in \Acal_{\ell} \defined \{0,1\}$, and $Y$ represents the continuous outcome observed after both stages. 
Let all of the history preceding $A_{\ell}$ 
be represented by $\history{\ell},$ so that $\history{2} \defined (\bX_1^{\top}, A_1, \bX_2^{\top})^{\top}$ and $\history{1} \defined \bX_1$. These take values in $\historycal{2} \defined \Xcal_1 \times \Acal_{1} \times \Xcal_2$ and $\historycal{1} \defined \Xcal_1$, respectively.
%
Let $Y^*(a_1, a_2)$ denote the potential outcome of $Y$ if the treatments are set to $\{A_1=a_1,~A_2=a_2\}$. 
Define $\mu_{\ell A0}(\obshistory{\ell}) \defined \Eop(A_{\ell} \vert \history{\ell} = \obshistory{\ell} ), \ell = 1,2$ as the treatment propensities in each stage.
We make the commonly used assumptions for studying causal effects in this setting: stable unit treatment value assumption, positivity, and sequential ignorability at each stage given 
$\history{\ell}$ \citep{murphyOptimalDynamicTreatment2003, robinsOptimalStructuralNested2004}.


For vectors or column matrices, we use $\| \cdot \|_q$ to represent the $\ell_q$ norm. For real square matrices, we use $\| \cdot \|_{\infty}$ to represent the maximal element of the matrix. We will also make use of the $L_2(P_0)$ norm for random functions: $\LtwoPzero[ ]{h} = \{\Eop_{P_0} h^2\}^{1/2}$. 
The minimal and maximal eigenvalues of a matrix $\bA$ 
will be denoted by $\lambda_{min}(\bA)$ and $\lambda_{max}(\bA)$, respectively.
Finally, we will also use the shorthand $a \vee b$ and $a \wedge b$ to represent the minimum and maximum, respectively, of the scalar variables $a$ and $b$.

\section{Submodel Selection in Robust Q-learning}
\label{sec:var-sel}

The observed data in $\history{\ell}$ defines a maximal amount of information which can be used for treatment decisions at Stage $\ell$. In this work, we will consider sparse linear models as working models for the contrast functions. We may consider the $p_{\ell}-$dimensional vector $\historybasis{\ell}$ to be a fixed-dimensional transformation of the vector $\history{\ell},$ such that $\historybasis{\ell}$ is fixed when conditioning upon $\history{\ell}.$ If an analyst decides \textit{a priori} on a dictionary of possible terms to include in a working model (e.g., all main effects and one-way interactions), we may view such a dictionary as a ``full model'' represented through the entire vector $\historybasis{\ell}$ in each stage. Submodels are created by subsetting this full vector. We may identify this subsetting operation with the indices of $\historybasis{\ell}$ used to create the sub-vector. That is, the ``full model'' at Stage $\ell$ may be identified with the object $\model_{\ell}^F := \{1,\ldots,p_{\ell}\}$ and submodels of $\model_{\ell}^F$ are those sets $\model_{\ell} \subseteq \model_{\ell}^F$. The maximal set of all possible submodels is given by the power set $\modelspace_{\ell} := \{ \model_{\ell}: \model_{\ell} \subseteq \model_{\ell}^F \}$. We will often restrict the set of submodels being studied in each stage to only those submodels with a certain level of sparsity; the set of all $C_{\ell}-$sparse models based on the data in $\historybasis{\ell}$
will be denoted $\modelspace_{\ell}(C_{\ell}) \defined \{ \model_{\ell} \in \modelspace_{\ell} : |\model_{\ell}| \leq C_{\ell} \},$
where $|\model_{\ell}|$
denotes set cardinality.

We will use the notation  $\model_{\ell} \in \modelspace_{\ell}$ to refer to an arbitrary (fixed) submodel. That is, $\model_{\ell}$ denotes a particular specification of the elements of $\historybasis{\ell}$ that will be used to assess effect modification and thereby tailor future treatment. A data-dependent model will be denoted as $\hat\model_{\ell}.$
Let $\bA(\model_{\ell})$ represent a sub-matrix or sub-vector of $\bA$ corresponding to the model indices $\model_{\ell}$. For example, when $\historybasis{1}$ is a 5-dimensional vector and $\model_{1}=\{1,2\}$, then $\historybasis{1}(\model_{1})$ represents a 2-dimensional sub-vector of $\historybasis{1}$ that contains its first 2 elements. Similarly, if $\bA$ is a $5 \times 5$ real matrix, $\bA(\model_{1})$ represents the $2 \times 2$ sub-matrix of $\bA$ with entries corresponding to first two rows and columns. 
Based on the description of $\modelspace_{\ell}$, we may think of any \textit{a priori}-specified model $\model_{\ell} \in \modelspace_{\ell}$ as representing the set 
of indices $j$ corresponding to the covariates $\historybasis{\ell}\subj$ to be included in said model.

\citet{ertefaieRobustQLearning2021} described the goal of \dtr estimation as identifying an optimal decision rule $d$ maximizing the value function $V(d)$ over the space of all rules $\Dcal \defined \Dcal_1 \times \Dcal_2$, where $\Dcal_j$ is the space of all decision rules mapping $\Xcal_j$ to a treatment decision in stage j. 
The expected value to patients may not be the only objective worth pursuing in estimating a rule, as alternative \dtr[ies] might achieve a similar value while requiring less invasive or less expensive data collection. 
In this situation, we might say that there exists $d'$ making use of only $\model_1$ and $\model_2$, with $V(d') \approx V(d^*).$ In this scenario, the expected regret of $d'$ might be considered small in relation to the more pragmatic benefits of sparsity.
Consequently, we might consider an optimization restricted to the space of rules which make use of only $\model_1$ and $\model_2,$ if it were known ahead of time which models were likely to yield small regrets.

In practice, an analyst may not be able to anticipate which submodels are inferior to the full models. As such, making use of the data to adapt to the underlying distribution is an attractive option. 
Random model selection is involves random variables $\hat{\model}_1,\hat{\model}_2$ that (i) depend at least in part on the observed data and (ii) take their values in the space $\modelspace_1\times \modelspace_2$.
In this interpretation, we may view each $\model_{\ell} \in \modelspace_{\ell}$ as a potential realization of the random variable $\hat{\model}_{\ell}$. Thus, model selection acts to select the ``relevant'' subspace $\Dcal_{\hat{\model}_1 \hat{\model}_2} \subset \Dcal$ and targeted parameters identifying the optimal strategy in that subspace. Here, we use $\Dcal_{\model_1 \model_2}$ for any  pair of models $(\model_1, \model_2) \in \modelspace_1 \times \modelspace_2$ to represent the set of parametric decision rules that use $\historybasis{1}(\model_1)$ in the first stage and $\historybasis{2}(\model_2)$ in the second.

As a departure from the fixed-model case, the data-dependent choice of subspace $\Dcal_{\hat{\model}_1 \hat{\model}_2}$ complicates subsequent inference \citep{leebSparseEstimatorsOracle2008}.
Our presentation of the submodel-restricted subspaces also illuminates a fundamental incompatibility with ``full model-based'' approaches to inference.
That is, estimation or inference that focus on subsets of a ``full parameter''---such as de-biased or de-sparsified approaches \citep{zhangConfidenceIntervalsLow2014,vandegeerAsymptoticallyOptimalConfidence2014}---are focused on different targets that are not optimal over the restricted space. This motivates our adoption of the ``post-selection'' viewpoint of parameter estimation and inference.

\section{Robust Q-learning with Fixed Submodels}
\label{sec:rob-trans-itr}

While the details of the  underlying models and derivation of Robust Q-learning are fully presented in \Cref{sec:rql-overview}, we briefly define the models here to orient the reader to the notation and main components used throughout the paper. The Robust Q-learning algorithm \citep{ertefaieRobustQLearning2021} applies the Robinson-Speckman transformation \citep{robinsonRootnconsistentSemiparametricRegression1988,speckmanKernelSmoothingPartial1988} to enable nonparametric modeling of the nuisance functions within the Q-learning framework. Specifically, for a two stage problem, we first consider the following second stage model 
\[Y - \mu_{2Y0}(\history{2}) = \left\{ A_2 - \mu_{2A0}(\history{2}) \right\} \historybasis{2}(\model_2)^\top \btheta_{20,\model_2} + \epsilon_{2,\model_2},\]
where  $\mu_{2Y0}(\obshistory{2}) \defined \Eop(Y ~|~ \history{2} = \obshistory{2})$. Then,
we define the first stage pseudo-outcome as the $\model_2-$dependent random variable $Y_{1 \model_2} \defined Y + \xi{\left\{ A_2, \historybasis{2}(\model_2) ; \btheta_{20,\model_2} \right\}}$ where $\xi{(a_2, \bx; \btheta)} \defined \bx^\top \btheta \left\{ \Ind{(\bx^\top \btheta > 0)} - a_2 \right\}.$ Next, we specify the first-stage model as 
\[Y_{1 \model_2} - \mu_{1Y \model_2 0}(\history{1}) = \left\{ A_1 - \mu_{1A0}(\history{1}) \right\} \historybasis{1}(\model_1)^\top \btheta_{10,\model_1 \model_2} + \epsilon_{1,\model_1 \model_2},\]
where $\mu_{1Y \model_2 0}(\obshistory{1}) = 
\Eop(Y_{1 \model_2} ~|~ \history{1} = \obshistory{1})$. The residual terms $\epsilon_{2,\model_2}$ and $\epsilon_{1,\model_1 \model_2}$ are defined in \Cref{sec:rql-overview}. Importantly, observe that the second-stage model $\model_2$ directly impacts the pseudo-outcome $Y_{1 \model_2}$ and its corresponding conditional expectation $\mu_{Y1 \model_2 0}(\history{1}),$  and subsequently influences our derivation of the first-stage quantities. This interdependence will
play a critical role when we consider the possibility
of using variable selection at each stage of Q-learning.

 In this section, we will focus on developing the practical aspects of using linear working models of the form $\historybasis{2}(\model_2)^\top \btheta_{2 \model_2}$ and $\historybasis{1}(\model_1)^\top \btheta_{1 \model_1}$ to capture the conditional average treatment effect functions in stages 2 and 1, respectively. As part of this process, the conditional expectations of treatment and outcome must be estimated at each stage. 
We adopt the use of $K-$fold cross-fitting for estimation by general statistical learners ~\citep{klaassen1987consistent, zheng2011cross,chernozhukovDoubleDebiasedMachine2018f}, as described below.

\subsection{Estimation via Cross-fitting}
\label{sec:cf-risk}

Let $K$ represent some fixed number of folds and $\Pcal_K$ represent a partition of $\{1,\ldots,n\}$ into $K$ indexing sets of roughly-equal size; i.e. $\Pcal_K = \{ \Ik : k=1,\ldots,K \}$ with $\cup_{k=1}^K \Ik = \{1,\ldots,n\}$ and $\Ik \cap \mathbf{I}_{k'} = \varnothing$ for $k\neq k'.$
Let $\bm{D}_{\mathbf{I}} \defined \{ \bO_i : i \in \mathbf{I} \}$ for any indices $\mathbf{I}.$ Then using $\mathbf{I}^c$ for the set complement, we write $\crossfitDataC$ to represent the observed data outside of $\Ik.$ 
We demonstrate cross-fitting by applying it to $\mu_{2Y0}(\cdot),$ the conditional expectation of $Y$ given $\history{2}$. To estimate the value of this function at $\history{2i},$ for $i \in \Ik,$ use $\crossfitDataC$ to train an estimator $\hat\mu_{2Y}(\cdot; \crossfitDataC)$ for the whole function and obtain a prediction at $\history{2i}$ for every $i\in\Ik.$ We change 
the held-out fold $\Ik$ until all of the necessary predictions have been made. Because $\Pcal_K$ is a partition, the sum over $i=1,\ldots,n$ may equivalently be written as a double-sum over $k=1,\ldots,K$ and $i \in \Ik.$

Employing this cross-fitting strategy, we obtain predictions for each $k=1,\ldots,K,~ i \in \Ik$ using the trained functions \( \hat\mu_{2 Y}(\history{2i}; \crossfitDataC) \) and \( \hat\mu_{2 A}(\history{2i}; \crossfitDataC),\) where this last function estimates the propensity of treatment $A_2$ given $\history{2}$ for subject $i$. This gives rise to $\hat{R}_{2 n, \model_2}(\btheta_{2, \model_2}),$ which is written as the least-squares objective function
\begin{multline}
  \hat{R}_{2 n, \model_2}(\btheta_{2, \model_2}) \defined \frac{1}{n} \sum_{k=1}^K \sum_{i \in \Ik} \big[ Y_i - \hat\mu_{2 Y}(\history{2i}; \crossfitDataC) 
  - \left\{ A_{2i} - \hat\mu_{2 A}(\history{2i}; \crossfitDataC) \right\} \historybasis{2i}(\model_2)^\top \btheta_{2, \model_2} \big]^2.
  \label{eq:cf-risk-stg-2}
\end{multline}
The minimizer of this function, $\hat\btheta_{2n, \model_2},$ satisfies the equations:
\begin{align}
  \label{eq:norm-eq-cf-2}
  \bzero &= \hat\bG_{2n}(\model_2) - \hat\bH_{2n}(\model_2) \hat\btheta_{2n, \model_2},
\end{align}
where the vector $\hat\bG_{2n}(\model_2) $ and matrix $\hat\bH_{2n}(\model_2) $ are defined as
  \begin{align}
    \hat\bG_{2n}(\model_2)  \defined \frac{1}{n} \sum_{k=1}^K \sum_{i \in \Ik} \historybasis{2i}(\model_2)  \left\{ A_{2i} - \hat\mu_{2A}(\history{2i}; \crossfitDataC) \right\}
    \times \left\{ Y_i - \hat\mu_{2Y}(\history{2i}; \crossfitDataC) \right\}
    \label{eq:cf-risk-quant-2}
    \\
    \hat\bH_{2n}(\model_2)  \defined \frac{1}{n} \sum_{k=1}^K \sum_{i \in \Ik} \left\{ A_{2 i} - \hat\mu_{2 A}(\history{2 i}; \crossfitDataC) \right\}^2 \left( \historybasis{2i}(\model_2)  \right)^{\otimes 2}.
    \nonumber
  \end{align}

With the second-stage estimator $\hat\btheta_{2n, \model_2}$ created, we move to the first stage. The ``blip function,''  $\xi{(a_2, \bx; \btheta)} \defined \bx^\top \btheta \left\{ \Ind{(\bx^\top \btheta > 0)} - a_2 \right\},$ can be used to create the first-stage outcome $\hat{Y}_{1 \model_2} \defined Y + \xi{\{ A_2, \historybasis{2}(\model_2); \hat\btheta_{2n,\model_2} \}},$ and repeat the outcome modeling process in the first stage. The estimate in the first stage solves the equations \( \bzero = \hat\bG_{1n,\model_2}(\model_1) - \hat\bH_{1n}(\model_1) \hat\btheta_{1n, \model_1 \model_2}, \) making use of the matrix
$\hat\bH_{1n}(\model_1)  \defined \frac{1}{n} \sum_{k=1}^K \sum_{i \in \Ik} \{ A_{1 i} - \hat\mu_{1 A}(\history{1 i}; \crossfitDataC) \}^2 ( \historybasis{1i}(\model_1))^{\otimes 2},$ and vector
$\hat\bG_{1n,\model_2}(\model_1)  \defined \frac{1}{n} \sum_{k=1}^K \sum_{i \in \Ik} \historybasis{1i}(\model_1)  \{ A_{1i} - \hat\mu_{1A}(\history{1i}; \crossfitDataC) \} \{ \hat{Y}_{1 \model_2 i} - \hat\mu_{1Y \model_2}(\history{1i}; \crossfitDataC) \}$.
These quantities used the cross-fitted predictions $\hat\mu_{1 Y \model_2}(\history{1i}; \crossfitDataC)$ and $\hat\mu_{1 A}(\history{1i}; \crossfitDataC),$ based on the same data-splitting used in the second stage. We note that the conditional expectation of the pseudo-outcome $\hat\mu_{1 Y \model_2}(\cdot)$ depends on $\model_2$ due to its inclusion in the blip function, although the propensity $\hat\mu_{1 A}$ does not.

\subsection{The Perturbation Bootstrap with Cross-fitting}
\label{sec:cf-perturb-risk}

The development of the cross-fitted empirical functions lends itself to a perturbation bootstrap approach. The perturbation bootstrap has previously been used as an inference method when tied to specific, nearly unbiased variable selection techniques like the adaptive lasso \citep{dasPerturbationBootstrapAdaptive2019,minnierPerturbationMethodInference2011}. Below, we will develop a more general method for post-selection inference that leverages the bootstrap to strongly control false coverage rates. 

To fix ideas, let $\bootWgt \sim P_{\bootWgt}$ be an analyst-specified random variable satisfying $\Eop(\bootWgt) = 1,~ \Eop(\bootWgt-1)^2=1$ with $\bootWgt_1,\ldots,\bootWgt_n$ i.i.d. from $P_{\bootWgt}.$ The perturbation bootstrap estimators $\hat\btheta^b_{2n,\model_2}$ and $\hat\btheta^b_{1n,\model_1 \model_2}$ solve equations similar to \eqref{eq:norm-eq-cf-2}. The quantities $\hat\bH_{1n}(\model_1),$ $\hat\bH_{2n}(\model_2)$, $\hat\bG_{2n}^b(\model_2)$, and $\hat\bG_{1n,\model_2}^b(\model_1)$ are defined similarly to their counterparts, except the $i^{th}$ term is multiplied by the random variable $\omega_i$. For $\hat\bG_{1n,\model_2}^b(\model_1),$ the pseudo-outcome $\hat{Y}_{1 \model_2}^b$ is also perturbed based on the boostrapped parameter $\hat\btheta^b_{2n,\model_2}$ from Stage 2. A key feature is that the same $\omega_i$ is used for observation $i$ between both stages to appropriately incorporate the correlation between stages.
More detail for these pseudo-outcomes is presented in \Cref{app:defining-pseudo-outcomes}.

\section{UPoSI for Population Parameters}
\label{sec:uposi}

\subsection{Adaptation of UPoSI to Robust Q-learning}
\label{sec:uposi-dtr}

The UPoSI procedure was presented in \citet{kuchibhotlaValidPostselectionInference2020} as an assumption-lean approach for performing inference on parameters after selection.
In this framework, the post-selection inference problem is formulated as providing coverage guarantees for confidence regions that are constructed for parameters, like those defined in \citet{ertefaieRobustQLearning2021}, after a random model selection event takes place. Unlike selective inference methods \citep{leeExactPostselectionInference2016}, the UPoSI framework is agnostic to the specific random model selection mechanism. In fact, the resulting inference is valid simultaneously over all plausible models. To retain focus, more detail on this perspective is given in \Cref{app:equiv-sel-sim}. 
We complete this section by generalizing the arguments of \citet{kuchibhotlaValidPostselectionInference2020} to Robust Q-learning.

Fix any pair of models $\model_1 \in \modelspace_1$ and $\model_2 \in \modelspace_2$. Then the following inequalities arise by simply adding and subtracting components of the empirical and population versions of the normal equations and using elementary inequalities:
\begin{align}
  \label{eq:uposi-inequality-2}
  \Vert \hat{\bH}_{2n}(\model_{2}) \{ \hat\btheta_{2n, \model_{2}} - \btheta_{20,\model_2} \} \Vert_\infty &\leq \hat{D}^{G}_{2n} + \hat{D}_{2n}^{H} \Vert\btheta_{20, \model_{2}}\Vert_1 \\
  \Vert \hat{\bH}_{1n}(\model_{1}) \{ \hat\btheta_{1n,\model_{1} \model_{2}} - \btheta_{10,\model_1 \model_2} \} \Vert_\infty &\leq \hat{D}^{G}_{1n,\model_2} 
   + \hat{D}_{1n}^{H}\Vert\btheta_{10,\model_{1} \model_{2}}\Vert_1.
  \label{eq:uposi-inequality-1}
\end{align}
The quantities on the RHS are related to components defined in \Cref{sec:cf-risk}:
\begin{equation}
  \label{eq:uposi-random-vars}
  \begin{aligned}
    \hat{D}_{2n}^G &\defined \| \hat\bG_{2n} - \bG_{20} \|_{\infty} & 
    \hat{D}_{2n}^H &\defined \| \hat\bH_{2n} - \bH_{20} \|_{\infty} \\ 
    \hat{D}_{1n,\model_2}^G &\defined \| \hat\bG_{1n,\model_2} - \bG_{10,\model_2} \|_{\infty} &
    \hat{D}_{1n}^H &\defined \| \hat\bH_{1n} - \bH_{10} \|_{\infty}
  \end{aligned}, 
\end{equation}
where $\hat\bG_{2n} = \hat\bG_{2n}(\model_2^F),$ 
$\hat\bH_{2n} = \hat\bH_{2n}(\model_2^F)$ and so on
are as defined earlier.

The random variables defined in \eqref{eq:uposi-random-vars} above are free of any selected models with the exception of $\hat{D}_{1n,\model_2}$, which depends on $\model_2$ due to the Stage 1 pseudo-outcome's dependence on the Stage 2 selected model. Consequently, \eqref{eq:uposi-inequality-2} holds simultaneously over all possible models $\model_2 \in \modelspace_2$ while \eqref{eq:uposi-inequality-1} holds (for each $\model_2$) simultaneously over all $\model_1 \in \modelspace_1$. 
If the joint distribution of the RHS quantities in each of \eqref{eq:uposi-inequality-1,eq:uposi-inequality-2} were known, we could plug-in the appropriate quantiles to turn the almost-sure inequalities into probabilistic ones that were valid, uniformly over $\modelspace_{\ell},$ at a specified confidence level.
In practice, these quantiles are unknown but may nonetheless be estimated by a perturbation bootstrap procedure. Using the bootstrap definitions of \Cref{sec:cf-perturb-risk}, the perturbation boostrap analogues of \eqref{eq:uposi-random-vars} are
\begin{equation}
  \label{eq:uposi-random-vars-boot}
  \begin{aligned}
    \hat{D}_{2n}^{Gb} &\defined \| \hat\bG_{2n}^b - \hat\bG_{2n} \|_{\infty} & 
    \hat{D}_{2n}^{Hb} &\defined \| \hat\bH_{2n}^b - \hat\bH_{2n} \|_{\infty} \\ 
    \hat{D}_{1n,\model_2}^{Gb} &\defined \| \hat\bG_{1n,\model_2}^b - \hat\bG_{1n,\model_2} \|_{\infty} &
    \hat{D}_{1n}^{Hb} &\defined \| \hat\bH_{1n}^b - \hat\bH_{1n} \|_{\infty}.
  \end{aligned}  
\end{equation}


The UPoSI regions may be constructed from inequalities \cref{eq:uposi-inequality-1,eq:uposi-inequality-2}: 
\begin{align}
  \hat{\Rcal}_{1 n, \model_1 \model_2} := \Big\{ \btheta \in \R^{|\model_1|} : \Vert \hat{\bH}_{1n}(\model_{1}) \{ \hat\btheta_{1n,\model_{1} \model_{2}} - \btheta \} \Vert_\infty \leq 
  \hat{C}_{1n,\model_2}^G(\alpha) + \hat{C}_{1n}^{H}(\alpha) \Vert\hat\btheta_{1n,\model_{1} \model_{2}}\Vert_1 \Big\} \label{eq:uposi-1-dagger-star} \\
  \hat{\Rcal}_{2 n, \model_2} := \Big\{ \btheta \in \R^{|\model_2|} : \Vert \hat{\bH}_{2n}(\model_{2}) \{ \hat\btheta_{2n, \model_{2}} - \btheta \} \Vert_\infty \leq
  \hat{C}^{G}_{2n}(\alpha) + \hat{C}_{2n}^{H}(\alpha) \Vert\hat\btheta_{2n, \model_{2}}\Vert_1 \Big\}. \label{eq:uposi-2-dagger-star}
\end{align}
The functions of $\alpha$ on the RHS are multivariate quantiles based on the bootstrap. Specifically, these are defined as any pair of numbers which satisfy \(
  P\big( \hat{D}_{1n,\model_2}^{Gb} \leq \hat{C}_{1n,\model_2}^{G}(\alpha),~ \hat{D}_{1n}^H \leq \hat{C}_{1n}^{H}(\alpha)\big) \geq 1-\alpha  
\) in the case of the first region, and \(
  P\big( \hat{D}_{2n}^{Gb} \leq \hat{C}_{2n}^{G}(\alpha),~ \hat{D}_{2n}^{Hb} \leq \hat{C}_{2n}^{H}(\alpha) \big) \geq 1-\alpha
\)
in the case of the second. We state the simultaneous coverage result below and delay discussion of theoretical assumptions to \Cref{sec:cond}.

\begin{theorem}[Validity of the confidence regions]
  \label{thm:uposi-coverage}
  Under \Cref{assump:boundedness,assump:bounded-model,assump:rate-assumps,assump:model-selection-consistency,assump:regularity} presented in \Cref{sec:uposi-theory}, the confidence regions \cref{eq:uposi-1-dagger-star,eq:uposi-2-dagger-star} satisfy:
  \begin{multline}
    \label{eq:uposi-coverage-statements}
    \liminf_{n\rightarrow\infty} \Prob\left( \btheta_{10,\hat\model_1 \hat\model_2} \in \hat{\Rcal}_{1 n, \hat\model_1 \hat\model_2} \right) \geq 1-\alpha
    \text{ and } 
    \liminf_{n\rightarrow\infty} \Prob\left( \btheta_{20,\hat\model_2} \in \hat{\Rcal}_{2 n, \hat\model_2} \right) \geq 1-\alpha.
  \end{multline}
\end{theorem}


\subsection{Discussion of Population-level Inference}
\label{sec:uposi-discuss}

Since the design is explicitly acknowledged to be random in the setting of \dtr[ies], the previous sections arguably present the most natural way to formulate the target parameters and associated methods for inference. However, this formulation does not guarantee desirable behavior of the UPoSI regions. For instance, the inequalities \eqref{eq:uposi-inequality-1,eq:uposi-inequality-2} that inform the size of the regions both depend on the $\ell_1$ norm of the unknown parameter. This dependence is problematic for two reasons:
(1) the size of the UPoSI regions is not equivariant under scaling transformations, and (2) the uncertainty reflected in the regions grows with size of the target parameter.
Transforming the variables to some unitless representation (e.g., scaling each column by the standard deviation) can ameliorate the first problem.
The second problem is less-easily handled. In practice, this behavior can produce regions that are much too conservative in settings where one might expect high power.

The opposite is also true: we can imagine a hypothesis-testing scenario where the region $\hat\Rcal_{1n,\model_1 \model_2}$ 
is used to test the hypothesis that our selected rule space is completely spurious. This can be formalized through a test of $H_0: \btheta_{10,\hat\model_1 \hat\model_2} = \bzero.$ In this setting, 
one replaces $\hat\btheta_{1n,\hat\model_1 \hat\model_2}$ in 
\eqref{eq:uposi-1-dagger-star} by its hypothesized value $\bzero,$ eliminating the second term on the RHS of the inequality. The size of the resulting region is then determined by $C_{1n,\hat\model_2}^G(\alpha),$ leading to a test that rejects if $\| \hat\bH_{2n}(\hat\model_1) \hat\btheta_{1n,\hat\model_1 \hat\model_2} \|_{\infty} > C_{1n,\hat\model_2}^G(\alpha).$ In our simulations, we have found that this test has relatively high power even when the regions yield very wide confidence intervals due to issue \#2 above. In the next section, we present an alternative formulation of the UPoSI regions which is more advantageous in terms of region size.

\section{Conditioning on the Design}
\label{sec:cond}

\subsection{Conditional vs. Population Inference}
\label{sec:cond-intro}

Our previous exposition explicitly 
accounts for the random nature of the histories $\history{1}, \history{2}$ in each stage. In the OLS problem that was handled therein, \citet{kuchibhotlaValidPostselectionInference2020} noticed similar phenomena to those discussed in \Cref{sec:uposi-discuss}, and argued along similar lines to propose fixed-design UPoSI regions that were smaller than the population versions, although an explicit derivation for the conditional perspective in a random setting was not provided. As we demonstrate later in \Cref{sec:uposi-conditional}, viewing the inference problem conditionally upon the design elements 
can also result in smaller confidence regions.
Alternative methods dealing with inference after selection are typically derived from with this inferential viewpoint, considering the design to be either deterministic or conditionally fixed \citep{berkValidPostselectionInference2013,leeExactPostselectionInference2016,tianAsymptoticsSelectiveInference2017}.
Such ``conditional'' approaches provide interesting comparisons in terms of power or confidence interval length, as well as false coverage rates. 


\subsection{Defining the Conditional Targets}
\label{sec:cond-targets}

In \Cref{sec:uposi-dtr}, the targets were defined implicitly by taking expectations of the random quantities involved in the first-order equations like \eqref{eq:norm-eq-cf-2}. The linear working model in Stage $\ell$ has a design which depends on both the history variable $\history{\ell}$ as well as the treatment $A_{\ell}$ for each subject. Fixing these design elements can be achieved by conditioning on all such variables. Let
\(
  \designElements{\ell} \defined \{ \history{\ell 1}, A_{\ell 1}, \ldots,\history{\ell n}, A_{\ell n} \}
\)
define the set of the design elements in Stage $\ell.$
In order to obtain the design-conditional target in Stage 2, first let $\bHOracle{2}$ represent the expectation $\Eop(\hat\bH_2 | \designElements{2})$ if the unknown functions $\hat\mu_{2Y}$ and $\hat\mu_{2A}$ were replaced by the true $\mu_{2Y0}$ and $\mu_{2A0}$, respectively. Use a similar method to define $\bG_{2n}^{cond}$ based on $\hat\bG_{2n}$.
Then the conditional target $\btheta_{2n,\model_2}^{cond}$ solves the equation
\begin{equation}
  \label{eq:norm-eq-true-cond-2}
  \bzero = \bG_{2n}^{cond}(\model_2) - \bHOracle{2}(\model_2) \btheta_{2n,\model_2}^{cond},
\end{equation}
ensuring it minimizes a squared error that only is random through $\designElements{2}.$

A similar argument may be used in the first stage as well, which would lead to a target $\btheta_{1n,\model_1 \model_2}^{cond},$ defined through the relation
\begin{equation}
  \label{eq:norm-eq-true-cond-1}
  \bzero = \bG_{1n, \model_2}^{cond}(\model_1) - \bHOracle{1}(\model_1) \btheta_{1n,\model_1 \model_2}^{cond}.
\end{equation}
\noindent
We subscript the targets $\btheta_{2n,\model_2}^{cond}$ and $\btheta_{1n,\model_1 \model_2}^{cond}$ by $n$ in order to remind the reader that these quantities depend on the underlying data through the design elements.

One important difference from the conditional targets generated by parametric Q-learning is that the role of confounding has been significantly diminished with the use of Robust Q-learning. As an illustration, if $\Delta_2$ is the conditional average treatment effect, then
\begin{equation*}
  \btheta_{2n,\model_2}^{cond} = \argmin_{\btheta_{2,\model_2}} \big[ \frac{1}{n} \sum_{i=1}^n \{ A_{2i} - \mu_{2 A0}(\history{2i}) \}^2
  \times \{ \Delta_2(\history{2i}) - \historybasis{2i}(\model_2)^\top \btheta_{2, \model_2} \}^2 \big].
\end{equation*}
As such, we might view these targets as noisy representations of the population-level targets. 

However, these conditional targets also differ from the population targets through a potential dependence on the conditioning set $\designElements{\ell}.$ As discussed in \citet{bujaModelsApproximationsConsequences2019},
the parameters may possibly vary for different realizations of $\designElements{1}$ and $\designElements{2}.$ This work further demonstrates that such issues are avoided when the models being considered contain the true model. Translating this to our setting, such a result might occur if, e.g. $\Delta_2(\history{2}) \equiv \historybasis{2}(\model_2)^{\top} \btheta_{20,\model_2}$ for some $\model_2$ and the randomly-selected model $\hat\model_2$ contains $\model_2.$ In this case, the population parameter $\btheta_{20,\hat\model_2}$ and the conditional target $\btheta_{2n,\hat\model_2}^{cond}$ are equivalent. Conversely, misspecifications could occur either due to 
a poor choice of design variables or an incorrect $\hat\model_2$.

\subsection{Confidence Regions for Conditional Targets}
\label{sec:uposi-conditional}

Regions for these conditional targets may be derived using similar arguments to those in \Cref{sec:uposi-dtr}. 
That is, \(
  \Vert \hat{\bH}_{2n}(\model_{2}) \{  \hat\btheta_{2n, \model_{2}} - \btheta_{2n,\model_2}^{cond} \} \Vert_\infty \leq D_{2n}^{cond}
\) and \(
  \Vert \hat{\bH}_{1n}(\model_{1}) \{ \hat\btheta_{1n,\model_{1} \model_{2}} - \btheta_{1n,\model_1 \model_2}^{cond} \} \Vert_\infty \leq D_{1n,\model_2}^{cond}
\) both hold, where \(
  D_{2n}^{cond} := \| \hat\bG_{2n} - \bG_{2n}^{cond} \|_{\infty}
\) and \(
  D_{1n,\model_2}^{cond} := \| \hat\bG_{1n,\model_2} - \bG_{1n, \model_2}^{cond} \|_{\infty}.
\)
We illustrate the derivation of the first inequality as an example. Taking the first-order equation for $\hat\btheta_{2n, \model_{2}}$ and subtracting \eqref{eq:norm-eq-true-cond-2}, we obtain
\(
  \hat\bH_{2n}(\model_2) \{ \hat\btheta_{2n, \model_{2}} - \btheta_{2n,\model_2}^{cond} \} + \{ \bHOracle{2}(\model_2) - \hat\bH_{2n}(\model_2) \} \btheta_{2n,\model_2}^{cond} = \{ \hat\bG_{2n} - \bG_{2n}^{cond} \}(\model_2).
\)
Suppose for a moment that the second term LHS is negligible (formalized by \Cref{lem:unif-gram-inv} in \Cref{app:lemmas-for-thm-1}). Apply the $\ell_{\infty}$ norm to both sides, drop the second term LHS, and apply the inequality $\| \bv(\model_2) \|_{\infty} \leq \| \bv \|_{\infty}$ for any $p_2-$dimensional vector $\bv$ and submodel $\model_2$ to arrive at the stated inequality. Let $\hat{C}_{2 n}^{cond}(\alpha)$ and $\hat{C}_{1 n,\model_2}^{cond}(\alpha)$ represent the upper $1-\alpha$ quantiles of $D_{2n}^{Gb}$ and $D_{1n,\model_2}^{Gb}$, respectively. The conditional regions are defined as
\begin{align}
  \hat{\Rcal}_{1 n, \model_1 \model_2}^{cond} := \Big\{ \btheta \in \R^{|\model_1|} : \Vert \hat{\bH}_{1n}(\model_{1}) \{ \hat\btheta_{1n,\model_{1} \model_{2}} - \btheta \} \Vert_\infty 
  \leq \hat{C}_{2 n}^{cond}(\alpha) \Big\} \label{eq:uposi-1-cond} 
  \\
  \hat{\Rcal}_{2 n, \model_2}^{cond} := \Big\{ \btheta \in \R^{|\model_2|} : \Vert \hat{\bH}_{2n}(\model_{2}) \{ \hat\btheta_{2n, \model_{2}} - \btheta \} \Vert_\infty 
  \leq \hat{C}_{1 n,\model_2}^{cond}(\alpha) \Big\}. \label{eq:uposi-2-cond}
\end{align}

\subsection{Theory for Conditional Targets}
\label{sec:uposi-cond-theory}

In the first-stage estimation problem, a general model selection technique may involve arbitrary second-stage models, and could possibly choose a least-favorable $\model_2$ in terms of Stage 1 coverage. Dealing with this possibility requires generalizing the UPoSI procedure to hold simultaneously over models from both stages. \Cref{assump:model-selection-consistency} instead requires the second-stage selected model $\hat\model_2$ to ``settle'' in some sense. This allows us to treat the variation added by $\hat\model_2$ to be of a lower order than the variation added by the selection of $\hat\model_1.$ Functionally, this assumption motivates the bootstrap description of $\hat{D}_{1n,\model_2}^{Gb}$ in \eqref{eq:uposi-random-vars-boot}, as the first-stage impact of variation from Stage 2 is assessed only within the context of a particular model---i.e., $\hat\model_2.$

\begin{assumption}
  \label{assump:model-selection-consistency}
  The model  $\hat\model_j$
  selected in Stage $j$ 
  takes values in $\modelspace_j(C_j),j=1,2.$ 
  Moreover, $\hat\model_j$ converges to some $\model_2^* \in \modelspace_2(C_2)$ in the sense \(
    P(\hat\model_2 = \model_2^*) \rightarrow 1.
  \) 
\end{assumption}
\noindent
The remaining regularity \cref{assump:boundedness,assump:bounded-model,assump:rate-assumps,assump:regularity} are available in \Cref{sec:uposi-theory}. The regions contain their respective conditional targets at the appropriate rates, simultaneously over all $\model_1 \in \modelspace_1,~ \model_2 \in \modelspace_2$:

\begin{theorem}[Validity of the conditional regions]
  \label{thm:uposi-cond-valid}
  Under the conditions of \Cref{thm:uposi-coverage},
  \begin{multline}
    \label{eq:uposi-cond-coverage-statements}
    \liminf_{n\rightarrow\infty} \Prob{\left( \btheta_{1n,\hat\model_1 \hat\model_2}^{cond} \in \hat{\Rcal}_{1 n, \hat\model_1 \hat\model_2}^{cond} ~\big|~ \designElements{1} \right)} \geq 1-\alpha
    \text{ and }
    \liminf_{n\rightarrow\infty} \Prob{\left( \btheta_{2n,\hat\model_2}^{cond} \in \hat{\Rcal}_{2 n, \hat\model_2}^{cond} ~\big|~ \designElements{2} \right)} \geq 1-\alpha.
  \end{multline}
\end{theorem}



\citet{kuchibhotlaValidPostselectionInference2020} created coordinate-wise confidence intervals by creating the smallest hyperrectangle that enclosed the UPoSI confidence regions. In general, the confidence interval lengths take a product form \(
  L = \nu \times C,
\)
where $\nu$ is a norm related to the design matrix and $C$ represents the critical value. Define $\be_j$ as a conformable vector of 0 with a 1 in the $j^{th}$ position. In this context, $\nu=\| \be_j^\top \{ \hat\bH_{2n}(\model_2) \}^{-1} \|_1$ and $C=C_{2n}^{cond}(\alpha)$ in Stage 2, and $\nu=\| \be_j^\top \{ \hat\bH_{1n}(\model_1) \}^{-1} \|_1,$ and $C=C_{1n,\model_2}^{cond}(\alpha)$ for the Stage 1. The corresponding half-lengths, $\hat{L}_{2j \model_2}^{cond}$ and $\hat{L}_{1j \model_1 \model_2}^{cond},$ are shown to lead to valid simultaneous coverage events $\hat{\mathcal{I}}_{2n} \defined \bigcap_{j=1,\ldots,|\hat\model_2|} \be_j^\top \left| \hat\btheta_{2n,\hat\model_2} - \btheta_{2n,\hat\model_2}^{cond} \right| \leq \hat{L}_{2j \hat\model_2}^{cond}$ and $\hat{\mathcal{I}}_{1n} \defined \bigcap_{j=1,\ldots,|\hat\model_1|} \be_j^\top \left| \hat\btheta_{1n,\hat\model_1 \hat\model_2} - \btheta_{1n,\hat\model_1 \hat\model_2}^{cond} \right| \leq \hat{L}_{1j \hat\model_1 \hat\model_2}^{cond}$, as stated in the following corollary. 

\begin{corollary}[Validity of the conditional intervals]
  \label{thm:upsoi-ci-valid-cond}
  Under the conditions of \Cref{thm:uposi-coverage}, $\liminf_{n\rightarrow\infty} \Prob(\hat{\mathcal{I} }_{jn} ~|~ \designElements{j}) \geq 1-\alpha$ for $j=1,2$.
\end{corollary}

\section{Simulation Study}
\label{sec:simulation-uposi}

We generated 1000 datasets simulated from various settings. The outcome was modeled as:
\(
  Y = \eta_1(\bX_1) + A_1 \delta_1(\bX_1) + \eta_2(\bX_2) + A_2 \delta_2(\bX_2) + \epsilon
\)
with the treatment for stages $k=1$ and $2$ generated as \(
  A_k \sim Bern[ \text{expit}\{ \psi_A(\bX_k) \} ],
\)
with different functional forms specified for $\eta_1, \eta_2, \delta_1, \delta_2,$ and $\psi_A$. The covariates $\bX_1 \in \R^{p_1}$ and $\bU \in \R^{p_1}$ were generated as $U(-1,1)$ random variables, with $\bX_2 = \bX_1 + \gamma A_1 + \bU$. The errors $\epsilon$ were specified as i.i.d. standard Normal variates. Due to this specification, we may make the identification $\Delta_2 \equiv \delta_2$. However, it is generally difficult to obtain an analytic form for the Stage 1 conditional average treatment effect, $\Delta_{1 \model_2}$. To simplify the problem, we focus on inference for the first stage parameters under settings in which we make the simplification $\delta_2(\bX_2) \equiv 1,$ which automatically satisfies \Cref{assump:regularity}. Further, the coefficient $\gamma$ is set to 0 in this scenario, so that $\bX_2 \independent A_1 ~|~ \bX_1.$ This implies $\Delta_{1 \model_2} \equiv \delta_1,$ since the third and fourth terms of the outcome model have the same expectations conditional on $\{ \history{1}, A_1=a \}$ when $a=0$ and $a=1.$ 

We report the results of six simulation scenarios at four different sample sizes. Scenarios A, B, and C make the previously-described modifications for the first-stage parameters, while scenarios D, E, and F do not. We summarize the configurations in \Cref{tab:sim-setup}. Here, we list different functional forms considered: linear function $f_l$, quadratic function $f_q$, a highly-nonlinear function with interactions $f_n$, and the constant functions 1 and 0.
Because the true $\Delta_{\ell}$ functions in each stage are linear, the conditional targets and unconditional targets are very similar, and in fact are equivalent as long as the selected model contains the true model. In our simulations, the two versions of these parameters had very small differences, but the size of the $\hat{\Rcal}_{\ell}$ intervals were much larger when compared to those of $\hat{\Rcal}_{\ell}^{cond}.$
Specific forms for $f_l,~f_q,$ and $f_n$ are given in \Cref{app:sim-function-forms}.

We constructed UPoSI confidence intervals 
(i.e., with half-length $\hat{L}_{2j \model_2}$ in Stage 2), labeled ``UPoSI'' in the simulation results. We compare these with the selective intervals (``SI'') constructed based on polyhedral inference, as well as naive confidence intervals which ignore the selection process by applying the perturbation bootstrap as if the model were not data-driven (``Naive''). 
To examine the conservativeness of the intervals, we use the median confidence interval length of each method at each sample size; to understand the coverage properties of the intervals, we use the FCR.

The nuisance parameters were estimated with super learning implemented in the \texttt{R} package \texttt{SuperLearner} \citep{polleySuperLearnerSuperLearner2019}. We used two different methods for model selection: the least angle regression (LAR) and forward selection (FS) algorithms with a fixed model size of five, both implemented in the \texttt{selectiveInference R} package. In our simulations, the LAR results were more favorable to SI than those under FS. As such, we focus the conversation by relegating the FS results to \Cref{fig:sim-results-fs} in \Cref{app:sim-results}.
We use the conditional-design framework of \Cref{sec:cond} throughout our simulations; to our knowledge, the corresponding polyhedral inference methods have not been developed for
the case of a random design. The SI methods are used as implemented in the aforementioned \texttt{R} package. 
It should be noted that the SI procedure is only known to control FCR for standard lasso-penalized linear models. To our knowledge, no results exist establishing the properties of SI in the current setting for either stage due to (i) the need to estimate and account for 
unknown nuisance functions and (ii) the highly likely presence of heteroskedasticity in the Stage 1 problem, an inferential complication not yet incorporated into current software.
In contrast, both variations of the UPoSI intervals strongly control FCR through \eqref{eq:uposi-cond-coverage-statements}.

\begin{table}
  \centering
  \caption[Simulation scenarios studied in \Cref{sec:simulation-uposi}.]{
    \label{tab:sim-setup}
    Simulation scenarios studied in \Cref{sec:simulation-uposi}. The Stage 1 scenarios have been modified for simple identification of $\Delta_{1\hat\model_2}.$}
  \begin{tabular}{cccccccc}
    \toprule
    Stage & Scenario & $\eta_1$ & $\delta_1$ & $\eta_2$ & $\delta_2$ & $\psi_A$ & $\gamma$ \\
    \midrule
    & A & $f_q$ & $f_l$ & 0 & 1 & $f_l$ & 0 \\
    & B & $f_q$ & $f_l$ & 0 & 1 & $f_n$ & 0 \\
    \multirow{-3}{*}{1}
    & C & 0 & $f_l$ & 0 & 1 & 0 & 0 \\
    \midrule
    & D & $f_l$ & $f_l$ & $f_q$ & $f_l$ & $f_l$ & 1 \\
    & E & $f_l$ & $f_l$ & $f_q$ & $f_l$ & $f_n$ & 1 \\
    \multirow{-3}{*}{2}
    & F & 0 & 0 & 0 & 0 & 1 & 1 \\
    \bottomrule
  \end{tabular}
\end{table}

\begin{figure}
  \centering
  \begin{subfigure}[b]{\textwidth}
  \includegraphics[width=0.95\textwidth,page=1]{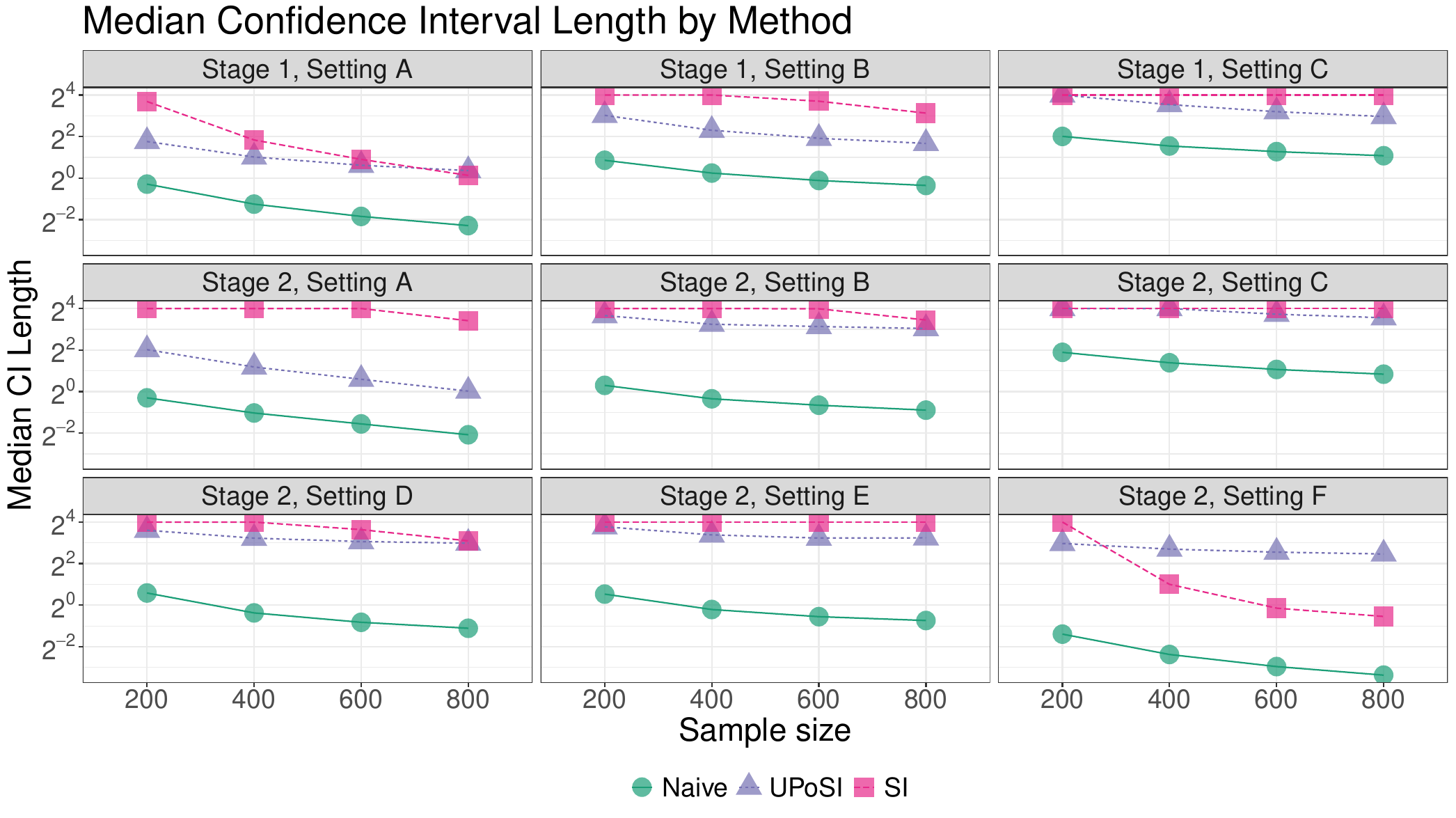}
  \end{subfigure}
  \begin{subfigure}[b]{\textwidth}
  \includegraphics[width=0.95\textwidth,page=2]{simulation-results-lar-conditional.pdf}
\end{subfigure}
\caption[Confidence interval performance for each method: LAR]{
  \label{fig:sim-results-lar}
  Confidence interval performance for each method, grouped by the stage of Robust Q-learning and sample size when using LAR.
  Top: Median confidence interval length; Bottom: False coverage rates.
}
\end{figure}

The results are presented in \Cref{fig:sim-results-lar}.
We can immediately see that the Naive intervals are quite small compared to the other methods. However, they do not control the FCR and are therefore invalid in the presence of selection. The SI intervals tend to be larger than the corresponding conditional UPoSI intervals, although there are a few situations in which the median length of the SI intervals is smaller. Importantly, the SI median interval length also may be infinite; in plotting results, we therefore cap the median lengths at 16 to make visual comparisons possible.
Finally, the UPoSI intervals strongly control FCR. 

\FloatBarrier
\section{Data Application}
\label{sec:extend-analysis}

We apply our method to the Extending Treatment Effectiveness of Naltrexone (ExTENd) trial \citep{murphy2007developing, mckay2010randomized, mckay2011extended, lei2012smart,qian2012dynamic}. The trial dataset is comprised of $n=250$ subjects enrolled in  a sequentially-randomized trial examining the effect of extending naltrexone with behavioral interventions. A more substantive overview can be found in \citet{lei2012smart}. Aligning with the previous analysis, the outcome $Y$ 
represents the proportion of abstinence days during the study, $A_1$ represents either the lenient $(A_1=1)$ or stringent $(A_1=0)$ definition of nonresponse, and $A_2$ 
takes on different meaning depending on the response status of the patient. If the patient is a responder, an alternative of telephone disease management is offered as an extension $(A_2=1)$ or the patient continues naltrexone alone $(A_2=0)$.
If the patient is a nonresponder, an alternative of combined behavioral intervention is offered, with the patient either switching to this alone $(A_2=0)$ or augmenting naltrexone $(A_2=1)$.
We estimated $\mu_{2Y0}$ and $\mu_{1Y \hat\model_2 0}$ using \texttt{xgboost} implemented in the \texttt{mlr3verse} R package with $K=10$ cross-fitting folds and tuned the models via the \texttt{mlr3hyperband} package, using 5 inner folds of cross-validation. 
The propensities were estimated with correctly-specified logistic regression models, stratified by second-stage responder status.

We consider two different analyses of the ExTENd trial: Analysis (I), which considers two-way interactions among variables in each stage; and, Analysis (II), which uses noise variables to augment the available data. For Analysis (I), we created the libraries $\historybasis{1}$ and $\historybasis{2}$ by centering the continuous variables available in each stage. In the second stage, we 
respectively considered two-way interactions of the available history with the response and non-response indicators, hence allowing for separate linear decision boundaries within each response group. Each response group was allowed a separate intercept term to capture effects of different treatments. In the first stage, we considered two-way interactions among all baseline variables. This resulted in a full model of size 20 in Stage 2 and 16 in Stage 1.
We used the minimax concave penalty to select variables due to its oracle properties, removing the overall treatment effect terms from the penalty. Cross-validation provided the preliminary tuning parameter selection, which removed all variables in both stages. The final model was selected by manually reducing the tuning parameter to include a larger model. Since our inferential procedure is simultaneously valid over a broad class of model selection mechanisms, we also compared this model to that provided by lasso. We found that both selection methods generated the same model in this case. Selective Inference does not accommodate the procedure that checks a selected model against an alternative mechanism, so we only compared our method (UPoSI) to the naive bootstrap (Naive). 
%
Analysis (II) was completed similarly to (I) but with certain key differences. To augment the data, we simulated 50 additional variables in $\history{1}$ as i.i.d. standard Normal variates. These have no relationship to the outcome or any additional variables, yet may be spuriously selected by the model selection mechanism, which specifies a model size \textit{a priori}. We use these results to judge the anticonservativeness of each of the previous two inference methods, as well as the selective inference (SI) methodology. 
Additional details regarding the setup are available in \Cref{app:extend-results}.

The Analysis (I) results are summarized in \Cref{tab:extend}.
The resulting Stage 2 tailoring model included the intercept terms for each response group as well as ocds$_0$ among non-responders. For Stage 1, only ocds$_0$ and the intercept were included. 
The Naive bootstrap inference found the first-stage ocds$_0$ term to be significant at the .05 level. After adjusting for selection, our technique finds none of these individual terms significant.
A selection-unadjusted F-test of overall significance gives p-values of .04 and .29 in Stages 1 and 2, respectively, whereas the UPoSI-based test described in \Cref{sec:uposi-discuss} yields p-values of .17 and .86, respectively.
Since this is an exploratory analysis of an existing data set rather than an analysis confirming a pre-existing hypothesis, a researcher might consider using a larger value of $\alpha$ for deciding if these effects deserve further study. For example, allowing $\alpha=.2$ for exploration of novel hypotheses, the researcher may decide to study the use of baseline ocds for determining whether a patient should stringently or leniently be switched from naltrexone to the behavioral intervention.
However, our procedure cautions against declaring these trends as statistically significant at the .05 level, since we determined an interesting hypothesis using the data.
In Analysis (II), the Naive inference method incorrectly identifies 8 out of 10 noise variables as significant. UPoSI and SI both correctly include zero in the interval, with the UPoSI intervals significantly smaller than the SI counterparts. Additional results are available in \Cref{app:extend-results}.

\begin{table}
  \centering
  \caption[Inference on the selected tailoring variables in the ExTENd study.]{
    Inference on the selected tailoring variables in the ExTENd study. Presence of $^*$ after the inference method indicates a 95\% interval that does not include zero.
  }
  \label{tab:extend}
  \begin{tabular}[t]{rlrlrrr}
    \toprule
    \multicolumn{4}{c}{ } & \multicolumn{3}{c}{Confidence Intervals} \\
    \cmidrule{5-7}
    Stage & Variable & Est. & Method & Lower Limit & Upper Limit & Length\\
    \midrule
    &  &  & Naive & -0.095 & 0.153 & 0.248\\
    
    & \multirow{-2}{*}{NR:(Intercept)} & \multirow{-2}{*}{0.030} & UPoSI & -0.170 & 0.229 & 0.400\\
    
    &  &  & Naive & -0.034 & 0.005 & 0.040\\
    
    & \multirow{-2}{*}{NR:ocds$_0$} & \multirow{-2}{*}{-0.015} & UPoSI & -0.074 & 0.045 & 0.119\\
    
    &  &  & Naive & -0.036 & 0.089 & 0.125\\
    
    \multirow{-6}{*}{2} & \multirow{-2}{*}{R:(Intercept)} & \multirow{-2}{*}{0.026} & UPoSI & -0.089 & 0.141 & 0.231\\
    \cmidrule{1-7}
    &  &  & Naive & -0.055 & 0.065 & 0.120\\
    
    & \multirow{-2}{*}{(Intercept)} & \multirow{-2}{*}{0.005} & UPoSI & -0.094 & 0.104 & 0.198\\
    
    &  &  & Naive* & 0.003 & 0.023 & 0.020\\
    
    \multirow{-4}{*}{1} & \multirow{-2}{*}{ocds$_0$} & \multirow{-2}{*}{0.013} & UPoSI & -0.003 & 0.028 & 0.031\\
    \bottomrule
  \end{tabular}
\end{table}

\section{Conclusion}

Accounting for selection in \dtr estimation is important, as demonstrated in our simulations. However, several challenges present themselves when one is allowed to choose the decision rule space. We presented a method based on the Universal Post-Selection Inference framework, which strongly controls the probability that the selective target does not belong to a confidence region. This method was shown to be valid even in the presence of nuisance parameters estimated data-adaptively under some conditions, and a bootstrap procedure was shown to complement this procedure and lead to asymptotically valid inference. We derived an improvement to the UPoSI framework
that results in less conservative confidence intervals for each parameter and demonstrated this improvement in our
simulation studies.

Several challenges still remain. These results were proved in a fixed-dimensional asymptotic regime. We conjecture that many theoretical properties would also hold if the dimensions $\log p_1,~\log p_2$ diverge slower than $n^{c}$ for some $c < 1/2.$ Determining the appropriate rates for cross-fitting may require careful analysis. Additionally, it is unclear if the SI intervals that were compared in this setting admit the cross-fitting estimation strategy. Indeed, it is unclear what alternative methods have provable coverage properties in this setting, either for the conditional or population targets.
Finally, the UPoSI-based intervals can be large compared to naive intervals, especially when the design is considered random rather than fixed. Further refinements focused on improving inferential power are desirable.

\thebib
\section*{Acknowledgements}

This work was dissertation research of the first author during his time at the University of Rochester Department of Biostatistics and Computational Biology.

\newpage






\appendix

\begin{center}
  \LARGE{
    \textbf{Supplement to ``\mytitle''}
  }
\end{center}

\setcounter{section}{0}
\renewcommand\thesection{S\arabic{section}}

\setcounter{equation}{0}
\renewcommand\theequation{S\arabic{equation}}

\setcounter{assumption}{0}

  



\section{Review of Robust Q-learning in this Scenario}

\subsection{The Centering Approach in Two Stages}
\label{sec:rql-overview}



The Robust Q-learning algorithm \citep{ertefaieRobustQLearning2021} applies the Robinson-Speckman transformation \citep{robinsonRootnconsistentSemiparametricRegression1988,speckmanKernelSmoothingPartial1988} to Q-learning.
In particular, these authors express a saturated nonparametric model in its centered form:
\begin{equation}
  \label{eq:rob-model-true}
  Y - \mu_{2Y0}(\history{2}) = \left\{ A_2 - \mu_{2A0}(\history{2}) \right\} \Delta_2(\history{2}) + \varepsilon_{2},
\end{equation}
where $\mu_{2Y0}(\obshistory{2}) \defined \Eop(Y ~|~ \history{2} = \obshistory{2})$ and $\mu_{2A0}(\history{2})$ is the propensity defined in \Cref{sec:notation-prob-statement}.
This model exhibits a number of features. The mean functions $\mu_{2A0},~\mu_{2Y0}$ are relatively easy to estimate and are disentangled from the contrast model. If $\mu_{2Y0} \text{ and } \mu_{2A0}$ were known exactly, an analyst could impose a blip model, say $\historybasis[\top]{2} \btheta_{2},$ in place of $\Delta_2(\history{2}) .$ If we additionally require that the model only makes use of the data in $\model_2 \in \modelspace_2(C_2),$ 
we can rewrite \eqref{eq:rob-model-true} as
\begin{align}
  \label{eq:rob-model-y2-model-2}
  Y - \mu_{2Y0}(\history{2}) =& \left\{ A_2 - \mu_{2A0}(\history{2}) \right\} \historybasis{2}(\model_2)^\top \btheta_{20,\model_2} + \epsilon_{2,\model_2},
  \\
  \label{eq:model2-eps}
  \epsilon_{2,\model_2} =& \varepsilon_{2} + \left\{ A_2 - \mu_{2A0}(\history{2}) \right\}
  \times \left\{ \Delta_2(\history{2}) - \historybasis{2}(\model_2)^\top \btheta_{20,\model_2} \right\},
\end{align}
where $\btheta_{20,\model_2}$ is a non-random population-level parameter which is defined in \eqref{eq:norm-eq-true-2} in \Cref{sec:rql-param-oracle}.
In this case, the imposition of a model for 
the second-stage blip function $\Delta_2(\history{2})$ as well as the selection of a particular $\model_2$ induces a change in the residuals. Nonetheless, these residuals are uncorrelated with any function of $\history{2}$ because
$\Eop(\epsilon_{2,\model_2} ~|~ \history{2})=0$ for any $\model_2 \in \modelspace_2(C_2).$

In general, estimation in Q-learning uses  
backwards induction for optimizing the outcome
at each stage.
The developments above show how the 
combination of centering (i.e., through $\mu_{2Y0}(\history{2})$), 
tailoring variable specification (i.e., through $\model_2$)
and an associated linear model $\historybasis{2}(\model_2)^\top \btheta_{20,\model_2}$ for 
$\Delta_2(\history{2})
= \Eop(Y ~|~ \history{2}, A_2=1) - \Eop(Y ~|~ \history{2}, A_2=0)$ modify \eqref{{eq:rob-model-true}} 
in pursuit of an optimal second-stage decision rule.
Similar developments are required to develop
the first-stage decision rule, and rely
on an analogous sequence of modifications
for an appropriately constructed pseudo-outcome.
In particular, we may define a pseudo-outcome
dependent on the second-stage model, or
\begin{align}
  \label{eq:y1-s2-def}
  Y_{1 \model_2} \defined Y + \xi{\left\{ A_2, \historybasis{2}(\model_2) ; \btheta_{20,\model_2} \right\}},
\end{align}
where $\xi{(a_2, \bx; \btheta)} \defined \bx^\top \btheta \left\{ \Ind{(\bx^\top \btheta > 0)} - a_2 \right\}.$
Next, we posit the saturated nonparametric model
$  Y_{1 \model_2} - 
  \mu_{1Y \model_2 0}(\history{1}) = \left\{ A_1 - \mu_{1A0}(\history{1}) \right\} \Delta_{1,\model_2}(\history{1}) + \varepsilon_{1, \model_2},$
where $\mu_{1Y \model_2 0}(\obshistory{1}) = 
\Eop(Y_{1 \model_2} ~|~ \history{1} = \obshistory{1}),$ the error $\varepsilon_{1, \model_2}$ obeys $\Eop(\varepsilon_{1, \model_2} ~|~ \history{1}, A_1)=0,$ and
 \(\Delta_{1,\model_2}(\history{1}) \defined \Eop(Y_{1 \model_2} ~|~ \history{1}, A_1=1) - \Eop(Y_{1 \model_2} ~|~ \history{1}, A_1=0) \) is
 the first-stage blip function.
Finally, arguing similarly to 
\eqref{eq:rob-model-y2-model-2},
we 
obtain the equivalent model
\begin{align}
  Y_{1 \model_2} - \mu_{1Y \model_2 0}(\history{1}) =& \left\{ A_1 - \mu_{1A0}(\history{1}) \right\} \historybasis{1}(\model_1)^\top \btheta_{10,\model_1 \model_2} + \epsilon_{1,\model_1 \model_2},
  \label{eq:rob-model-y1-model-1}
\end{align}
where $\epsilon_{1,\model_1 \model_2}$ has a representation similar to \eqref{eq:model2-eps}
and the model-based residuals satisfy $\Eop(\epsilon_{1,\model_1 \model_2} ~|~ \history{1})=0$ for all $\model_1 \in \modelspace_1(C_1),~ \model_2 \in \modelspace_2(C_2).$ 
Importantly, observe that the second-stage model $\model_2$ directly impacts the pseudo-outcome $Y_{1 \model_2}$ and its corresponding conditional expectation $\mu_{Y1 \model_2 0}(\history{1}),$  and subsequently influence our derivation of the first-stage quantities. This interdependence will
play a critical role when we consider the possibility
of using variable selection at each stage of Q-learning.

\subsection{The Submodel Parameters and their Oracle Estimators}
\label{sec:rql-param-oracle}

In this section, we first explore how estimation might 
proceed if the conditional expectation functions
$\mu_{2A0}(\history{2})$ and
$\mu_{2Y0}(\history{2})$ 
were known. Such estimation might start from the least-squares objective function, which for any $\model_2 \in \modelspace_2$ may be viewed as a random function of an $\R^{|\model_2|}-$valued argument:
\begin{align}
  R_{2 n, \model_2}(\btheta_{2, \model_2}) \defined & \frac{1}{n} \sum_{i=1}^n \big[ Y_i - \mu_{2 Y0}(\history{2i}) 
  - \left\{ A_2 - \mu_{2 A0}(\history{2i}) \right\} \historybasis{2i}(\model_2)^\top \btheta_{2, \model_2} \big]^2.
  \label{eq:oracle-risk-stg-2}
\end{align}
In this expression, $\btheta_{2, \model_2}$ is the
only free parameter 
and has dimension depending on the size of $\model_2.$
Subtracting $R_{2 n, \model_2}(\bzero)$ from both
sides of \eqref{eq:oracle-risk-stg-2} and simplifying, 
we obtain 
\begin{align}
  R_{2 n, \model_2}(\btheta_{2, \model_2}) - R_{2 n, \model_2}(\bzero) =& - 2 \tilde\bG_{2n}(\model_2)^\top \btheta_{2, \model_2}
  + \btheta_{2, \model_2}^\top \bHOracle{2}(\model_2) \btheta_{2, \model_2},
  \label{eq:oracle-risk-decomp-2}
\end{align}
which makes use of the following quantities related to the gradient and Hessian:
  \begin{align}
    \tilde\bG_{2n}(\model_2) &\defined \frac{1}{n} \sum_{i=1}^n \left\{ A_{2 i} - \mu_{2 A 0}(\history{2i})\right\} \left\{ Y_i - \mu_{2 Y 0}(\history{2i}) \right\} \historybasis{2i}(\model_2) 
    \nonumber
    \\
    \bHOracle{2}(\model_2) &\defined \frac{1}{n} \sum_{i=1}^n \left\{ A_{2 i} - \mu_{2 A 0}(\history{2i})\right\}^2 \left( \historybasis{2i}(\model_2) \right)^{\otimes 2}.
    \label{eq:oracle-risk-quant-2}
  \end{align}
The objective function \eqref{eq:oracle-risk-decomp-2}, equivalently \eqref{eq:oracle-risk-stg-2}, is easily
seen to be convex and thus has a minimizer depending on its quadratic behavior.  We 
also see 
that 
the choice of model $\model_2$ serves to subset the full vector $\tilde\bG_{2n} = \tilde\bG_{2n}(\model_2^F)$ and matrix $\bHOracle{2} = \bHOracle{2}(\model_2^F);$ 
hence, the model $\model_2$ only impacts 
\eqref{eq:oracle-risk-decomp-2}
through the indicated subsetting operations.
The representation \eqref{eq:oracle-risk-decomp-2} recovers the normal equations for least-squares estimators by equating its gradient with the $|\model_2|-$dimensional zero vector. 
In particular, the oracle estimator $\tilde\btheta_{2n, \model_2}$ satisfies
\begin{equation}
 \label{eq:norm-eq-oracle-2}
  \bzero = \tilde\bG_{2n}(\model_2) - \bHOracle{2}(\model_2) \tilde\btheta_{2n, \model_2}.
\end{equation}

Taking the expectation of $R_{2 n, \model_2}$ translates the empirical squared-error criterion into its population analogue. The best-fitting population parameter, $\btheta_{20,\model_2},$ minimizes this expected error. As such, we apply similar arguments: taking expectations on both sides of \eqref{eq:oracle-risk-decomp-2},
the parameter $\btheta_{20,\model_2}$ satisfies
\begin{align}
  \label{eq:norm-eq-true-2}
  \bzero = \bG_{20}(\model_2) - \bH_{20}(\model_2) \btheta_{20, \model_2},
\end{align}
where the components $\bG_{20}(\model_2)$ and $\bH_{20}(\model_2)$ of the quadratic function are defined as
\begin{equation}
  \label{eq:expected-risk-quant-2}
  \begin{aligned}
    \bG_{20}(\model_2) &\defined \Eop{ \left[\left\{ A_{2} - \mu_{2 A 0}(\history{2}) \right\}^2 \Delta_{2}(\history{2}) \historybasis{2}(\model_2) \right]}
    \\
    \bH_{20}(\model_2) &\defined \Eop{ \left[ \left\{ A_{2} - \mu_{2 A 0}(\history{2}) \right\}^2 \left( \historybasis{2}(\model_2) \right)^{\otimes 2} \right]}.
  \end{aligned}
\end{equation}
Similarly to before,  
let $\bG_{20} = \bG_{20}(\model_2^F)$
and $\bH_{20} = \bH_{20}(\model_2^F);$
then, analogously to 
\eqref{eq:norm-eq-oracle-2}, 
the model specification $\model_2$ 
subsets these full vectors in
arriving at \eqref{eq:norm-eq-true-2}.

There is an additional complication arising due to the use of a model-dependent pseudo-outcome in the first stage. Specifically, the pseudo-outcome in \eqref{eq:y1-s2-def} depends on $\model_2$ through both $\historybasis{2}(\model_2)$ and the unknown target parameter $\btheta_{20,\model_2}.$ Replacing
$\btheta_{20,\model_2}$
with the oracle estimate 
$\tilde\btheta_{2n, \model_2}$ as the parameter in the blip function used in \eqref{eq:y1-s2-def}, we obtain
an oracle-observed pseudo outcome $\tilde{Y}_{1 \model_2}.$ 
Although 
$\mu_{1Y \model_2 0}(\obshistory{1}) \neq 
\Eop(\tilde Y_{1 \model_2} ~|~ \history{1} = \obshistory{1})$ in general, we can nevertheless posit a useful objective function in Stage 1 for a pair of models $\model_1$ and $\model_2$ that is a function of an $\R^{|\model_1|}-$valued argument:
\begin{align}
  \nonumber
  R_{1 n, \model_1 \model_2}(\btheta_{1, \model_1}) \defined \frac{1}{n} \sum_{i=1}^n \big[ \tilde{Y}_{1 \model_2 i} - \mu_{1 Y \model_2 0}(\history{1i})  - \left\{ A_{1i} - \mu_{1 A0}(\history{1i}) \right\} \historybasis{1i}(\model_1)^\top \btheta_{1, \model_1} \big]^2.
\end{align}
Identical arguments to those used in the second-stage calculations show that
\begin{align}
  R_{1 n, \model_1 \model_2}(\btheta_{1, \model_1}) - R_{1 n, \model_1 \model_2}(\bzero) = - 2 \tilde\bG_{1n, \model_2}(\model_1)^\top \btheta_{1, \model_1} 
 + \btheta_{1, \model_1}^\top \bHOracle{1}(\model_1) \btheta_{1, \model_1},
  \label{eq:oracle-risk-decomp-1}
\end{align}
where
  \begin{multline}
    \tilde\bG_{1n, \model_2}(\model_1) \defined \frac{1}{n} \sum_{i=1}^n \left\{ A_{1i} - \mu_{1 A 0 i}(\history{1i}) \right\} 
    \times
    \left\{ \tilde{Y}_{1 \model_2 i} - \mu_{1 Y \model_2 0}(\history{1i}) \right\} \historybasis{1i}(\model_1)
    ,
    \nonumber
    \\
    \bHOracle{1}(\model_1) \defined \frac{1}{n} \sum_{i=1}^n \left\{ A_{1i} - \mu_{1 A 0 i}(\history{1i}) \right\}^2 \left( \historybasis{1i}(\model_1) \right)^{\otimes 2}.
  \end{multline}
Consequently, the oracle estimator $\tilde\btheta_{1n,\model_1 \model_2}$ 
satisfies 
\begin{align}
  \label{eq:norm-eq-oracle-1} 
  \bzero &= \tilde\bG_{1n,\model_2}(\model_1) - \bHOracle{1}(\model_1) \tilde\btheta_{1n, \model_1 \model_2}.
\end{align}
The dependence of the pseudo-outcomes on $\model_2$
means that the $p_{1}-$dimensional vectors $\tilde\bG_{1n, \model_2} = \tilde\bG_{1n, \model_2}(\model_1^F)$ and $\bG_{10, \model_2} = \bG_{10, \model_2}(\model_1^F)$ also depend on $\model_2$. This is a departure from the second-stage problem, in which the impact of the model $\model_2$ only serves to subset the 
$p_{2}-$dimensional vectors
$\tilde\bG_{2n}$ and $\bG_{20}.$ That is, the second-stage vector $\tilde\bG_{2n}$ is fixed for a given realization of the data and is only subsetted by $\model_2$ in deriving $\tilde\btheta_{2n,\model_2},$ whereas
the first-stage vector $\tilde\bG_{1n,\model_2}$ changes based on $\model_2$ irrespective of any $\model_1$. The latter creates additional complexities for post-selection inference.

The population estimator corresponding to
\eqref{eq:norm-eq-oracle-1} is
considerably harder to characterize due to the fact that
$\mu_{1Y \model_2 0}(\obshistory{1}) 
= \Eop(Y_{1 \model_2} ~|~ \history{1} = \obshistory{1})\neq 
\Eop(\tilde Y_{1 \model_2} ~|~ \history{1} = \obshistory{1}).$ However, the limiting
form of this population estimator has the
same appealing form as \eqref{eq:norm-eq-true-2}
under certain assumptions;
in particular, \Cref{lem:pop-risk-stg1} in \Cref{app:sec-stg1-risk-deriv} establishes that the target of interest 
for a given $\model_1$ and $\model_2$ satisfies
\begin{align}
  \label{eq:norm-eq-true-1}
  \bzero &= \bG_{10,\model_2}(\model_1) - \bH_{10}(\model_1) \btheta_{10, \model_1 \model_2},
\end{align}
where
\begin{equation}
  \label{eq:oracle-expected-risk-quant-1b}
  \begin{aligned}
     \bG_{10,\model_2}(\model_1) &\defined \Eop{ \left[\left\{ A_{1} - \mu_{1 A 0}(\history{1}) \right\}^2 \Delta_{1,\model_2}(\history{1}) \historybasis{1}(\model_1) \right]}
    \\
    \bH_{10}(\model_1) &\defined \Eop{ \left[ \left\{ A_{1} - \mu_{1 A 0}(\history{1}) \right\}^2 \left( \historybasis{1}(\model_1) \right)^{\otimes 2} \right]}.
  \end{aligned}
\end{equation}

\subsection{Theoretical Results}
\label{sec:uposi-theory}


We begin by stating our assumptions. The first assumption is restated from the main paper for convenience:
\begin{assumption}
  The model  $\hat\model_j$
  selected in Stage $j$ 
  takes values in $\modelspace_j(C_j),j=1,2.$ 
  Moreover, $\hat\model_j$ converges to some $\model_2^* \in \modelspace_2(C_2)$ in the sense \(
    P(\hat\model_2 = \model_2^*) \rightarrow 1.
  \) 
\end{assumption}

For the second assumption, the expression \(
  \Lambda_{\ell}(C_{\ell}) \defined \min_{\model_{\ell} \in \modelspace_{\ell}(C_{\ell})} \lambda_{min}(\bH_{\ell 0})
\)
for $\ell = 1,2$ represents the minimal eigenvalue of $\bH_{\ell 0}$ over all subsets of size bounded by $C_{\ell}.$ 
This assumption allows the full matrices $\bH_{\ell 0}$ to be collinear, but requires uniform invertibility over sparse subsets.
\begin{assumption}
  \label{assump:bounded-model}
  For $\ell=1,2$, there exists $0<C_{\ell} < p_{\ell}$ such that
  \(
    \Lambda_1(C_1) \wedge \Lambda_2(C_2) > c_0 > 0.
  \)
\end{assumption}
\noindent
Similarly, we assume that several quantities are uniformly bounded by some constant.
\begin{assumption}
  \label{assump:boundedness}
  The following quantities are uniformly bounded:
  \(
    \| \history{1} \|_{\infty} \vee \| \history{2} \|_{\infty} \vee |\Delta_2(\history{2})| \vee \max_{\model_2 \in \modelspace_2(C_2)}| \Delta_{1,\model_2}(\history{1}) | \leq C.
  \)
\end{assumption}

The next assumption imposes requirements on estimation rates
for nuisance parameters.

\begin{assumption} \label{assump:rate-assumps}
  The cross-fitting setup is used for some fixed $K>1$ and the cross-fitted learners satisfy: 
  (i) \(
    \LtwoPzero[ ]{\hat\mu_{2 Y} - \mu_{2Y0}} = \LtwoPzero[ ]{\hat\mu_{1 Y \hat\model_2} - \mu_{1 Y \hat\model_2 0}} = o_p(1),
  \) (ii) \(
    \LtwoPzero[ ]{\hat\mu_{1 A} - \mu_{1A0}} = \LtwoPzero[ ]{\hat\mu_{2 A} - \mu_{2A0}} = o_p(n^{-1/4}),
  \)
  and (iii) \(
    \LtwoPzero[ ]{\hat\mu_{1 A} - \mu_{1A0}} \LtwoPzero[ ]{\hat\mu_{1 Y \hat\model_2} - \mu_{1 Y \hat\model_2 0}} = o_p(n^{-1/2}).
  \)
  %
\end{assumption}
\noindent
We draw attention to the fact that in a randomized trial setting, the propensity model is known. Consequently, (ii) is automatically satisfied, as the propensities can either be estimated with certainty, or with the parametric rate $O_p(n^{-1/2}).$ In this case, (iii) follows by (i), showing that application to a randomized trial only requires consistency of the outcome models. This is a relatively mild condition satisfied by several types of nonparametric learners \citep{zheng2011cross,chernozhukovDoubleDebiasedMachine2018f}.

Finally, we require an assumption ensuring regular behavior. An alternative assumption on the unknown distribution \citep[][Corollary 1]{ertefaieRobustQLearning2021} or method modifications \citep[e.g.,][]{laberDynamicTreatmentRegimes2014} also address the nonregularity occurring in the first stage. 

\begin{assumption}
  \label{assump:regularity}
  The targeted rule in Stage 2 yields unique treatment decisions almost surely, in the sense that $\history{2}$ satisfies
  \(
    P\left( \historybasis{2}(\model_2)^\top \btheta_{20,\model_2^*} = 0 \right) = 0.
  \)
\end{assumption}
\noindent


Our proof includes the stronger statement that the second probability statement is valid simultaneously over all $\hat\model_2 \in \modelspace_2(C_2)$, and the first over all $\hat\model_1 \in \modelspace_1(C_1)$. Hence, these regions are asymptotically simultaneously valid over models in the sense explored by \citet{kuchibhotlaValidPostselectionInference2020}. Although the proof relies on \Cref{assump:model-selection-consistency}, 
our simulation results demonstrate that the effect of second-stage selection plays a minor role in the settings we study.

These probability statements involve covering the full vector of the parameter, essentially treating each element of the selected parameter as part of a family and controlling the Family-Wise Coverage Rate. More information on this coverage criterion is discussed in \Cref{app:equiv-sel-sim} in \eqref{eq:fvfcp-app}. One consequence is that the False Coverage Rate (FCR) is controlled by our proposed regions, which follows from standard arguments regarding Family-Wise Error Rate and False Discovery Rate.

Next, we show that the perturbation bootstrap approach to estimating the quantiles of the distribution is valid.
To do so, recall the definitions of the UPoSI random variables in \eqref{eq:uposi-random-vars} as well as their bootstrap analogues in \eqref{eq:uposi-random-vars-boot}. The proof of this theorem appears in \Cref{sec:pf-thm-uposi-bootstrap-valid}.

\begin{theorem}[Validity of the perturbation bootstrap]
  \label{thm:uposi-bootstrap-valid}
  Under the conditions of \Cref{thm:uposi-coverage}, the following distributional approximations hold:
  \begin{multline*}
    \sup_{a, b \geq 0} \bigg| \Prob \left( \sqrt{n} \hat{D}_{2n}^G \leq a, ~ \sqrt{n} \hat{D}_{2n}^H \leq b \right) 
    - \Prob \left( \sqrt{n} \hat{D}_{2n}^{Gb} \leq a, ~ \sqrt{n} \hat{D}_{2n}^{Hb} \leq b \right) \bigg| \rightarrow 0 \\
    \sup_{a, b \geq 0} \bigg| \Prob \left( \sqrt{n} \hat{D}_{1n,\hat\model_2}^G \leq a, ~ \sqrt{n} \hat{D}_{1n}^H \leq b \right) 
    - \Prob \left( \sqrt{n} \hat{D}_{1n,\hat\model_2}^{Gb} \leq a, ~ \sqrt{n} \hat{D}_{1n}^{Hb} \leq b \right) \bigg| \rightarrow 0.
  \end{multline*}
\end{theorem}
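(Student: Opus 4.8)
The plan is to reduce each of the two displays to a high-dimensional Gaussian approximation for a vector of sample means, matched by the corresponding multiplier-bootstrap (``perturbation'') approximation; I treat Stage~2 first and then indicate the changes for Stage~1. The first ingredient is an asymptotically linear representation. Using the Neyman-orthogonal form of the estimating functions in \eqref{eq:cf-risk-quant-2} together with the nuisance rates in \Cref{assump:rate-assumps} and the uniform-over-sparse-models bookkeeping of \Cref{lem:unif-consistency-both-stages}, the cross-fitting error is negligible in $\|\cdot\|_\infty$, so there are i.i.d., mean-zero, uniformly bounded (under \Cref{assump:boundedness}) summands $\mathbf{V}^{G}_{2i}\in\R^{p_2}$ and symmetric $\mathbf{V}^{H}_{2i}\in\R^{p_2\times p_2}$ with
\[
  \sqrt{n}\bigl(\hat\bG_{2n}-\bG_{20}\bigr)=\frac{1}{\sqrt{n}}\sum_{i=1}^{n}\mathbf{V}^{G}_{2i}+o_p(1),\qquad \sqrt{n}\bigl(\hat\bH_{2n}-\bH_{20}\bigr)=\frac{1}{\sqrt{n}}\sum_{i=1}^{n}\mathbf{V}^{H}_{2i}+o_p(1),
\]
the $o_p(1)$'s being in $\|\cdot\|_\infty$. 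The perturbed differences $\hat\bG^{b}_{2n}-\hat\bG_{2n}$ and $\hat\bH^{b}_{2n}-\hat\bH_{2n}$ admit the parallel expansions in which $\mathbf{V}^{G}_{2i},\mathbf{V}^{H}_{2i}$ are replaced by $(\bootWgt_i-1)$ times their cross-fitted, recentered plug-in analogues, using $\Eop(\bootWgt-1)=0$, $\Eop(\bootWgt-1)^2=1$ and the independence of the weights from the data.

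Next I stack: collect the $2p_2$ signed entries of $\mathbf{V}^{G}_{2i}$ and the $2p_2^2$ signed entries of $\mathbf{V}^{H}_{2i}$ into one bounded vector $\mathbf{Z}_i$, so that $\sqrt{n}\,\hat D^{G}_{2n}$ is, up to $o_p(1)$, the maximum of $\sqrt{n}\,\bar{\mathbf Z}_n$ over the first block of coordinates and $\sqrt{n}\,\hat D^{H}_{2n}$ the maximum over the second; the joint event $\{\sqrt{n}\hat D^{G}_{2n}\le a,\ \sqrt{n}\hat D^{H}_{2n}\le b\}$ then becomes, modulo the remainders, a hyperrectangle event on $\sqrt{n}\,\bar{\mathbf Z}_n$ with threshold $a$ on the first block and $b$ on the second. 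I would invoke a high-dimensional central limit theorem over hyperrectangles (of Chernozhukov--Chetverikov--Kato type; for bounded $p_\ell$ the ordinary multivariate CLT suffices), whose hypotheses follow from the boundedness in \Cref{assump:boundedness}, the variance lower bounds implied by \Cref{assump:bounded-model} and \Cref{assump:regularity}, and the standing control of $\log(p_1\vee p_2)$ relative to $n$, to obtain $\sup_t\bigl|\Prob(\sqrt{n}\bar{\mathbf Z}_n\le t)-\Prob(\mathbf{W}\le t)\bigr|\to0$ with $\mathbf{W}\sim N(0,\Sigma)$, $\Sigma=\mathrm{Cov}(\mathbf{Z}_1)$; a Gaussian anti-concentration bound absorbs the $o_p(1)$ remainders, so the same holds with $(\sqrt{n}\hat D^{G}_{2n},\sqrt{n}\hat D^{H}_{2n})$ on the left. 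For the bootstrap side, conditionally on the data the stacked perturbation average $\tfrac{1}{\sqrt{n}}\sum_i(\bootWgt_i-1)\widehat{\mathbf Z}_i$ (the stacked form of $(\sqrt{n}\hat D^{Gb}_{2n},\sqrt{n}\hat D^{Hb}_{2n})$ up to $o_p(1)$) is, by the multiplier-bootstrap version of the same theorem, close over hyperrectangles to $N(0,\widehat\Sigma_n)$, where $\widehat\Sigma_n$ is the empirical second-moment matrix of the $\widehat{\mathbf Z}_i$; since $\widehat\Sigma_n$ consistently estimates $\Sigma$ (consistency of the cross-fitted nuisances and of $\hat\btheta_{2n,\model_2}$, again via \Cref{assump:rate-assumps}), a Gaussian comparison inequality gives $\sup_t\bigl|\Prob(\mathbf{W}\le t)-\Prob(N(0,\widehat\Sigma_n)\le t\mid\text{data})\bigr|=o_p(1)$. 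Chaining the three approximations and undoing the stacking yields the first display.

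For the Stage-1 display I would run the same argument with two extra reductions. Since $\Prob(\hat\model_2=\model_2^*)\to1$ by \Cref{assump:model-selection-consistency}, on an event of probability tending to one $\hat D^{G}_{1n,\hat\model_2}$ and $\hat D^{Gb}_{1n,\hat\model_2}$ agree with their $\model_2^*$-versions, so it suffices to prove the statement with $\hat\model_2$ frozen at the fixed $\model_2^*$. The one genuinely new feature is that the Stage-1 pseudo-outcome $\hat Y_{1\model_2^*}$ is built from $\hat\btheta_{2n,\model_2^*}$ through $\btheta\mapsto\xi\{A_2,\historybasis{2}(\model_2^*);\btheta\}$ in \eqref{eq:blip-fun}, so the influence function of $\hat\bG_{1n,\model_2^*}$ acquires an extra term: the $L_2(P_0)$-derivative of $\xi$ in $\btheta$ at $\btheta_{20,\model_2^*}$ composed with the influence function of $\hat\btheta_{2n,\model_2^*}$ read off from \eqref{eq:norm-eq-cf-2}. \Cref{assump:regularity}, i.e.\ $\Prob(\historybasis[\top]{2}\btheta_{20,\model_2^*}=0)=0$, makes $\bx\mapsto\Ind{(\bx^\top\btheta>0)}$ continuous at $\btheta_{20,\model_2^*}$ for $P_0$-almost every $\historybasis{2}$, so this derivative exists and the kink in $\xi$ contributes nothing; the augmented summands are again i.i.d.\ and uniformly bounded under \Cref{assump:boundedness}, and the Stage-1 nuisance $\hat\mu_{1Y\model_2^*}$ is handled by its own Neyman orthogonality and the remaining rates in \Cref{assump:rate-assumps}, so the stacking-plus-Gaussian-approximation argument applies verbatim.

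I expect the main obstacle to be precisely this interface between non-regularity and high dimension in Stage~1: showing that the kink in $\xi$, together with the plug-in error in $\hat\btheta_{2n,\model_2^*}$ and the cross-fitting error, leaves the asymptotically linear representation of $\hat\bG_{1n,\model_2^*}$ intact \emph{uniformly over all $p_1$ coordinates at once}, so that the remainder is $o_p(1)$ in $\|\cdot\|_\infty$ and not merely coordinatewise. \Cref{assump:regularity} disposes of the pointwise non-regularity, but converting it into a uniform stochastic-equicontinuity bound for the Stage-1 empirical process indexed by a possibly growing dictionary is the delicate step, and this is where \Cref{lem:unif-consistency-both-stages} does the heavy lifting. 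A secondary technical point is verifying the moment and dimension conditions of the high-dimensional CLT and its multiplier-bootstrap counterpart, and confirming that $\widehat\Sigma_n$ reproduces the correct limiting covariance of the left-hand pair.
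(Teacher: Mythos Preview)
Your strategy matches the paper's: reduce each $\hat D$ to the $\ell_\infty$-norm of an i.i.d.\ mean-zero sum via cross-fitting negligibility, then apply a CLT and its multiplier-bootstrap counterpart. Three differences are worth flagging. First, since $p_1,p_2$ are fixed, the paper uses the ordinary multivariate CLT together with the continuous mapping theorem for $\|\cdot\|_\infty$, rather than the CCK hyperrectangle machinery; your signed-coordinate stacking and Gaussian anti-concentration step are correct but unneeded in this regime (as you yourself acknowledge). Second, for the bootstrap the paper does not keep plug-in summands $\widehat{\mathbf Z}_i$ and then argue $\widehat\Sigma_n\to\Sigma$; instead, parallel bootstrap versions of the cross-fitting lemmas (\Cref{lem:unif-gram-inv}, \Cref{lem:approximation-cross-fit}, \Cref{lem:g1-hat-rand-model}, \Cref{lem:inf-g-stg-1}) reduce $\sqrt{n}(\hat\bG^b_{\ell n}-\hat\bG_{\ell n})$ directly to $n^{-1/2}\sum_i(\bootWgt_i-1)\mathrm{Inf}_{\ell i}$ with the \emph{true} influence function, after which a conditional Lindeberg--Feller CLT with $n^{-1}\sum_i\mathrm{Inf}_{\ell i}^{\otimes 2}\to\Sigma$ a.s.\ by the LLN finishes the matching. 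Third, the Stage-1 kink is not handled by \Cref{lem:unif-consistency-both-stages}---that lemma controls the cross-fitting error $\hat\btheta-\tilde\btheta$, not the indicator discontinuity---but by a dedicated margin argument in \Cref{lem:blip-inf}: the indicator-difference contribution is bounded by a constant times $n^{-1}\sum_i|\bootWgt_i^v|\,|\historybasis{2i}(\model_2^*)^\top(\tilde\btheta_{2n,\model_2^*}-\btheta_{20,\model_2^*})|\,R_{ni}$ with $R_{ni}=\Ind\{|\historybasis{2i}(\model_2^*)^\top\btheta_{20,\model_2^*}|\le|\historybasis{2i}(\model_2^*)^\top(\tilde\btheta_{2n,\model_2^*}-\btheta_{20,\model_2^*})|\}$, and $\Eop R_{n1}\to0$ under \Cref{assump:regularity} together with $\sqrt{n}$-consistency of $\tilde\btheta_{2n,\model_2^*}$. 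Because $\|\historybasis{1}\|_\infty\le C$ and $p_1$ is fixed, the $\ell_\infty$-uniformity over Stage-1 coordinates that you flag as the main obstacle is in fact automatic here; no empirical-process stochastic-equicontinuity argument is required.
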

\noindent
This distributional approximation is useful for obtaining the appropriate quantiles for the UPoSI regions. For example, the RHS of the inequality in \eqref{eq:uposi-2-dagger-star} depends on
\(
  C^{G}_{2n}(\alpha) + C_{2n}^{H}(\alpha) \Vert\hat\btheta_{2n, \model_{2}}\Vert_1.
\)
Using \Cref{thm:uposi-bootstrap-valid}, fix some $c > 0$ and set $a=c, ~b=\Vert\hat\btheta_{2n, \hat\model_{2}}\Vert_1 c$ inside the sup norm. Then we may treat the quantity $\hat{D}_{2n} \defined \hat{D}_{2n}^{H} + \Vert\hat\btheta_{2n, \hat\model_{2}}\Vert_1 \hat{D}_{2n}^{G}$ as a univariate random variable and be assured that its quantiles are uniformly approximated by $\hat{D}_{2n}^b \defined \hat{D}_{2n}^{Hb} + \Vert\hat\btheta_{2n, \hat\model_{2}}\Vert_1 \hat{D}_{2n}^{Gb}$.

Finally, the preceding results imply that the UPoSI intervals are valid post-selection, in that they satisfy a similar simultaneous coverage condition to \eqref{eq:uposi-coverage-statements}. 

\begin{corollary}[Validity of the confidence intervals]
  \label{thm:upsoi-ci-valid}
  Under the conditions of \Cref{thm:uposi-coverage}, $\liminf_{n\rightarrow\infty} \Prob(\hat{\mathcal{I}}_{2n}) \geq 1-\alpha$ and $\liminf_{n\rightarrow\infty} \Prob(\hat{\mathcal{I}}_{1n}) \geq 1-\alpha$ where $\hat{\mathcal{I}}_{2n} \defined \bigcap_{j=1,\ldots,|\hat\model_2|} \left| \be_j^\top (\hat\btheta_{2n,\hat\model_2} - \btheta_{20,\hat\model_2}) \right| \leq \hat{L}_{2j \hat\model_2}$ and $\hat{\mathcal{I}}_{1n} \defined \bigcap_{j=1,\ldots,|\hat\model_1|} \left| \be_j^\top (\hat\btheta_{1n,\hat\model_1 \hat\model_2} - \btheta_{10,\hat\model_1 \hat\model_2}) \right| \leq \hat{L}_{1j \hat\model_1 \hat\model_2}$. 
\end{corollary}
\noindent An interesting feature of these confidence intervals is that
the family of constructed intervals are simultaneously correct at the coverage rate $1-\alpha.$ An analyst might view the method as an alternative to Scheffé, except that these apply to individual coefficients rather than contrasts, and naturally accommodate model selection. This corollary is stated without proof, as it generally follows the argumentation in \citet{kuchibhotlaValidPostselectionInference2020}.


A natural question is whether the bootstrap distributions of $\hat{D}_{2n}^{Gb}$ and $\hat{D}_{1n,\model_2}^{Gb}$ are valid approximations to those of $D_{2n}^{cond}$ and $D_{1n,\model_2}^{cond}$, respectively, on the relevant conditioning sets.
This is the case, as stated in the following theorem, proved in \Cref{sec:pf-thm-upsoi-cond}.
\begin{theorem}[Conditional perturbation bootstrap]
  \label{thm:uposi-cond}
  Under the conditions of \Cref{thm:uposi-coverage}, the following distributional results hold:
  \begin{align}
    \sup_{a \geq 0} \bigg| \Prob \left( \sqrt{n} D_{2n}^{cond} \leq a ~\big|~ \designElements{2} \right) 
    \label{eq:uposi-mult-boot-cond-2}
    - \Prob \left( \sqrt{n} \hat{D}_{2n}^{Gb} \leq a ~\big|~ \designElements{2} \right) \bigg| \rightarrow 0
    \\
    \sup_{a \geq 0} \bigg| \Prob \left( \sqrt{n} D_{1n,\hat\model_2}^{cond} \leq a ~\big|~ \designElements{1} \right) 
    \label{eq:uposi-mult-boot-cond-1}
    - \Prob \left( \sqrt{n} \hat{D}_{1n,\hat\model_2}^{Gb} \leq a ~\big|~ \designElements{1} \right) \bigg| \rightarrow 0.
  \end{align}
\end{theorem}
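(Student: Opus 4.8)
My plan is to establish each of \eqref{eq:uposi-mult-boot-cond-2} and \eqref{eq:uposi-mult-boot-cond-1} by replaying the argument behind the unconditional statement in \Cref{thm:uposi-bootstrap-valid} --- asymptotic linearization, a central limit theorem in the $\ell_\infty$ metric, and multiplier--bootstrap consistency --- but carrying out every approximation \emph{conditionally} on the design set $\designElements{\ell}$. I will describe the Stage~2 display in detail; the Stage~1 display needs the same steps plus the extra bookkeeping noted below. First, using the cross-fitting construction together with the rate conditions of \Cref{assump:rate-assumps}, the uniform bounds of \Cref{assump:boundedness}, and the uniform cross-fitting control of \Cref{lem:unif-consistency-both-stages} --- essentially the ingredients already assembled for \Cref{thm:uposi-coverage} --- I would prove the conditional linearization
\[
  \sqrt{n}\,\{\hat\bG_{2n}-\bG_{2n}^{cond}\} = \frac{1}{\sqrt{n}}\sum_{i=1}^{n}\historybasis{2i}\{A_{2i}-\mu_{2A0}(\history{2i})\}\,\varepsilon_{2i} + r_n,
\]
with $\|r_n\|_\infty = o_p(1)$. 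The remainder $r_n$ collects a product of nuisance errors that is $o_p(n^{-1/2})$ by the product-rate hypothesis, together with two terms linear in a single nuisance error, each of which has conditional mean zero given the relevant out-of-fold training data --- because $\Eop\{A_2-\mu_{2A0}(\history{2})\mid\history{2}\}=0$ and $\Eop\{Y-\mu_{2Y0}(\history{2})\mid\history{2}\}=0$, the Neyman orthogonality built into the Robinson--Speckman centering --- and is therefore $o_p(n^{-1/2})$ under the corresponding single-nuisance rate. The key point is that, given $\designElements{2}$, the leading term on the right is a normalized sum of independent mean-zero variables, since the only residual randomness in the $i$-th summand is $\varepsilon_{2i}$. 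The same reduction applies to the bootstrap quantity: $\sqrt{n}\,\{\hat\bG_{2n}^{b}-\hat\bG_{2n}\}$ is the $\bootWgt$-weighted version of the cross-fitted increments centered at $\hat\bG_{2n}$, and this equals $n^{-1/2}\sum_i(\bootWgt_i-1)\,\historybasis{2i}\{A_{2i}-\mu_{2A0}(\history{2i})\}\{Y_i-\mu_{2Y0}(\history{2i})\}$ up to an $o_p(1)$ term, with $\Eop(\bootWgt-1)=0$ and $\Eop(\bootWgt-1)^2=1$ doing the work of the orthogonality.

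Next, with both linearizations in hand, I would invoke a conditional multiplier--bootstrap comparison of Chernozhukov--Chetverikov--Kato type. Conditionally on $\designElements{2}$, the $\ell_\infty$-norm of the leading target term is close, in Kolmogorov distance, to the $\ell_\infty$-norm of a centered Gaussian vector whose covariance is the design-conditional second-moment matrix of the increments $\historybasis{2i}\{A_{2i}-\mu_{2A0}(\history{2i})\}\varepsilon_{2i}$; \Cref{assump:boundedness} supplies the uniform envelope for the Lyapunov/Lindeberg inputs and \Cref{assump:bounded-model} the non-degeneracy needed for anti-concentration. The substantive step is then to reconcile the empirical covariance reproduced by the perturbation weights with this design-conditional covariance: one replaces the cross-fitted weights $A_{2i}-\hat\mu_{2A}(\history{2i};\crossfitDataC)$ and residuals $Y_i-\hat\mu_{2Y}(\history{2i};\crossfitDataC)$ by their population limits using \Cref{assump:rate-assumps}, and then uses the conditional-mean-zero structure of the Robinson--Speckman residual to identify the common limiting matrix along the design sequence. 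An anti-concentration bound then upgrades covariance closeness to closeness of the distribution functions uniformly in $a\ge 0$, which is \eqref{eq:uposi-mult-boot-cond-2}.

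For the first-stage display \eqref{eq:uposi-mult-boot-cond-1} I would additionally restrict attention to the event $\{\hat\model_2=\model_2^{*}\}$, which has probability tending to one by \Cref{assump:model-selection-consistency}; on that event the pseudo-outcome $\hat Y_{1\model_2}$ and its conditional mean are attached to a fixed second-stage model, so the Stage~1 quantities inherit the Stage~2 structure and the entire argument can be conditioned on $\designElements{1}$. The one genuinely new term is the propagation of the Stage~2 estimation error into $\hat Y_{1\model_2}$ through the second-stage parameter entering $\xi$ in \eqref{eq:blip-fun}; since $\xi$ is Lipschitz in its parameter argument off the $P_0$-null set excluded by \Cref{assump:regularity}, this contributes a bounded multiple of the Stage~2 cross-fitting error in that parameter, which is $o_p(n^{-1/2})$ under \Cref{assump:rate-assumps} and hence absorbed into the remainder. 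With that in place the Stage~2 argument applies essentially verbatim to $n^{-1/2}\sum_i\historybasis{1i}\{A_{1i}-\mu_{1A0}(\history{1i})\}\varepsilon_{1,\model_2^{*},i}$ conditionally on $\designElements{1}$.

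The hard part will be the interplay between conditioning on $\designElements{\ell}$ and cross-fitting: the cross-fitted nuisances are not $\designElements{\ell}$-measurable (they are built from out-of-fold outcomes), so they cannot simply be frozen after conditioning. Showing that the cross-fitting remainder is negligible conditionally --- and, above all, that the covariance reproduced by the perturbation bootstrap coincides in the limit with the \emph{design-conditional} covariance of the target statistic rather than with the larger total covariance that governs the population-level problem --- is where the orthogonality structure of the Robinson--Speckman transformation and the precise form of the bootstrap increments must be used with care; I expect most of the work to concentrate there.
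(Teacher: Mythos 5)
Your skeleton is the one the paper uses: linearize $\hat\bG_{\ell n}-\bG_{\ell n}^{cond}$ and $\hat\bG_{\ell n}^{b}-\hat\bG_{\ell n}$ conditionally on $\designElements{\ell}$, reduce the first stage to the fixed model $\model_2^*$ via \Cref{assump:model-selection-consistency} and \Cref{lem:stochastic-process-random-index}, and compare the two sums through a conditional CLT for the multipliers. Two differences from the paper are worth flagging. First, the paper does not redo the cross-fitting analysis conditionally: it recycles the unconditional linearizations already established for \Cref{thm:uposi-bootstrap-valid} (via \Cref{lem:g1-hat-rand-model}, \Cref{lem:approximation-cross-fit}, and the influence-function lemmas), only swapping the centering from $\bG_{10,\model_2^*}$ to $\bG_{1n,\model_2^*}$; an unconditional $o_p(n^{-1/2})$ remainder is enough here, so the difficulty you anticipate about freezing non-design-measurable nuisances after conditioning largely dissolves. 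Second, because $p_1,p_2$ are fixed, the paper verifies the Lindeberg condition and applies the Lindeberg--Feller CLT coordinatewise; the Chernozhukov--Chetverikov--Kato Gaussian comparison and anti-concentration machinery you invoke is not needed at this level of generality (it would be if the dimensions grew with $n$).

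The genuine gap is exactly where you predict the work will concentrate, and ``orthogonality'' alone will not close it. Conditionally on $\designElements{2}$ your target statistic is $n^{-1/2}\sum_i \historybasis{2i}\{A_{2i}-\mu_{2A0}(\history{2i})\}\varepsilon_{2i}$, with conditional covariance driven by $\Eop(\varepsilon_{2i}^2\mid \history{2i},A_{2i})$, whereas your bootstrap statistic is $n^{-1/2}\sum_i(\bootWgt_i-1)\historybasis{2i}\{A_{2i}-\mu_{2A0}(\history{2i})\}\{Y_i-\mu_{2Y0}(\history{2i})\}$, whose conditional covariance is governed by $(Y_i-\mu_{2Y0i})^2$. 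Since $Y_i-\mu_{2Y0i}=\varepsilon_{2i}+\{A_{2i}-\mu_{2A0i}\}\Delta_2(\history{2i})$, the two conditional covariance matrices differ by a nonnegative-definite, design-measurable term involving the blip, so the ``common limiting matrix'' you hope to identify does not follow from the conditional-mean-zero structure of the Robinson--Speckman residual. The paper's device for handling this is to re-centre both linearizations at the same conditional influence function, built from the residual $\epsilon_{2}^{cond}$ of the fitted linear blip model (with the conditional target $\btheta_{2n,\model_2^F}^{cond}$ in place of $\btheta_{20,\model_2}$), so that literally identical summands enter the target and bootstrap CLTs and only their empirical second moment appears on both sides. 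If you retain the saturated-model error $\varepsilon_2$ as your leading term, you must either perform an analogous re-centering or give an explicit argument that the extra component $\historybasis{2i}\{A_{2i}-\mu_{2A0i}\}^2\Delta_2(\history{2i})$ contributes negligibly (or identically) to both conditional distributions; as written, your plan asserts the reconciliation rather than supplying it, and this is the step on which the theorem actually turns.
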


\section{Simulation Addedum}
\label{app:sim-add}

\subsection{Simulation Description}
\label{app:sim-function-forms}

Each of these functions only depend on the first five elements (at most) of the argument. Let $\bX\subj[1]$ represent the first element, $\bX\subj[2]$ the second, and so on. Without loss of generality, let these functions be defined as functions of a vector $\bX \in \R^5,$ understanding that the higher-dimensional functions will be mapped to these functions through the first five coordinates. Then letting $\bbeta = (2,2,1,.1,.1)$, we define
\begin{align*}
  f_l(\bX) &= \bX^\top \bbeta \\
  f_q(\bX) &= 0.5\left\{ \bX^\top \diag(\bbeta) \bX + \bX^\top \bbeta - 2 \right\} \\
  f_n(\bX) &= 0.5 \sin\Big[ \pi \bX\subj[1] \bX\subj[2] \Big] + 2 \Big[\bX\subj[3] - .5\Big]^2 - 1.
\end{align*}
The ``constant functions'' are represented by 0 and 1, taking the stated value over their ranges.

\subsection{Stage 1 Blip Function}

Under the simulation model, there is some subtletly in calculating $\Delta_{1 \model_2}(\history{1})$ for each $\model_2.$ We restate the model below:
\begin{align*}
  Y = \eta_1(\bX_1) + A_1 \delta_1(\bX_1) + \eta_2(\bX_2) + A_2 \delta_2(\bX_2) + \epsilon \\
  A_k \sim Bern[ \text{expit}\{ \psi_A(\bX_k) \} ] ~ \text{ for } k=1,2,
\end{align*}
The $\Delta_{1 \model_2}(\history{1})$ function is identified as the contrast in $Y_{1 \model_2}$, given $\history{1}$ and two different values of $A_1$:
\begin{equation*}
  \Delta_{1 \model_2}(\history{1}) = \Eop(Y_{1 \model_2} ~|~ \history{1}, A_1 = 1) - \Eop(Y_{1 \model_2} ~|~ \history{1}, A_1 = 0).
\end{equation*}
Using the simulation model for $Y$ and the definition of $Y_{1 \model_2},$ we can write
\begin{align}
  \label{eq:app-sim-1}
  Y_{1 \model_2} &= \eta_1(\bX_1) + A_1 \delta_1(\bX_1) + \eta_2(\bX_2) + \Ind{\left\{ \historybasis{2}(\model_2)^\top \btheta_{20,\model_2} > 0 \right\}} \historybasis{2}(\model_2)^\top \btheta_{20,\model_2} + \epsilon \\
  \label{eq:app-sim-2}
  &+ A_2 \left\{ \delta_2(\bX_2) - \historybasis{2}(\model_2)^\top \btheta_{20,\model_2} \right\}.
\end{align}
If the conditional independence $\bX_2 \independent A_1 | \bX_1$ holds ($\gamma=0$), then 
\begin{equation*}
  \Eop{\left\{ \text{\eqref{eq:app-sim-1}} ~|~ \history{1}, A_1=1 \right\}} - \Eop{\left\{ \text{\eqref{eq:app-sim-1}} ~|~ \history{1}, A_1=0 \right\}} = \delta_1(\bX_1).
\end{equation*}
We wish to derive conditions to make a similar contrast that will make
\eqref{eq:app-sim-2} equal to zero; this would imply that $\Delta_{1, \model_2}(\history{1}) \equiv \delta_1(\bX_1).$
In particular, note that
\begin{multline*}
  \Eop{\left\{ \text{\eqref{eq:app-sim-2}} ~|~ \history{1}, A_1=a \right\}} = \Eop{\left[ \Eop\left\{ \text{\eqref{eq:app-sim-2}} ~|~ \history{1}, A_1=a, \bX_2 \right\} ~|~ \history{1}, A_1=a \right]} \\
  = \Eop{\left[ \left\{ \delta_2(\bX_2) - \historybasis{2}(\model_2)^\top \btheta_{20,\model_2} \right\} \Eop\left( A_2 ~|~ \history{1}, A_1=a, \bX_2 \right) ~|~ \history{1}, A_1=a \right]}.
\end{multline*}
Under the simulation setup, $\Eop\left( A_2 ~|~ \history{1}, A_1=a, \bX_2 \right) = \Eop\left( A_2 ~|~ \bX_2 \right).$ Consequently, if the model $\model_2$ is correctly-specified, then $\delta_2(\bX_2) - \historybasis{2}(\model_2)^\top \btheta_{20,\model_2}=0.$ When $\delta_2(\bX_2) \equiv 1,$ then any model containing the intercept is correctly-specified.

An alternative is if $\historybasis{2}(\model_2)$ only depends upon $\bX_2$ under the previous conditional independence. In that case, all the random variation inside the conditional expectation depends only on $\bX_2,$ so that using the independence assumption results in \[
  \Eop{\left\{ \text{\eqref{eq:app-sim-2}} ~|~ \history{1}, A_1=1 \right\}} - \Eop{\left\{ \text{\eqref{eq:app-sim-2}} ~|~ \history{1}, A_1=0 \right\}} = 0.
\]
A way to enforce this is through using a Markov modeling assumption, so that the transformation $\historybasis{2}$ is a function only of $\bX_2.$ This does not restrict $\delta_2.$

\FloatBarrier

\newpage
\subsection{Additional Simulation Results}
\label{app:sim-results}

\begin{figure}[!ht]
  \centering
  \resizebox*{!}{0.4\textheight}{
  \begin{subfigure}[b]{\textwidth}
  \includegraphics[width=0.95\textwidth,page=1]{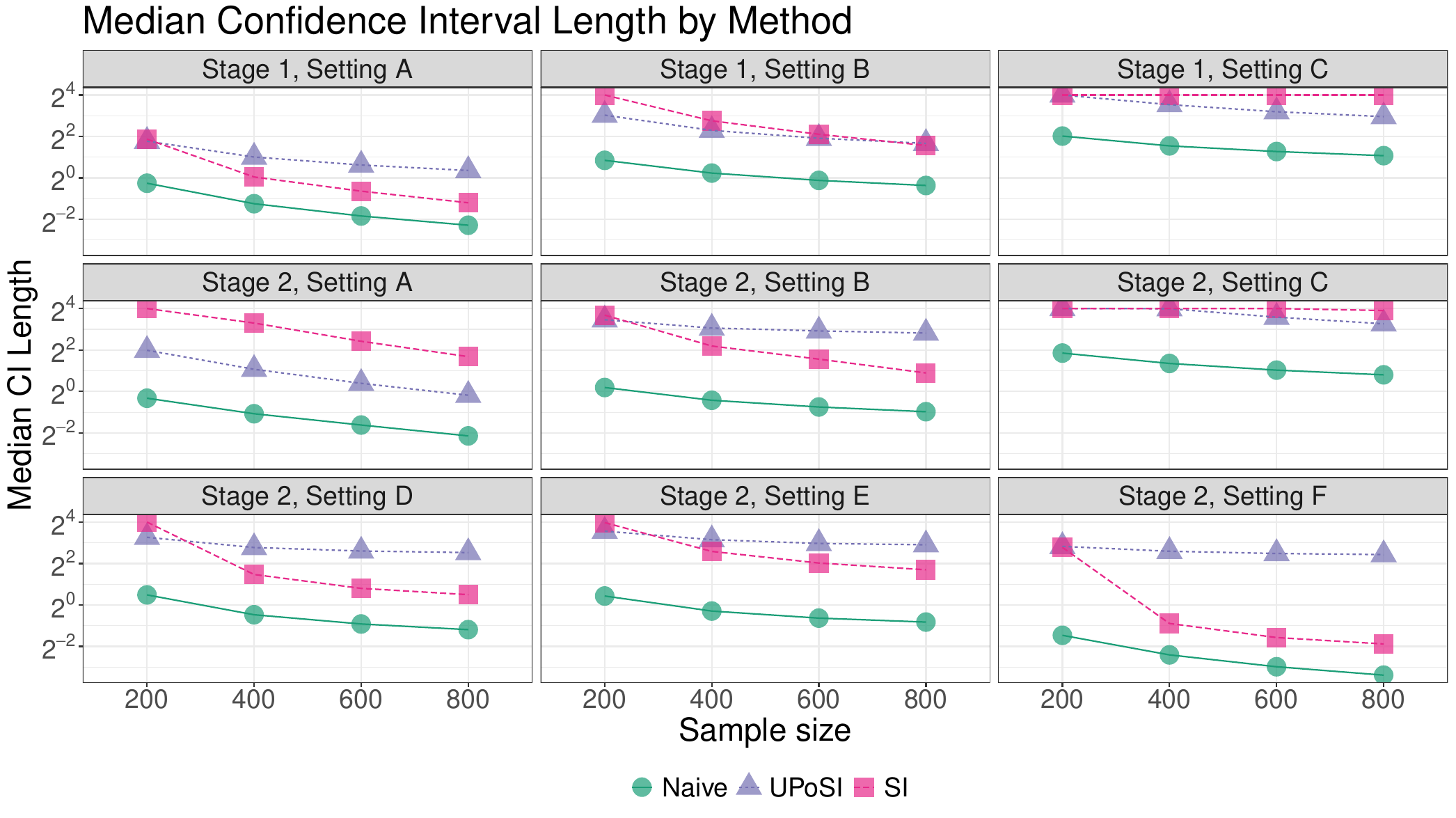}
  \end{subfigure}
}\\
 \resizebox*{!}{0.4\textheight}{
  \begin{subfigure}[b]{\textwidth}
  \includegraphics[width=0.95\textwidth,page=2]{simulation-results-fs-conditional.pdf}
\end{subfigure}
}
\caption[Confidence interval performance for each method: FS]{
  \label{fig:sim-results-fs}
  Confidence interval performance for each method, grouped by the stage of Robust Q-learning and sample size when using FS.
  Top: Median confidence interval length; Bottom: False coverage rates.
}
\end{figure}

\FloatBarrier
\section{Additional ExTENd Study Results}
\label{app:extend-results}

For Analysis (I), we generated $\historybasis{1}$ using main effects and two-way interactions among the baseline variables. All continuous variables were pre-transformed by centering the column by its mean. A similar procedure was used to create $\historybasis{2}$, which included two-way interactions of the responder and nonresponder group indicator variables with the remaining variables.

For Analysis (II), we generated $\historybasis{1} = \history{1}$ and $\historybasis{2} = (R \bZ^{\top}, (1-R) \bZ^{\top})^{\top}$, where $R$ represents the indicator for the response prior to Stage 2 re-randomization and $\bZ$ represents all of the variables in $\history{2}$ other than $R$.
In total, that leads to a full model of size 132 in the second stage and 56 in the first. To allow the selective inference comparison, we select models using least-angle regression. In Stage 1, we use a fixed model size of 5 for the selected tailoring variables; in Stage 2, we select three variables separately among responders and nonresponders, letting us interpret these selections as interactions. The final inferential procedure for SI estimates the error variance using the pooled data.

\Cref{fig:extend-aug} shows the Naive inference method tend finds all but 2 of the noise terms to be significant at the .05 level, showing the deficiency of naive inference. Compared to SI, our proposed method gives much smaller confidence interval lengths. None of the selected terms are found to be seignificant after controlling for selection. 

\begin{figure}
  \centering
  \includegraphics[width=0.8\linewidth]{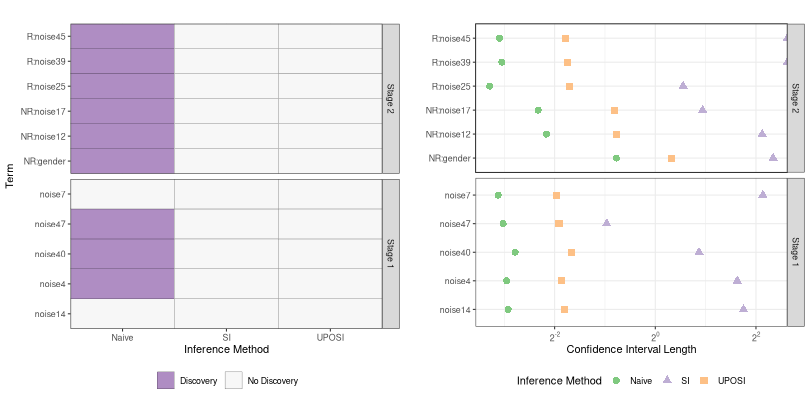}
  \caption[Performance on augmented data.]{Comparison of naive inference, selective inference (SI), and the proposed method (UPOSI) on the ExTENd data. Left: Discovery of nonzero coefficients. Right: Confidence interval length.}
  \label{fig:extend-aug}
\end{figure}

\begin{table}
  \centering
  \includegraphics[width=0.6\linewidth]{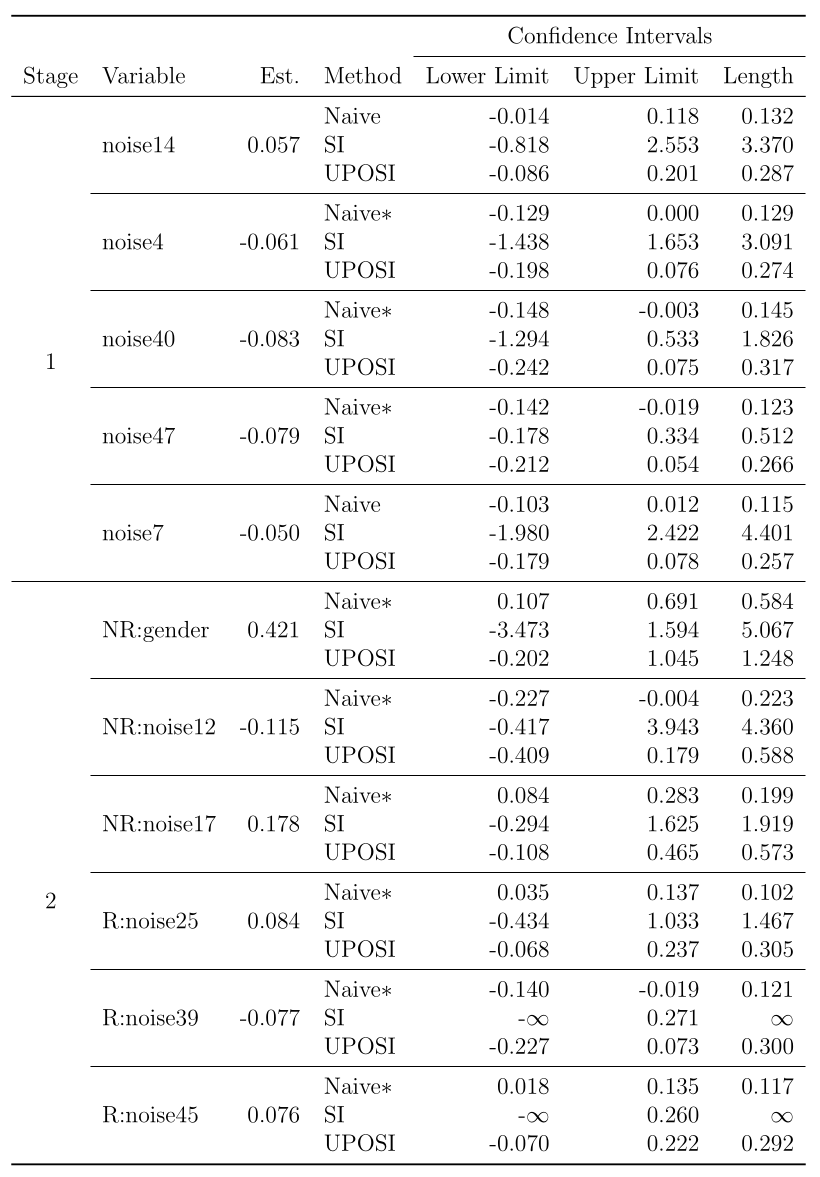}
  \caption[Inference on the selected tailoring variables in the augmented ExTENd study.]{
    Inference on the selected tailoring variables in the augmented data from the ExTENd study. Presence of $^*$ after the inference method indicates a 95\% interval that does not include zero.
  }
  \label{tab:extend-aug}
\end{table}

\FloatBarrier
\section{Equivalence of Post-Selection and Simultaneous Inference}
\label{app:equiv-sel-sim}

This result was proved in \citet{kuchibhotlaValidPostselectionInference2020}. We restate the result here and give a bit of exposition around how this fits in with the additional complications arising from multiple stages of model decisions.
The post-selection inference problem is defined for general parameters and confidence region construction methods, although it could also be viewed through the lens of hypothesis testing; we mostly focus on the former throughout. 

Let us consider a coverage event for a general parameter, which may be allowed to vary with $n$, represented by $\bar\btheta_{n,\model},$ where $\bar\btheta_{n,\model}$ is subscripted by $\model$ in order to reflect that the definition of the parameter varies by the index $\model$ taking its values in $\modelspace.$ When this index is chosen from $\modelspace$ based on the data, we obtain some $\hat\model$ and then focus on a particular $\bar\btheta_{n,\hat\model}.$ The goal of post-selection inference is to construct a confidence region $\bar{\Rcal}_{n, \hat\model}$ which contains this selected parameter at the desired confidence level. Mathematically, this may be stated:
\begin{equation}
  P\Big( \bar\btheta_{n,\hat\model} \notin \bar{\Rcal}_{n, \hat\model} \Big) \leq \alpha.
  \label{eq:fvfcp-app}
\end{equation}

The following theorem, proved in \citet{kuchibhotlaValidPostselectionInference2020}, establishes the equivalence between this criterion and a simultaneous inference result.
\begin{theorem}{\citep[Thm.~3.1]{kuchibhotlaValidPostselectionInference2020}}
  \label{thm:uposi-sim-sel}
  Let $\{ \bar{\Rcal}_{n, \model} \}$ be a family of confidence regions for a family of parameters $\bar\btheta_{n,\model}$, with both the regions and parameters indexed by the model $\model \in \modelspace.$ Let $\hat\model$ be a data-dependent model taking values almost surely in $\modelspace$. Then \eqref{eq:fvfcp-app} is equivalent to
  \begin{equation}
      P\left( \underset{{\model \in \modelspace}}{\bigcap} \bar\btheta_{n,\model} \in \bar{\Rcal}_{n, \model} \right) \geq 1-\alpha.
    \label{eq:sim-and-sel-inf}
  \end{equation}
\end{theorem}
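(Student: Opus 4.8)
The plan is to prove the equivalence by splitting it into its two implications, exploiting the fact that here $\modelspace = \modelspace_1 \times \modelspace_2$ is finite (a product of power sets of finite index sets), which renders the measurability issues trivial. The forward implication is a routine event-inclusion argument; the reverse is the substantive one and is proved by instantiating the post-selection guarantee at an adversarial selection rule of my own construction.

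For the direction \eqref{eq:sim-and-sel-inf} $\Rightarrow$ \eqref{eq:fvfcp-app}, I would argue pointwise. On the event $\bigcap_{\model \in \modelspace} \{ \bar\btheta_{n,\model} \in \bar{\Rcal}_{n,\model} \}$ every model is covered; in particular, the realized value $\hat\model(\omega)$ of any data-dependent model satisfies $\bar\btheta_{n,\hat\model} \in \bar{\Rcal}_{n,\hat\model}$. Thus the simultaneous-coverage event is a subset of the selected-coverage event $\{ \bar\btheta_{n,\hat\model} \in \bar{\Rcal}_{n,\hat\model} \}$, and monotonicity of $P$ yields $P(\bar\btheta_{n,\hat\model} \in \bar{\Rcal}_{n,\hat\model}) \ge 1-\alpha$, i.e.\ \eqref{eq:fvfcp-app}. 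This direction holds for \emph{any} data-dependent selection rule, with no special structure required.

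For the reverse direction \eqref{eq:fvfcp-app} $\Rightarrow$ \eqref{eq:sim-and-sel-inf}, the key observation is that \eqref{eq:fvfcp-app} is required to hold for an arbitrary rule, so I am free to evaluate it at a worst-case rule. I would define $\hat\model^\star$ to select, for each realization $\omega$, a model at which coverage fails whenever one exists, defaulting otherwise to a fixed model: concretely, fix an enumeration of the finite set $\modelspace$ and let $\hat\model^\star(\omega)$ be the first model in this ordering for which $\bar\btheta_{n,\model}(\omega) \notin \bar{\Rcal}_{n,\model}(\omega)$, taking the full model if none fails. Because $\modelspace$ is finite, $\hat\model^\star$ is a measurable $\modelspace$-valued function of the data and hence a legitimate data-dependent model. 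By construction, the failure event for this rule coincides exactly with $\bigcup_{\model \in \modelspace} \{ \bar\btheta_{n,\model} \notin \bar{\Rcal}_{n,\model} \}$, the complement of the simultaneous-coverage event. Applying \eqref{eq:fvfcp-app} to $\hat\model^\star$ gives $P\big(\bigcup_\model \{\bar\btheta_{n,\model} \notin \bar{\Rcal}_{n,\model}\}\big) \le \alpha$, and taking complements delivers \eqref{eq:sim-and-sel-inf}.

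The main obstacle is conceptual rather than computational: the equivalence is only correct when \eqref{eq:fvfcp-app} is read as holding universally over all data-dependent selection rules, since the reverse implication leans entirely on the freedom to evaluate the guarantee at the adversarial rule $\hat\model^\star$. A single fixed $\hat\model$ satisfying \eqref{eq:fvfcp-app} would not recover simultaneous coverage. In a general (possibly uncountable) index space one would additionally invoke a measurable selection theorem to ensure $\hat\model^\star$ is measurable, but the finiteness of $\modelspace_1 \times \modelspace_2$ makes this automatic in the present setting.
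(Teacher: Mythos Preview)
Your proof is correct and follows exactly the standard argument from \citet{kuchibhotlaValidPostselectionInference2020}. Note that the paper itself does not supply a proof of this theorem; it simply restates the result and attributes it to the original source, so there is no in-paper proof to compare against. Your identification of the key conceptual point---that \eqref{eq:fvfcp-app} must be read as holding for \emph{all} data-dependent selection rules for the reverse implication to go through---is precisely the content of the original theorem, and your adversarial-selection construction is the same device used there.
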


Let us consider how this applies to the population targets defined in \Cref{sec:rql-param-oracle} in each of the two stages in Robust Q-learning.
In the second stage, the set of parameters is given by $\btheta_{20,\model_2}$ as defined in \eqref{eq:norm-eq-true-2}. These parameters are indexed by $\model_2$ taking values in $\modelspace_2$, and the parameters are fixed with $n$. The post-selection parameter is then $\bar\btheta_{n,\hat\model} \equiv \btheta_{20,\hat\model_2}.$ By \Cref{assump:model-selection-consistency}, the random parameters are chosen from $\modelspace_2(C_2),$ so that we can set $\modelspace \equiv \modelspace_2(C_2).$

In the first stage, the set of parameters is $\btheta_{10,\model_1 \model_2},$ as defined in \eqref{eq:norm-eq-true-1}. This parameter is also fixed with $n$, although there are now two models involved: $\model_1 \in \modelspace_1$ and $\model_2 \in \modelspace_2.$ The post-selection parameter is $\btheta_{10,\hat\model_1 \hat\model_2}.$ Using \Cref{assump:model-selection-consistency}, the pair of models $(\hat\model_1, \hat\model_2)$ are contained in $\modelspace_1(C_1) \times \modelspace_2(C_2)$ almost surely.
Consequently, we can set $\hat\model \defined (\hat\model_1, \hat\model_2)$ and $\modelspace \defined \modelspace_1(C_1) \times \modelspace_2(C_2),$ which gives a parameter $\bar\btheta_{n,\model} \equiv \btheta_{10,\model_1 \model_2}$ of the required form.

Also by \Cref{assump:model-selection-consistency}, the pair of models $(\hat\model_1, \hat\model_2)$ take values in $\modelspace^{\dagger} \defined \modelspace_1(C_1) \times \{ \model_2^* \}$ with probability converging to one. We can choose $\modelspace \equiv \modelspace^{\dagger}$ to obtain an asymptotic version of these probability statements that hold in in the limit. To see why this is the case, simply apply the following relationships:
\[
  0 \leq P(A) - P(A \cap B) = P(A \cap B^c) \leq P(B^c),
\]
to the events $A \equiv \left\{ \bar\btheta_{n,\hat\model} \notin \bar{\Rcal}_{n, \hat\model} \right\}$ and $B \equiv \left\{ \hat\model \in \modelspace^{\dagger} \right\}$. The rightmost inequality vanishes.

\section{Prerequisites for the Proofs}

\subsection{Some Additional Notation}
\label{app:additional-notation}

Let the ``positive part'' function be defined as $(a)_+ \defined a ~ {\Ind(a > 0)}$.
For a real matrix $\bA$, let $\| \bA \|_{2,2}$ represent its maximal singular value. This definition is used to make it clear that it is the operator norm mapping $\ell_2$ to $\ell_2$---i.e., for any vector $\bv \in \R^n$ and $\bA \in \R^{m \times n}$, the norm satisfes $\| \bA \bv \|_2 \leq \| \bA \|_{2,2} \| \bv \|_{2}.$ We will also make use of the $L_2(P_0)$ norm for vector-valued random functions: if $\bh = (h_1, \ldots, h_d)^\top$ is a vector-valued random function, then let $\LtwoPzero[ ]{\bh} = \max_{j=1,\ldots,d} \LtwoPzero[ ]{h_j}.$

Unless otherwise specified, the notation \(o_p(1)\) will be used to represent ``little-o in probability'' with respect to the $\ell_{\infty}$ norm for vector-valued random variables. That is, for vectors $\bV, \bU$, we use the expression \(
  \bU = \bV + o_p(r_n)
\)
to represent the statement \(
  r_n^{-1} \| \bU - \bV \|_{\infty} \CiP 0. 
\)
Similarly, an expression of the form $\bU = \bV + O_p(r_n)$ will represent \(
  r_n^{-1} \| \bU - \bV \|_{\infty} = O_p(1).
\)
Some random quantities defined in \Cref{sec:cf-perturb-risk} depend on both the random sample from $P_0$ as well as random multipliers $G$ from a bootstrap distribution $P_{\omega}$. The notation $O_{p^*}(1)$ and $o_{p^*}(1)$ will represent similar probability statements with respect to the product measure of the observed data and random multipliers, $P_0 \times P_{\omega}.$

We also recall that the function $\| \cdot \|_{\infty}$ for matrices is not itself a proper matrix norm; however, for a $m \times m$ matrix $\bA$, the quantity $m \| \bA \|_{\infty}$ is a proper matrix norm; see (5.6.0.4) in  \citet{horn2012matrix}. We also have the property $\| \bA \|_{\infty} \leq \| \bA \|_{2,2} \leq m \| \bA \|_{\infty}.$ This fact will be useful in some of the future arguments.

To simplify the notation, we will subscript conditional expectations and their estimates by $i$ to represent the evaluation of the function at the $i^{th}$ value. For example, if $i \in \Ik$, we will write $\mu_{2Y0}(\history{2i}) - \hat\mu_{2Y}(\history{2i}; \crossfitDataC)$ in the more compact form $\mu_{2Y0i} - \hat\mu_{2Yi}$ and similarly we will write $\mu_{1A0}(\history{1i}) - \hat\mu_{1A}(\history{1i}; \crossfitDataC)$ as $\mu_{1A0i} - \hat\mu_{1Ai}$. It will be understood that these function estimates are the product of cross-fitting, although this fact will not be explicitly notated except when necessary. For example, we will often write
\[
  \frac{1}{n} \sum_{i=1}^n (\mu_{1A0i} - \hat\mu_{1Ai})
\]
instead of the double-sum over both $k=1,\ldots,K$ and $i \in \Ik.$ The arguments involving cross-fitting more directly will draw attention to this nested structure.

Finally, several quantities were defined in the main text. To provide a more concrete reference, we list several of these quantities and relationships here.
\begin{equation} \label{eq:app-defined-quantities}
\begin{aligned}
  \bH_{\ell 0} =& \Eop{ \left[ \left\{ A_{\ell} - \mu_{\ell A 0}(\history{\ell}) \right\}^2 \left( \historybasis{\ell} \right)^{\otimes 2} \right]}
  \\
  \bHOracle{\ell} =& \frac{1}{n} \sum_{i=1}^n \left( A_{\ell i} - \mu_{\ell A 0 i}\right)^2 \left( \historybasis{\ell} \right)^{\otimes 2} 
  \\
  \hat\bH_{\ell n} =& \frac{1}{n} \sum_{i=1}^n \left( A_{\ell i} - \hat\mu_{\ell A i} \right)^2 \left( \historybasis{\ell} \right)^{\otimes 2} 
  \\
  \bG_{2 0} =& \Eop{\left[\left\{ A_{2} - \mu_{2 A 0}(\history{2}) \right\}^2 \Delta_{2}(\history{2}) \historybasis{2} \right]} 
  \\
  \tilde{\bG}_{2 n} =& \frac{1}{n} \sum_{i=1}^n (A_{2i} - \mu_{2 A0i}) (Y_i - \mu_{2 Y0i}) \historybasis{2i} 
  \\
  \hat{\bG}_{2 n} =& \frac{1}{n} \sum_{i=1}^n (A_{2i} - \hat\mu_{2 Ai}) (Y_i - \hat\mu_{2 Yi}) \historybasis{2i} 
  \\
  \bG_{1 0 \model_2} =& \Eop{ \left[\left\{ A_{1} - \mu_{1 A 0}(\history{1}) \right\}^2 \Delta_{1,\model_2}(\history{1}) \historybasis{1} \right]} 
  \\
  \tilde{\bG}_{1 n \model_2} =& \frac{1}{n} \sum_{i=1}^n (A_{1i} - \mu_{1 A0i}) (\tilde{Y}_{1 \model_2 i} - \mu_{1 Y \model_2 0i}) \historybasis{1i}
  \\ 
  \hat{\bG}_{1 n \model_2} =& \frac{1}{n} \sum_{i=1}^n (A_{1i} - \hat\mu_{1 Ai}) (\hat{Y}_{1 \model_2 i} - \hat\mu_{1 Y \model_2 i}) \historybasis{1i}
\end{aligned}
\end{equation}
Each of the terms involving sums also have a perturbation bootstrap version, which is superscripted with $b$ and involves $\bootWgt_i$ multiplying the $i^{th}$ term in the sum. For example,
\[
  \hat{\bG}_{2 n}^b = \frac{1}{n} \sum_{i=1}^n \bootWgt_i (A_{2i} - \hat\mu_{2 Ai}) (Y_i - \hat\mu_{2 Yi}) \historybasis{2i}.
\]
The target parameters in the first and second stages, respectively, are
\begin{equation}
  \label{eq:app-defined-targets}
  \begin{aligned}
    \btheta_{10,\model_{1} \model_{2}} &= \bH_{10}(\model_{1})^{-1} \bG_{10,\model_{2}}(\model_{1}) \\
    \btheta_{20,\model_2} &= \bH_{20}(\model_2)^{-1} \bG_{20}(\model_2)
  \end{aligned}
\end{equation}
and have estimators leveraging the quantities in \eqref{eq:app-defined-quantities}:
\begin{equation}
  \label{eq:app-defined-ests}
  \begin{aligned}
    \tilde{\btheta}_{1 n, \model_{1} \model_{2}} &= \bHOracle{1}(\model_{1})^{-1} \tilde{\bG}_{1 n, \model_{2}}(\model_{1}) \\
    \hat{\btheta}_{1 n, \model_{1} \model_{2}} &= \hat{\bH}_{1 n}(\model_{1})^{-1} \hat{\bG}_{1 n, \model_{2}}(\model_{1}) \\
    \tilde\btheta_{2n,\model_2} &= \bHOracle{2}(\model_2)^{-1} \tilde\bG_{2n}(\model_2)
    \\
    \hat\btheta_{2n,\model_2} &= \hat\bH_{2n}(\model_2)^{-1} \hat\bG_{2n}(\model_2).
  \end{aligned}
\end{equation}
Bootstrapped versions of these estimators may be created by using the bootstrapped versions of each of the quantities appearing here. As an illustrative example, the bootstrap analogue of $\hat{\btheta}_{1 n, \model_{1} \model_{2}}$ is given by \[
  \hat{\btheta}_{1 n, \model_{1} \model_{2}}^b = \hat{\bH}_{1 n}^b(\model_{1})^{-1} \hat{\bG}_{1 n, \model_{2}}^b(\model_{1}).
\]

\subsection{Defining Different Pseudo-Outcomes}
\label{app:defining-pseudo-outcomes}

We previously defined the relevant part of the Stage 1 Q-function in terms of an ideal pseudo-outcome $Y_{1 \model_2}.$ We will unify some notation around these pseudo outcomes by delineating the different levels of knowledge and estimation required for each.
For any particular Stage 2 model $\model_2 \in \modelspace_2$, define:
\begin{align}
  Y_{1 \model_2} = Y + \xi\{A_2, \historybasis{2}(\model_2); \btheta_{20, \model_2} \}
  \label{eq:stage-1-pseudo-outcome}
  \\
  \tilde{Y}_{1 \model_2} = Y + \xi\{A_2, \historybasis{2}(\model_2); \tilde\btheta_{2n, \model_2}\}
  \label{eq:stage-1-pseudo-outcome-tilde}
  \\
  \hat{Y}_{1 \model_2} = Y + \xi\{A_2, \historybasis{2}(\model_2); \hat\btheta_{2n, \model_2}\},
  \label{eq:stage-1-pseudo-outcome-est}
\end{align}
where $\xi$ is defined in \eqref{eq:y1-s2-def}. 
The first pseudo-outcome leverages perfect knowledge of the $\Delta_2$ function along with its projection onto a linearized model, whereas  \eqref{eq:stage-1-pseudo-outcome-tilde} only requires perfect knowledge of the $\mu_{2A0},~\mu_{2Y0}$ functions needed
to calculate the estimate $\tilde\btheta_{2n, \model_2}.$ The final pseudo-outcome \eqref{eq:stage-1-pseudo-outcome-est} is based entirely on the data using cross-fitting.


For completeness, we also define some of the bootstrap pseudo-outcome quantities used in the arguments in \Cref{sec:cf-perturb-risk}:
\begin{align}
  \tilde{Y}^b_{1 \model_2} = Y + \xi(A_2, \historybasis{2}(\model_2); \tilde\btheta_{2n, \model_2}^b)
  \label{eq:stage-1-pseudo-outcome-tilde-boot}
  \\
  \hat{Y}^b_{1 \model_2} = Y + \xi(A_2, \historybasis{2}(\model_2); \hat\btheta_{2n, \model_2}^b),
  \label{eq:stage-1-pseudo-outcome-est-boot}
\end{align}
these quantities differ from those of \eqref{eq:stage-1-pseudo-outcome-tilde,eq:stage-1-pseudo-outcome-est} through the bootstrapped second-stage estimates, which are minimizers of the bootstrapped functions defined in \Cref{sec:cf-perturb-risk}.

\subsection{Overview of the Proofs}

Because of the complexity of the multi-stage Robust Q-learning process, there are several intermediate results (i.e., lemmas) that are necessary before getting to the main theorems. Indeed, much of the hard technical work lies in the proofs of several key lemmas.
Here we will provide an overview of some of the results.

To prove \Cref{thm:uposi-coverage}, we make use of only a few results: a simple result for stochastic processes with a random index in \Cref{lem:stochastic-process-random-index}, an analogue of Lemma 4.1 in \citet{kuchibhotlaValidPostselectionInference2020} stated as \Cref{lem:kuchibhotla-uniform-model-conv}, and a result on the negligibility of cross-fitting in \Cref{lem:unif-gram-inv}.

For the proof of \Cref{thm:uposi-bootstrap-valid}, we use these results along with \Cref{lem:g1-hat-rand-model,lem:inf-g-stg-1}. The first lemma ensures that cross fitting and the second-stage model selection event do not impact the quantities being studied up to a $n^{-1/2}$ rate. The second lemma ensures that the term has an influence function representation up to this same level of approximation, in probability.

The final proof of \Cref{thm:upsoi-ci-valid} is relatively straightforward, only involving a matrix argument.

\section{Proofs of Theorems}

\subsection{Proof of \texorpdfstring{\Cref{thm:uposi-coverage}}{Theorem \ref{thm:uposi-coverage}}}
\label{sec:pf-thm-uposi-coverage}

  By Theorem 3.1 in \citet{kuchibhotlaValidPostselectionInference2020}, coverage for a random-model parameter is equivalent to simultaneous coverage over all models. We consider the Stage 1 result here, with the understanding that the arguments for the stage 2 case are similar.

  The deterministic inequality \eqref{eq:uposi-inequality-1}, which holds for any pair of models $\model_1 \in \modelspace_1,~\model_2 \in \modelspace_2,$ forms the basis for the Stage 1 UPoSI region \eqref{eq:uposi-1-dagger-star}.
  The only difference between the RHS of \eqref{eq:uposi-inequality-1} and that of the inequality within \eqref{eq:uposi-1-dagger-star} is the use of an estimator $\hat\btheta_{1n,\hat\model_1 \hat\model_2}$ for the unknown quantity. By \Cref{assump:model-selection-consistency} and \Cref{lem:stochastic-process-random-index}, $\| \btheta_{10,\model_1 \hat\model_2} - \btheta_{10,\model_1 \model_2^*} \|_1 = o_p(n^{-1/2}).$

  Using the triangle inequality, and re-arranging terms, we can bound
  \begin{align*}
    \left| \frac{\hat{D}_{1n,\hat\model_2}^G + \hat{D}_{1n}^H \| \hat\btheta_{1n,\model_1 \hat\model_2} \|_1}{\hat{D}_{1n,\model_2^*}^G + \hat{D}_{1n}^H \| \btheta_{10,\model_1 \model_2^*} \|_1} - 1 \right| 
    = \left| \frac{\hat{D}_{1n}^H (\| \hat\btheta_{1n,\model_1 \hat\model_2} \|_1 - \| \btheta_{10,\model_1 \model_2^*} \|_1)
    + (\hat{D}_{1n,\hat\model_2}^G - \hat{D}_{1n,\model_2^*}^G ) }{\hat{D}_{1n,\model_2^*}^G + \hat{D}_{1n}^H \| \btheta_{10,\model_1 \model_2^*} \|_1} \right| 
    \\
    \leq \frac{\hat{D}_{1n}^H (\| \hat\btheta_{1n,\model_1 \hat\model_2} - \btheta_{10,\model_1 \hat\model_2} \|_1 + \| \btheta_{10,\model_1 \hat\model_2} - \btheta_{10,\model_1 \model_2^*} \|_1) 
    + (\hat{D}_{1n,\hat\model_2}^G - \hat{D}_{1n,\model_2^*}^G )
    }{\hat{D}_{1n,\model_2^*}^G + \hat{D}_{1n}^H \| \btheta_{10,\model_1 \model_2^*} \|_1} \\ 
    = \frac{\hat{D}_{1n}^H }{\hat{D}_{1n,\model_2^*}^G + \hat{D}_{1n}^H \| \btheta_{10,\model_1 \model_2^*} \|_1} \left\{ \| \hat\btheta_{1n,\model_1 \hat\model_2} - \btheta_{10,\model_1 \hat\model_2} \|_1 + \| \btheta_{10,\model_1 \hat\model_2} - \btheta_{10,\model_1 \model_2^*} \|_1 \right\} + o_p(n^{-1/2}),
  \end{align*}
  where the final $o_p(n^{-1/2})$ term results from \Cref{lem:stochastic-process-random-index} applied to the term resulting from $\hat{D}_{1n,\hat\model_2}^G - \hat{D}_{1n,\model_2^*}^G $.
  Examining the final equality, the second term in curly braces is also $o_p(n^{-1/2})$ by \Cref{lem:stochastic-process-random-index}. Then we may use this fact along with \Cref{lem:kuchibhotla-uniform-model-conv} to bound
  \begin{align*}
    %
    %
    \leq \frac{\hat{D}_{1n}^H }{\hat{D}_{1n,\model_2^*}^G + \hat{D}_{1n}^H \| \btheta_{10,\model_1 \model_2^*} \|_1} ~ 
    \left\{ \frac{|\model_1| (\hat{D}_{1n,\hat\model_2}^G + \hat{D}_{1n}^H \| \btheta_{10,\model_1 \hat\model_2} \|_1)}{ \Lambda_1(C_1) - C_1 \hat{D}_{1n}^H} + \| \btheta_{10,\model_1 \hat\model_2} - \btheta_{10,\model_1 \model_2^*} \|_1 \right\}
    \\ 
    \leq 
    \frac{\hat{D}_{1n,\hat\model_2}^G + \hat{D}_{1n}^H \| \btheta_{10,\model_1 \hat\model_2} \|_1}{\hat{D}_{1n,\model_2^*}^G + \hat{D}_{1n}^H \| \btheta_{10,\model_1 \model_2^*} \|_1} ~
    \frac{ C_1 \hat{D}_{1n}^H }{ \Lambda_1(C_1) - C_1 \hat{D}_{1n}^H} + o_p(n^{-1/2}).
  \end{align*}
  Under \Cref{assump:bounded-model}, $ 1/\Lambda_1(C_1) \leq c_0^{-1}.$ Further, for fixed $p_1$, $\hat{D}_{1n}^H = O_p(n^{-1/2})$, since by \Cref{lem:unif-gram-inv} this quantity behaves up to $o_p(n^{-1/2})$ like the maximum deviation of a fixed number of mean-zero sample averages. Consequently, the second fraction on the RHS converges to zero in probability. Finally, we apply \Cref{lem:stochastic-process-random-index} to the stochastic process defined by \[
    \left\{ \frac{\hat{D}_{1n,\model_2}^G + \hat{D}_{1n}^H \| \btheta_{10,\model_1 \model_2} \|_1}{\hat{D}_{1n,\model_2^*}^G + \hat{D}_{1n}^H \| \btheta_{10,\model_1 \model_2^*} \|_1} - 1 : \model_2 \in \modelspace_2(C_2) \right\}
  \]
  in the Euclidean metric space to conclude that the RHS is $o_p(1).$
  
  The second statement within \eqref{eq:uposi-coverage-statements} follows similarly, without handling the additional random model.

\subsection{Proof of \texorpdfstring{\Cref{thm:uposi-bootstrap-valid}}{Theorem \ref{thm:uposi-bootstrap-valid}}}
\label{sec:pf-thm-uposi-bootstrap-valid}

  Let us consider the first stage, as the arguments are similar in the second stage but with fewer additional technicalities. Using 
  the definitions of the $\hat{D}$ variables 
  along with 
  \Cref{lem:g1-hat-rand-model,lem:inf-g-stg-1}, we may write 
  \begin{align*}
    \sqrt{n} \Big\| \hat\bG_{1n,\hat\model_2}^b - \hat\bG_{1n,\hat\model_2} \Big\|_{\infty} = \Big\| n^{-1/2} \sum_{i=1}^n (\bootWgt_i - 1)\mathrm{Inf}_{1G \model_2^* i}  \Big\|_{\infty} + o_{p^*}(1)
    \\
    \sqrt{n} \Big\| \hat\bG_{1n,\hat\model_2} - \bG_{10,\hat\model_2} \Big\|_{\infty} = \Big\| n^{-1/2} \sum_{i=1}^n \mathrm{Inf}_{1G \model_2^* i}  \Big\|_{\infty} + o_p(1),
  \end{align*}
  with $\Eop(\bootWgt_i - 1) =0$ and $\Eop(\bootWgt_i - 1)^2 =1.$ Consequently, these obey a central limit theorem and converge weakly to the same asymptotic distribution. For example, \[
    n^{-1/2} \sum_{i=1}^n \mathrm{Inf}_{1G \model_2^* i} \CiD N \Big(\bzero,~ \bm{\Sigma}_{1} \Big)
  \]
  where $\bm{\Sigma}_{1} \defined \Eop\mathrm{Inf}_{1G \model_2^* 1}^{\otimes 2}.$ Similarly, conditional on $\bO_1,\ldots,\bO_n,$ the Lindeberg-Feller CLT ensures the bootstrap term converges in distribution: \[
    n^{-1/2} \sum_{i=1}^n (\bootWgt_i - 1) \mathrm{Inf}_{1G \model_2^* i} \CiD N \Big(\bzero,~ \bm{\Sigma}_{1n} \Big),
  \]
  where $\bm{\Sigma}_{1n} \defined n^{-1} \sum_{i=1}^n\mathrm{Inf}_{1G \model_2^* i}^{\otimes 2} \overset{a.s.}{\longrightarrow} \bm{\Sigma}_{1}$ by the Law of Large Numbers.
  Similarly, we can use \Cref{lem:unif-gram-inv} to conclude 
  \begin{align*}
    \sqrt{n} \| \hat\bH_{1n}^b - \hat\bH_{1n} \|_{\infty} &= \sqrt{n} \| \bHOracle{1}^b - \bHOracle{1} \|_{\infty} + o_{p^*}(1) \\
    \sqrt{n} \| \hat\bH_{1n} - \bH_{10} \|_{\infty} &= \sqrt{n} \| \bHOracle{1} - \bH_{10} \|_{\infty} + o_{p}(1)
  \end{align*}
  where each of the non-remainder terms on the RHS are constructed from sample means of i.i.d. terms, with the bootstrap portion having a similar composition as above. Consequently, the CLT and LLN ensure that the bootstrap quantities converge to the same asymptotic distribution as the observed-data quantities. 

\subsection{Proof of \texorpdfstring{\Cref{thm:uposi-cond-valid}}{Theorem \ref{thm:uposi-cond-valid}}}
\begin{proof}
  This follows from \Cref{thm:uposi-sim-sel} along with the simultaneity of \eqref{eq:uposi-1-cond,eq:uposi-2-cond}.
\end{proof}

\subsection{Proof of \texorpdfstring{\Cref{thm:uposi-cond}}{Theorem \ref{thm:uposi-cond}}}
\label{sec:pf-thm-upsoi-cond}

We will show the first-stage perturbation bootstrap result \eqref{eq:uposi-mult-boot-cond-1} with the understanding that the second-stage result \eqref{eq:uposi-mult-boot-cond-2} follows similarly.
In \Cref{sec:pf-thm-uposi-bootstrap-valid}, we arrived at the asymptotically linear representations
\begin{align*}
  \sqrt{n} \Big\| \hat\bG_{1n,\hat\model_2}^b - \hat\bG_{1n,\hat\model_2} \Big\|_{\infty} = \Big\| n^{-1/2} \sum_{i=1}^n (\bootWgt_i - 1)\mathrm{Inf}_{1G \model_2^* i}  \Big\|_{\infty} + o_{p^*}(1)
  \\
  \sqrt{n} \Big\| \hat\bG_{1n,\hat\model_2} - \bG_{10,\hat\model_2} \Big\|_{\infty} = \Big\| n^{-1/2} \sum_{i=1}^n \mathrm{Inf}_{1G \model_2^* i}  \Big\|_{\infty} + o_p(1).
\end{align*}
The conditional weak convergence result would follow by the Lindeberg-Feller CLT if a similar linearization holds when replacing $\bG_{10,\hat\model_2}$ with $\bG_{1n, \hat\model_2}^{cond}.$ Identical arguments as those in the proof of \Cref{lem:g1-hat-rand-model} can be made to show that
\[
  \| \hat\bG_{1n,\hat\model_2} - \bG_{1n,\model_2^*} \|_{\infty} = \| \tilde\bG_{1n,\model_2^*} - \bG_{1n,\model_2^*} \|_{\infty} + o_p(n^{-1/2}).
\]
At this point, analogous arguments to those made in \Cref{app:lemmas-stg-1-inf} can be made to show that an asymptotically linear representation holds for $\tilde\bG_{1n,\model_2^*} - \bG_{1n,\model_2^*}$ when conditioning upon the first-stage design elements $\designElements{1}.$ Such arguments would lead to an influence function for each realization of $\designElements{1},$ $\mathrm{Inf}_{1G \model_2^*}^{cond}(\designElements{1}).$
On sets with such realizations, we could write
\begin{align*}
  \sqrt{n} \Big\| \hat\bG_{1n,\hat\model_2}^b - \hat\bG_{1n,\hat\model_2} \Big\|_{\infty} = \Big\| n^{-1/2} \sum_{i=1}^n (\bootWgt_i - 1)\mathrm{Inf}_{1G \model_2^* i}^{cond}(\designElements{1})  \Big\|_{\infty} + o_{p^*}(1)
  \\
  \sqrt{n} \Big\| \hat\bG_{1n,\hat\model_2} - \bG_{10,\hat\model_2} \Big\|_{\infty} = \Big\| n^{-1/2} \sum_{i=1}^n \mathrm{Inf}_{1G \model_2^* i}^{cond}(\designElements{1})  \Big\|_{\infty} + o_p(1).
\end{align*}

Let $\epsilon_{2}^{cond}$ and $\epsilon_{1 \model_2}^{cond}$ be the errors in models defined similarly to \eqref{eq:rob-model-y2-model-2,eq:rob-model-y1-model-1} except replacing $\btheta_{20,\model_2}$ with $\btheta_{2n,\model_2^F}^{cond}$ and $\btheta_{10,\model_1 \model_2}$ with $\btheta_{1n, \model_1^F \model_2}^{cond}$. We define the quantities
\begin{align*}
  \mathrm{Inf}_{2 \model_2^* i}^{cond} \defined \history{2i} (A_{2i} - \mu_{2A0i}) \epsilon_{2 i}^{cond}
  \\
  B_{\model_2^* i}^{cond} \defined \Ind{(\historybasis{2}(\model_2)^\top \btheta_{2n,\model_2^*}^{cond} > 0)} - A_{2}
  \\
  \bM_i^{cond} \defined B_{\model_2^* i}^{cond} \left\{ A_{1} - \mu_{1A0}(\history{1}) \right\} \historybasis{1} \historybasis{2}(\model_2^*)^\top,
  \\
  \mathrm{Inf}_{1 G \model_2^* i} \defined \historybasis{1i} (A_{1i} - \mu_{1A0i}) \epsilon_{1 \model_2^* i}^{cond} + \Eop(\bM_i^{cond} ~|~ \designElements{1}) \mathrm{Inf}_{2 \model_2^* i}^{cond}.
\end{align*}
Notice that both $\mathrm{Inf}_{2 \model_2^* i}^{cond}$ and $\mathrm{Inf}_{1 G \model_2^* i}^{cond}$ have expectation zero when conditioning on the $\designElements{1}.$ We can see the first through the tower property: since the set $\designElements{1}$ is contained within $\designElements{2},$
\[
  \Eop(\mathrm{Inf}_{2 \model_2^* i}^{cond} ~|~ \designElements{1}) = \Eop\left\{ \underbrace{\Eop(\mathrm{Inf}_{2 \model_2^* i}^{cond} ~|~ \designElements{2})}_{=0} ~\bigg|~ \designElements{1}\right\}.
\]
The second follows from the first, along with the property $\Eop(\epsilon_{1 \model_2^* i}^{cond} ~|~ \designElements{1}) = 0.$

Finally, we check Lindeberg's condition for this first-stage influence function. If the following two conditions hold, then \eqref{eq:uposi-mult-boot-cond-1} follows from the Lindeberg-Feller CLT:
\begin{align}
    \frac{1}{n} \sum_{i=1}^n \Eop\left( \| \mathrm{Inf}_{1 G \model_2^* i}^{cond} \|_{2}^2 ~\Big|~ \designElements{1} \right) ~\Ind{ \left( \| \mathrm{Inf}_{1 G \model_2^* i}^{cond} \|_{2} > \gamma \sqrt{n} \right)} \longrightarrow 0, ~\text{for all } \gamma > 0
    \label{eq:app-lindeberg-1}
    \\
    \frac{1}{n} \sum_{i=1}^n \Eop{\left( \mathrm{Inf}_{1 G \model_2^* i}^{cond} ~\big|~ \designElements{1} \right)^{\otimes 2}} \longrightarrow \Sigma_1.
    \label{eq:app-lindeberg-2}
\end{align}
Since all of the terms involved in $\mathrm{Inf}_{1 G \model_2^* i}$ have finite variance, it is straightforward to verify that \eqref{eq:app-lindeberg-1,eq:app-lindeberg-2} hold with probability converging to one over the conditioning set. The first condition follows from Chebyshev's inequality, while the second follows from the LLN.

The result for the second-stage result \eqref{eq:uposi-mult-boot-cond-2} follows very similarly, establishing the influence function $\mathrm{Inf}_{2 \model_2^* i}^{cond}$ reported above. Instead, we would modify the arguments of \Cref{app:lemmas-stg-2-inf} and use the Lindeberg-Feller CLT.

\section{Proofs of Lemmas}

\subsection{General Lemmas}

First, we state some general lemmas. The first two are simple results about random indices, which will be useful for handling random vectors that depend on a random model.
\begin{lemma}
  \label{lem:stochastic-process-random-index-bound}
  Let $\left\{ X_s : s \in \mathcal{S} \right\}$ be a stochastic process with indexing set $\mathcal{S}$. Suppose $X_s$ takes its values in a normed metric space $(\Xcal,\|\cdot\|),$ where $\|\cdot\|$ is a norm. Let $\hat{s}$ be an $\mathcal{S}-$valued random variable defined on the same probability space as the stochastic process. Then, for any fixed point $s' \in \mathcal{S}$, and any $\varepsilon \geq 0$, \[
    \Prob\left\{ \| X_{\hat{s}} - X_{s'} \| > \varepsilon \right\} \leq \Prob\left( \hat{s} \neq s' \right).
  \]
\end{lemma}
\begin{proof}
  Since the stochastic process and random index are measurable on the same probability space, we may decompose the LHS probability as \[
    \Prob\left\{ \| X_{\hat{s}} - X_{s'} \| > \varepsilon \right\} = \Prob\left\{ \| X_{\hat{s}} - X_{s'} \| > \varepsilon,~ \hat{s} = s' \right\} + \Prob\left\{ \| X_{\hat{s}} - X_{s'} \| > \varepsilon,~ \hat{s} \neq s' \right\}.
  \]
  The first term is zero, since the event $\left\{ \hat{s} = s' \right\}$ ensures that $\| X_{\hat{s}} - X_{s'} \|=0$ almost surely. The second term is bounded by $\Prob\left( \hat{s} \neq s' \right),$ completing the proof.
\end{proof}

\begin{lemma}
  \label{lem:stochastic-process-random-index}
  Under the setup of the previous lemma, suppose \( \Prob\left( \hat{s} \neq s' \right) \rightarrow 0. \) Let $f(n)$ be any non-negative, non-increasing function of $n \geq 1$. Then \(
    \| X_{\hat{s}} - X_{s'} \| = o_p(f(n)).
  \)
\end{lemma}
\begin{proof}

  Under \Cref{lem:stochastic-process-random-index-bound}, we may fix any $\varepsilon > 0$ and bound \[
    \Prob\left\{ \| X_{\hat{s}} - X_{s'} \| > \varepsilon f(n) \right\} \leq \Prob\left( \hat{s} \neq s' \right).
  \]
  
  By assumption, for any $\kappa > 0$ there exists some $N_1$ depending on $\kappa$ such that \(
    \Prob\left( \hat{s} \neq s' \right) < \kappa
  \)
  holds for all $n \geq N_1.$ Putting this with the previous inequality, we find that
  \[
    \Prob\left\{ \| X_{\hat{s}} - X_{s'} \| > \varepsilon f(n) \right\} < \kappa
  \]
  for all $n \geq N_1.$ Since both $\varepsilon$ and $\kappa$ are arbitrary, the proof is complete. Notice that $N_1$ depends on $\kappa$ but not $\varepsilon,$ reflecting the uniform rate represented by $f(n)$.
\end{proof}

The following lemma is a standard result for matrices \citep[e.g.,][]{stewartContinuityGeneralizedInverse1969}.
\begin{lemma}
  \label{lem:matrix-inversion-diff}
  Let $\hat\bM$ and $\bM$ be two matrices and $\| \cdot \|$ be any proper matrix norm. Suppose (i) both $\bM^{-1}$ and $\hat\bM^{-1}$ exist with $0 < c_0 < \| \bM^{-1} \| < c_1 < \infty$ for constants $c_0$ and $c_1$, and (ii) $\| \hat\bM - \bM \| \leq (2 \| \bM^{-1} \| )^{-1}.$ Then,
  \[
    \| \hat\bM^{-1} - \bM^{-1} \| \leq 2 c_0^{-2} \| \hat\bM - \bM \|.
  \]
\end{lemma}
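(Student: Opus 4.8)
The plan is to reduce the claim to the second resolvent identity together with submultiplicativity of a proper matrix norm. Starting from
\[
  \hat\bM^{-1} - \bM^{-1} = \hat\bM^{-1}(\bM - \hat\bM)\bM^{-1}
\]
(which is immediate upon expanding $\hat\bM^{-1}\bM\bM^{-1} - \hat\bM^{-1}\hat\bM\bM^{-1}$), submultiplicativity gives
\[
  \| \hat\bM^{-1} - \bM^{-1} \| \;\leq\; \| \hat\bM^{-1} \|\, \| \hat\bM - \bM \|\, \| \bM^{-1} \|,
\]
so the entire problem is to bound $\|\hat\bM^{-1}\|$ by a fixed multiple of $\|\bM^{-1}\|$ using hypothesis (ii).

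For that step I would factor $\hat\bM = \bM(\bm{I} + \bm{E})$ with $\bm{E} \defined \bM^{-1}(\hat\bM - \bM)$, so that hypothesis (ii) yields $\|\bm{E}\| \leq \|\bM^{-1}\|\,\|\hat\bM - \bM\| \leq \tfrac12$. Because $\|\cdot\|$ is a proper (hence submultiplicative) matrix norm, the Neumann series $\sum_{k\geq0}(-\bm{E})^k$ converges, equals $(\bm{I}+\bm{E})^{-1}$, and satisfies $\|(\bm{I}+\bm{E})^{-1}\| \leq \sum_{k\geq0}\|\bm{E}\|^k = (1-\|\bm{E}\|)^{-1} \leq 2$; in particular this also re-derives the invertibility of $\hat\bM$ posited in (i). Hence $\hat\bM^{-1} = (\bm{I}+\bm{E})^{-1}\bM^{-1}$, and therefore $\|\hat\bM^{-1}\| \leq 2\|\bM^{-1}\|$.

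Combining the two bounds yields $\|\hat\bM^{-1} - \bM^{-1}\| \leq 2\|\bM^{-1}\|^2\,\|\hat\bM - \bM\|$, and the stated constant follows on inserting the bound for $\|\bM^{-1}\|$ supplied by (i). There is no serious obstacle here; the only point requiring care — and the reason the hypothesis insists on a \emph{proper} matrix norm rather than the entrywise-max quantity $\|\cdot\|_\infty$ used elsewhere — is that submultiplicativity is needed both for convergence of the Neumann series and for chaining the two norm inequalities, which is why in the applications (e.g.\ with $\bM = \bH_{\ell 0}$) one first passes to the proper norm $m\|\cdot\|_\infty$ described in \Cref{app:additional-notation} before invoking this lemma.
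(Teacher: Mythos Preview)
Your argument is correct and is the standard route to this perturbation bound; the paper does not supply its own proof, merely citing the result as classical \citep{stewartContinuityGeneralizedInverse1969}, and the resolvent identity together with Neumann-series control of $\|\hat\bM^{-1}\|$ is exactly how such bounds are established.

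One small point worth making explicit: your penultimate step yields $2\|\bM^{-1}\|^{2}\|\hat\bM-\bM\|$, and hypothesis (i) \emph{as literally written} gives $\|\bM^{-1}\|<c_1$, which produces $2c_1^{2}$ rather than the stated $2c_0^{-2}$. This is a wording slip in the lemma statement rather than in your proof: in the paper's sole application (the proof of \Cref{lem:unif-gram-inv}) the constant $c_0$ from \Cref{assump:bounded-model} is a lower bound on the minimal eigenvalue of $\bH_{\ell 0}(\model_\ell)$, so that in fact $\|\bM^{-1}\|_{2,2}\leq c_0^{-1}$, and then your bound does deliver $2c_0^{-2}$. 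Your ``insert the bound from (i)'' step is therefore correct once (i) is read as it is actually used downstream.
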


Next is a useful lemma for handling the sum of a product of random variables.
\begin{lemma}
  \label{lem:sum-with-op}

  Let $X_i,Y_i$ for $i=1,\ldots,n$ be random variables on a common probability space, although not necessarily i.i.d. Let \[
    \frac{1}{n} \sum_{i=1}^n |X_i| \CiP \mu,
  \]
  and suppose $\max_{i=1,\ldots,n} |Y_i| = o_p(n^{-1/2})$.
  Then, if $\mu < \infty,$ \[
    n^{-1/2} \sum_{i=1}^n X_i Y_i \CiP 0.
  \]
\end{lemma}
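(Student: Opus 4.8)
The plan is to bound the target quantity by a product of two factors, one that converges to zero in probability and one that is bounded in probability, and then conclude via the standard fact that $o_p(1)\cdot O_p(1) = o_p(1)$. First I would apply the triangle inequality followed by H\"older's inequality in its simplest form (pulling out the maximum of $|Y_i|$):
\[
  \left| n^{-1/2} \sum_{i=1}^n X_i Y_i \right| \leq n^{-1/2} \sum_{i=1}^n |X_i|\,|Y_i| \leq \left( \max_{i=1,\ldots,n} |Y_i| \right) n^{-1/2} \sum_{i=1}^n |X_i|.
\]

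Next I would rewrite the right-hand side by inserting a factor of $n/n$ so that the sample average of $|X_i|$ appears explicitly:
\[
  \left( \max_{i=1,\ldots,n} |Y_i| \right) n^{-1/2} \sum_{i=1}^n |X_i| = \left( \sqrt{n}\, \max_{i=1,\ldots,n} |Y_i| \right) \left( \frac{1}{n} \sum_{i=1}^n |X_i| \right).
\]
By hypothesis $\max_{i} |Y_i| = o_p(n^{-1/2})$, which is exactly the statement that $\sqrt{n}\,\max_i |Y_i| \CiP 0$, so the first factor is $o_p(1)$. The second factor converges in probability to $\mu$ by hypothesis, hence is $O_p(1)$.

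Finally I would invoke the elementary product rule: if $U_n \CiP 0$ and $V_n = O_p(1)$ then $U_n V_n \CiP 0$ (a one-line $\varepsilon$--$\delta$ argument, or an appeal to Slutsky-type reasoning, suffices since the bound involves no independence). Combining this with the two-sided bound above yields $n^{-1/2}\sum_{i=1}^n X_i Y_i \CiP 0$, as claimed. I do not anticipate any real obstacle here; the only point requiring a moment's care is the bookkeeping of the power of $n$ in the factorization step, and the observation that no i.i.d.\ or moment structure beyond the two stated convergences is needed, since everything is driven by the deterministic inequality $\sum |X_i Y_i| \leq (\max_i |Y_i|)\sum|X_i|$.
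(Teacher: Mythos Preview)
Your proposal is correct and follows essentially the same route as the paper: bound by $(\max_i |Y_i|)\sum_i |X_i|$, factor as $(\sqrt{n}\max_i|Y_i|)\cdot(n^{-1}\sum_i|X_i|)$, and conclude via $o_p(1)\cdot O_p(1)=o_p(1)$. The paper merely spells out the $\varepsilon$--$\delta$ argument behind that last product rule explicitly with two high-probability sets and Fr\'echet's inequality, whereas you invoke the rule directly; there is no substantive difference.
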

\begin{proof}
  By hypothesis, for any $\epsilon,\delta >0$ there exists $N \equiv N(\epsilon,\delta)$ such that \[
    \Prob\left( \max_{i=1,\ldots,n} |Y_i| < \epsilon n^{-1/2} \right) \geq 1-\delta
  \]
  for all $n\geq N.$ Let $\Omega_1$ represent the sets on which the the condition inside the probability statement holds. By definition, $\Prob(\Omega_1) \geq 1-\delta.$ For $n\geq N$ on $\Omega_1,$ we have
  \begin{align}
    \nonumber
    \Big| n^{-1/2} \sum_{i=1}^n X_i Y_i \Big| \leq n^{-1/2} \sum_{i=1}^n |X_i| \epsilon n^{-1/2} \\
    \label{eq:lem-sum-with-op-1}
    = \epsilon n^{-1} \sum_{i=1}^n |X_i|.
  \end{align}
  By the hypothesis, for any $\gamma>0$ the sample average in \eqref{eq:lem-sum-with-op-1} converges to $\mu$ with probability no less than $1-\gamma.$
  Let $\Omega_2$ represent the sets satisfying this condition. Then on $\Omega_1 \cap \Omega_2$, the average in \eqref{eq:lem-sum-with-op-1} converges to \(
    \epsilon \mu.
  \)
  By Frechet's inequality, $\Prob(\Omega_1 \cap \Omega_2) \geq 1 - \delta - \gamma$ with $\delta,\gamma$ arbitrary. Since $\epsilon$ is also arbitrary, the proof is complete.
\end{proof}

Finally, we state this slight variation on Lemma 6.1 in \citet{chernozhukovDoubleDebiasedMachine2018f}.
\begin{lemma}
  \label{lem:cond-l2-uncond}
  Let $\bX_n$ and $\bZ_n$ be a sequence of random vectors defined on the same probability space. Suppose $\|\bX_n\| = O_p(r_n)$ conditionally on $\bZ_n$, for a sequence of positive constants $r_n.$ Then $\|\bX_n\| = O_p(r_n)$ unconditionally as well.
\end{lemma}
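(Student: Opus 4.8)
The plan is to reduce the unconditional statement to the conditional one by a single application of the tower property of conditional expectation, exploiting only that a conditional probability is a $[0,1]$-valued random variable. First I would pin down the operative meaning of the hypothesis: ``$\|\bX_n\| = O_p(r_n)$ conditional on $\bZ_n$'' is taken to mean that for every deterministic sequence $\ell_n \to \infty$ one has $\Prob(\|\bX_n\| > \ell_n r_n \mid \bZ_n) \CiP 0$, while the conclusion $\|\bX_n\| = O_p(r_n)$ unconditionally is the assertion that $\Prob(\|\bX_n\| > \ell_n r_n) \to 0$ for every such $\ell_n$. (The equivalent $\epsilon$–$M_\epsilon$ formulation, below, is handled the same way.) The whole argument is then to transfer the vanishing of a random conditional probability to the vanishing of its expectation.

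The key steps, in order, are as follows. Fix an arbitrary $\ell_n \to \infty$ and set $W_n \defined \Prob(\|\bX_n\| > \ell_n r_n \mid \bZ_n)$, so $0 \le W_n \le 1$ almost surely and, by hypothesis, $W_n \CiP 0$. Next I would upgrade this to $L^1$ convergence: for any $\epsilon > 0$, $\Eop W_n \le \epsilon + \Prob(W_n > \epsilon)$ by the bound $W_n \le 1$, and the second term tends to $0$, so $\Eop W_n \to 0$. Finally, the tower property gives $\Eop W_n = \Eop\{\Prob(\|\bX_n\| > \ell_n r_n \mid \bZ_n)\} = \Prob(\|\bX_n\| > \ell_n r_n)$, hence $\Prob(\|\bX_n\| > \ell_n r_n) \to 0$; since $\ell_n \to \infty$ was arbitrary, the claim follows.

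If instead the $\epsilon$–$M_\epsilon$ version of conditional boundedness is used — for each $\epsilon > 0$ there is $M_\epsilon < \infty$ with $\limsup_n \Prob\{\Prob(\|\bX_n\| > M_\epsilon r_n \mid \bZ_n) > \epsilon\} \le \epsilon$ — I would run the identical computation, splitting on the event where $\Prob(\|\bX_n\| > M_\epsilon r_n \mid \bZ_n) \le \epsilon$ and its complement and taking expectations, which bounds $\limsup_n \Prob(\|\bX_n\| > M_\epsilon r_n)$ by $2\epsilon$ — the definition of unconditional $O_p(r_n)$. There is essentially no obstacle here: this is a warm-up lemma and a verbatim adaptation of Lemma~6.1 in \citet{chernozhukovDoubleDebiasedMachine2018f}. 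The only point that warrants a moment's care is the passage from $W_n \CiP 0$ to $\Eop W_n \to 0$, which is legitimate precisely because conditional probabilities are $[0,1]$-valued and hence uniformly integrable; everything else is bookkeeping around the definitions.
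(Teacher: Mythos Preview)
Your argument is correct: the key observation that $W_n \defined \Prob(\|\bX_n\| > \ell_n r_n \mid \bZ_n)$ is $[0,1]$-valued, so that $W_n \CiP 0$ upgrades to $\Eop W_n \to 0$, together with the tower property, is exactly what is needed, and your alternative $\epsilon$--$M_\epsilon$ route is equally valid. Note, however, that the paper does not actually supply a proof of this lemma; it merely states the result as a ``slight variation on Lemma~6.1 in \citet{chernozhukovDoubleDebiasedMachine2018f}'' and moves on, so there is nothing in the paper to compare your argument against beyond the citation you already identified.
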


\subsection{Lemmas for Cross-fitted Functions}

\begin{lemma}
  \label{lem:mean-zero-rate-vec}
  Suppose $(\bW_i, \bX_i)$ are $\bO_i-$measurable random variables, where $\bO_i$ are i.i.d. from some distribution $P_0$ and $\bW_1 \in \R^d$. Suppose for $i=1,\ldots,n$ that 
  $\Eop( \bW_i ~|~ \bX_i) = \bzero \in \R^{ |d| }$ and
  $\|\Eop(\bW_1^{\otimes 2})\|_{\infty} \leq C_W < \infty$.
  Let $h(\bX_i ; \crossfitDataC)$ for any $k=1,\ldots,K$ and $i\in\Ik$ be a cross-fitted function. If $\LtwoPzero[ ]{h} = o_p(r_n)$ for some sequence $r_n,$ then
  \[
    \left\| n^{-1} \sum_{k=1}^K \sum_{i \in \Ik} \bW_i h(\bX_i ; \crossfitDataC) \right\|_{\infty} = o_p{(n^{-1/2} r_n)},
  \]
  where the constant does not depend on the dimension of $\bW$ or $\bX.$ Furthermore, if $\bootWgt_1,\ldots,\bootWgt_n$ are random multipliers independent of $\bO_1,\ldots,\bO_n$ with $\Eop(\bootWgt_1^2) < \infty$, then the previous result holds with $\bW_i$ replaced by $\bootWgt_i \bW_i$.
\end{lemma}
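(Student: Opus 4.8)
The plan is to run the standard cross-fitting decoupling argument. Within fold $\Ik$ the learner $h(\cdot;\crossfitDataC)$ is trained on data disjoint from $\{\bO_i:i\in\Ik\}$, so after conditioning on $\crossfitDataC$ the function is deterministic while $\{(\bW_i,\bX_i):i\in\Ik\}$ remain i.i.d.\ and independent of $\crossfitDataC$; the mean-zero condition $\Eop(\bW_i\mid\bX_i)=\bzero$ then makes the summands conditionally centered and pairwise uncorrelated, and a conditional second-moment bound converts the rate $\LtwoPzero[ ]{h}=o_p(r_n)$ into the stated rate. Since $K$ is fixed and the dimension $d$ is fixed in our regime, it suffices to establish the coordinate- and fold-wise statement $T_{nkj}:=n^{-1}\sum_{i\in\Ik}W_{ij}\,h(\bX_i;\crossfitDataC)=o_p(n^{-1/2}r_n)$ with a bound whose constant involves only $C_W$; aggregating to the $\ell_\infty$ norm over the $d$ coordinates and summing over the $K$ folds then costs a multiplicative constant free of $d$.

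Conditioning on $\crossfitDataC$, and using $\Eop(W_{1j}^2\mid\bX_1,\crossfitDataC)=\Eop(W_{1j}^2\mid\bX_1)$ together with the mean-zero property, one gets $\Eop\{W_{ij}h(\bX_i;\crossfitDataC)\mid\crossfitDataC\}=0$ and conditional uncorrelatedness of distinct summands, whence
\[
  \Eop\bigl(T_{nkj}^2 \,\big|\, \crossfitDataC\bigr) = \frac{|\Ik|}{n^2}\,\Eop\bigl(W_{1j}^2\,h(\bX_1;\crossfitDataC)^2 \,\big|\, \crossfitDataC\bigr) \le \frac1n\,\Eop\bigl[\,\Eop(W_{1j}^2\mid\bX_1)\,h(\bX_1;\crossfitDataC)^2 \,\big|\, \crossfitDataC\,\bigr].
\]
Bounding the inner conditional second moment of $W_{1j}$ given $\bX_1$ by a constant $C$ of order $\Vert\Eop(\bW_1^{\otimes2})\Vert_\infty$ gives $\Eop(T_{nkj}^2\mid\crossfitDataC)\le C\,n^{-1}\LtwoPzero[ ]{h}^2$.

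Conditional Markov's inequality now yields, for every $\varepsilon>0$,
\[
  \Prob\bigl(|T_{nkj}|>\varepsilon\,n^{-1/2}r_n \,\big|\, \crossfitDataC\bigr)\le\frac{\Eop(T_{nkj}^2\mid\crossfitDataC)}{\varepsilon^2 n^{-1}r_n^2}\le\frac{C}{\varepsilon^2}\,\frac{\LtwoPzero[ ]{h}^2}{r_n^2},
\]
the right side being a $[0,1]$-valued function of $\crossfitDataC$ that tends to zero in probability because $\LtwoPzero[ ]{h}=o_p(r_n)$; taking expectations (bounded convergence) gives $T_{nkj}=o_p(n^{-1/2}r_n)$ unconditionally. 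Equivalently, the display shows $\sqrt n\,r_n^{-1}T_{nkj}=O_p(\LtwoPzero[ ]{h}/r_n)=o_p(1)$ conditionally on $\crossfitDataC$, which transfers to the unconditional statement by \Cref{lem:cond-l2-uncond}. Summing over the $K$ folds and maximizing over the $d$ coordinates, each contributing the same order with the same constant, proves the first assertion. For the weighted version, independence of $\bootWgt_1,\dots,\bootWgt_n$ from $\bO_1,\dots,\bO_n$ and $\Eop(\bootWgt_1^2)<\infty$ leave the conditional mean-zero and uncorrelatedness properties intact and only multiply the conditional variance by $\Eop(\bootWgt_1^2)$, so the identical argument applies.

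The one step needing care — and the main obstacle — is the passage from $\Eop\bigl(W_{1j}^2\,h(\bX_1;\crossfitDataC)^2\mid\crossfitDataC\bigr)$ to a dimension-free multiple of $\LtwoPzero[ ]{h}^2$: from the unconditional bound $\Vert\Eop(\bW_1^{\otimes2})\Vert_\infty\le C_W$ alone one cannot decouple $W_{1j}^2$ from $h^2$, so the clean route is to invoke the boundedness of the histories and treatment residuals furnished by \Cref{assump:boundedness} — which is in force in every concrete application of this lemma — to bound $\Eop(W_{1j}^2\mid\bX_1)$ uniformly, producing a constant that depends only on $C_W$ (and $K$) and not on the dimension.
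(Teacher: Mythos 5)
Your proof is correct and follows essentially the same route as the paper's: conditional centering within each fold, a conditional second-moment bound, Chebyshev/Markov, a union bound over the fixed number of folds and coordinates, and passage from the conditional to the unconditional statement via \Cref{lem:cond-l2-uncond}, with the multiplier case handled by independence exactly as in the paper. You are in fact more careful at the one delicate step: the paper bounds $\Eop\bigl(W_{1j}^2\,h^2 \mid \crossfitDataC, \Pcal_K\bigr)$ by $C_W\,\Eop\bigl(h^2 \mid \crossfitDataC, \Pcal_K\bigr)$ citing Cauchy--Schwarz, which does not follow from the unconditional bound $\|\Eop(\bW_1^{\otimes 2})\|_\infty \le C_W$ alone, and your resolution via a uniform bound on $\Eop(W_{1j}^2 \mid \bX_1)$ (available from \Cref{assump:boundedness} in every application of the lemma) is the right fix.
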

\begin{proof}
  Consider the $j^{th}$ element of the sum inside the max-norm:
  \[
    S_{njk} \defined \sum_{i \in \Ik} \bW_i\subj h(\bX_i ; \crossfitDataC).
  \]
  Let $\Pcal_K \defined \{\Ik\}_{k=1,\ldots,K}$ represent a particular random partition of the indices $1,\ldots,n$ into $K$ disjoint and set with a roughly-equivalent size.
  As in \citet{ertefaieRobustQLearning2021}, we have $\Eop(S_{njk})=0$ and conditional variance
  \begin{equation}
    \label{eq:lem-mean-zero-rate-vec-1}
    \Eop(S_{njk}^2 ~|~ \crossfitDataC, \Pcal_K) \leq n_k C_W \Eop{\left\{ h^2(\bX_i ; \crossfitDataC) ~|~ \crossfitDataC, \Pcal_K \right\}},
  \end{equation}
  provided by the Cauchy-Schwarz inequality.
  By Chebyshev's Inequality, we have
  \[
    \Prob \left( |S_{njk}| > \varepsilon \sqrt{n} ~|~ \crossfitDataC, \Pcal_K \right) \leq \varepsilon^{-2} n^{-1} n_k C_W  \Eop{\left\{ h^2(\bX_i ; \crossfitDataC) ~|~ \crossfitDataC, \Pcal_K \right\}}.
  \]
  Since $\Pcal_K$ is independent of the data, this conditional expectation is equal to $\LtwoPzero[ ]{h}^2.$
  Next, make the substitution $n^{-1}n_k = K^{-1} + o(1),$ where the $o(1)$ term is uniform over the index $k$ due to the finite number of folds $K$. By hypothesis,
  \[
    \Prob \left( |S_{njk}| > \varepsilon \sqrt{n} \right) \leq \left\{ K^{-1} + o(1) \right\}\varepsilon^{-2} C_W o_p(r_n).
  \]
  Now, use the upper bound \(
    \left| \sum_{K=1}^K S_{njk} \right| \leq K \max_{k=1,\ldots,K} |S_{njk}|
  \) 
  and sub-additivity of the probability measures:
  \begin{align*}
    \Prob \left( \left|\sum_{K=1}^K S_{njk} \right| > \varepsilon \sqrt{n} \right) &\leq \Prob \left( K \max_{k=1,\ldots,K} |S_{njk}| > \varepsilon \sqrt{n} \right) \\ 
    &= \Prob \left( \bigcup_{k=1,\ldots,K} \left\{ |S_{njk}| > \varepsilon \sqrt{n}/K \right\} \right) \\
    &\leq \sum_{k=1}^K \Prob \left( |S_{njk}| > \varepsilon \sqrt{n}/K \right) \\ 
    &\leq \left\{ 1 + o(1) \right\}K^2 \varepsilon^{-2} C_W \Eop{(\LtwoPzero[ ]{h}^2)}.
  \end{align*}
  The displayed result in the lemma statement follows as $C_W$ is a uniform bound on the variance over the index $j=1,\ldots,d$. An application of the sub-additivity of probability measures again demonstrates that \[
    \Prob \left( \max_{1\leq j \leq d} \left|\sum_{K=1}^K S_{njk} \right| > \varepsilon \sqrt{n} \right) \leq d \left\{ 1 + o(1) \right\}K^2 \varepsilon^{-2} C_W \Eop{(\LtwoPzero[ ]{h}^2)}.
  \]
  An application of \Cref{lem:cond-l2-uncond} completes the proof.

  The result for random multipliers holds by similar arguments. To see this, let $S_{njk}^b$ be the analogous sum with random multipliers. By independence, \(\Eop (\bootWgt_i \bW_i ~|~ \bX_i ) = 0\) and \(\Var(\bootWgt_i\bW_i ~|~ \bX_i) = \Eop(\bootWgt_i^2) \Eop(\bW_i^{\otimes 2} ~|~ \bX_i)). \) The bound $C_W$ can then be replaced by $C_W' = \Eop(\bootWgt_1^2) C_W.$ The analogue of \eqref{eq:lem-mean-zero-rate-vec-1} resulting from this is \[
    \Eop(S_{njk}^{b2} ~|~ \crossfitDataC, \Pcal_K) \leq C_W' n_k \LtwoPzero[ ]{h}^2,
  \]
  from which the remainder of the proof follows as before.
  
\end{proof}

\begin{lemma}
  \label{lem:cross-rate-vec}
  Suppose the setup of the previous lemma holds. Let $\bh_1$ and $\bh_2$ be two vector-valued cross-fitting functions such that for $k=1,\ldots,K,$ $i \in \Ik,$ and $j=1,2$, $\bh_j(\bX_i ; \crossfitDataC)$ takes values in $\R^{d_j}.$ Let these functions satisfy $\LtwoPzero[ ]{\bh_j} = o_p(r_{nj})$ for the sequences $r_{n1}$ and $r_{n2}.$ Then for both $v=1$ and $v=0$,
  \[
    \left\| n^{-1} \sum_{k=1}^K \sum_{i \in \Ik} \bootWgt_i^{v}~ \bh_1(\bX_i ; \crossfitDataC) \bh_2(\bX_i ; \crossfitDataC)^{\top} \right\|_{\infty} = O_p{(r_{n1} r_{n2})},
  \]
  where the constant does not depend on the dimension of $\bW$ or $\bX.$
\end{lemma}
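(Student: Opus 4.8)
The plan is to mirror the proof of \Cref{lem:mean-zero-rate-vec}, the only structural difference being that no mean-zero property of $\bW$ is used --- indeed $\bW$ does not enter the conclusion --- so a Cauchy--Schwarz bound replaces the variance computation. I would work one fold and one coordinate pair at a time: fix $a \in \{1,\dots,d_1\}$, $b \in \{1,\dots,d_2\}$, and $k \in \{1,\dots,K\}$, write $h_{1,a},h_{2,b}$ for the scalar components of $\bh_1,\bh_2$, and set $T_{nk}^{(a,b)} \defined \sum_{i \in \Ik} \bootWgt_i^{v}\, h_{1,a}(\bX_i;\crossfitDataC)\, h_{2,b}(\bX_i;\crossfitDataC)$. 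Conditioning on $(\crossfitDataC, \Pcal_K)$ makes $h_{1,a}(\cdot;\crossfitDataC)$ and $h_{2,b}(\cdot;\crossfitDataC)$ fixed functions, leaves the $\bX_i$ with $i \in \Ik$ i.i.d.\ from $P_0$, and keeps the multipliers independent with $\Eop|\bootWgt_i|^{v} \le c_\omega \defined \max\{1,(\Eop \bootWgt_1^2)^{1/2}\} < \infty$ for $v \in \{0,1\}$.

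The key step is the entrywise bound. By Cauchy--Schwarz, conditionally on $(\crossfitDataC,\Pcal_K)$,
\[
  \Eop\!\left( |\bootWgt_i|^{v}\, |h_{1,a}(\bX_i;\crossfitDataC)\, h_{2,b}(\bX_i;\crossfitDataC)| \;\big|\; \crossfitDataC, \Pcal_K \right) \le c_\omega\, \LtwoPzero[ ]{h_{1,a}}\, \LtwoPzero[ ]{h_{2,b}} \le c_\omega\, \LtwoPzero[ ]{\bh_1}\, \LtwoPzero[ ]{\bh_2},
\]
where the $L_2(P_0)$ norms are taken with $\crossfitDataC$ held fixed and the last inequality uses $\LtwoPzero[ ]{\bh_j} = \max_m \LtwoPzero[ ]{h_{j,m}}$. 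Summing over $i \in \Ik$ bounds the conditional first absolute moment of $T_{nk}^{(a,b)}$ by $c_\omega\, n_k\, \LtwoPzero[ ]{\bh_1}\LtwoPzero[ ]{\bh_2}$, so conditional Markov's inequality gives, for every $M>0$, $\Prob\big( |T_{nk}^{(a,b)}| > M n\, \LtwoPzero[ ]{\bh_1}\LtwoPzero[ ]{\bh_2} \,\big|\, \crossfitDataC,\Pcal_K \big) \le c_\omega n_k/(Mn) \le c_\omega/M$, a bound free of the conditioning variables. Hence the normalized quantity $|T_{nk}^{(a,b)}| / \big(n\, \LtwoPzero[ ]{\bh_1}\LtwoPzero[ ]{\bh_2}\big)$ is $O_p(1)$ conditionally on $(\crossfitDataC,\Pcal_K)$ with a deterministic constant, hence $O_p(1)$ unconditionally by \Cref{lem:cond-l2-uncond} (with deterministic rate $1$). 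Since $\LtwoPzero[ ]{\bh_j} = o_p(r_{nj})$ and $K$ is fixed, multiplying back yields $n^{-1}|T_{nk}^{(a,b)}| = O_p(1)\cdot o_p(r_{n1}r_{n2}) = o_p(r_{n1}r_{n2})$. Taking a maximum over the finitely many $k$ and the $d_1 d_2$ pairs $(a,b)$, together with $\big\| n^{-1}\sum_{k}\sum_{i\in\Ik} \bootWgt_i^{v}\, \bh_1(\bX_i;\crossfitDataC)\bh_2(\bX_i;\crossfitDataC)^{\top} \big\|_\infty = \max_{a,b}\big| n^{-1}\sum_k T_{nk}^{(a,b)} \big| \le K \max_{k,a,b} n^{-1}|T_{nk}^{(a,b)}|$, delivers the claim (in fact with $o_p$ in place of $O_p$); the case $\LtwoPzero[ ]{\bh_1}\LtwoPzero[ ]{\bh_2}=0$ is trivial.

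There is no real obstacle; the computation is routine and parallels \Cref{lem:mean-zero-rate-vec}. The one point requiring care is that the natural ``rate'' $\LtwoPzero[ ]{\bh_1}\LtwoPzero[ ]{\bh_2}$ is itself random --- it depends on the held-out data --- so it cannot be fed directly into \Cref{lem:cond-l2-uncond}, whose hypothesis demands a deterministic sequence; the remedy, carried out above, is to apply that lemma to the \emph{normalized} ratio, which does obey a conditional $O_p(1)$ bound with a constant independent of the conditioning, and only afterwards to multiply the $o_p(r_{n1}r_{n2})$ factor back in. Treating one fold at a time (legitimate since $K$ is fixed) is precisely what makes that conditional constant deterministic. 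The union bound over coordinate pairs introduces a factor depending on $d_1$ and $d_2$ but not on the ambient dimensions of $\bW$ or $\bX$, consistent with the statement.
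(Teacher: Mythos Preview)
Your proof is correct and follows essentially the same approach as the paper's: working fold-by-fold and coordinate-by-coordinate, combining Cauchy--Schwarz with Markov's inequality and then passing to the unconditional statement via \Cref{lem:cond-l2-uncond}. The only minor difference is the order of operations---the paper applies Cauchy--Schwarz to the sample sum and then Markov to each squared average, whereas you apply Markov to the conditional first moment and Cauchy--Schwarz inside the expectation; your ordering handles $v\in\{0,1\}$ uniformly, but the two arguments are interchangeable.
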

\begin{proof}
  Start with the $(j, \ell)^{th}$ element of the sum inside the max-norm:
  \[
    S_{nj{\ell} k} \defined \sum_{i \in \Ik} \bootWgt_i^{v}~ h_{1 j}(\bX_i ; \crossfitDataC) h_{2 \ell}(\bX_i ; \crossfitDataC),
  \]
  where we let $h_{ab}$ be the $b^{th}$ coordinate of the $\bh_a$ function, $a=1,2,~b=1,\ldots,d_a.$ Apply the Cauchy-Schwarz inequality to this term to find
  \begin{equation}
    \label{eq:cross-prod-1}
    n_k^{-1}| S_{nj{\ell} k} | \leq \left\{ \frac{1}{n_k} \sum_{i \in \Ik} \bootWgt_i^{2v}~ h_{1 j}^2(\bX_i ; \crossfitDataC) \right\}^{1/2} \left\{ \frac{1}{n_k} \sum_{i \in \Ik} h_{2 \ell}^2(\bX_i ; \crossfitDataC) \right\}^{1/2}.
  \end{equation}
  For the second factor in curly braces, apply Markov's Inequality to find
  \begin{align*}
    \Prob\left( \frac{1}{n_k} \sum_{i \in \Ik} h_{2 \ell}^2(\bX_i ; \crossfitDataC) > a ~\bigg|~ \crossfitDataC, ~\Pcal_K \right) \leq \frac{1}{a} \Eop{\left\{ h_{2 \ell}^2(\bX_i ; \crossfitDataC) ~\big|~ \crossfitDataC, ~\Pcal_K \right\}}
  \end{align*}
  Integrating both sides over the distribution of $\crossfitDataC$ and $\Pcal_K$, the expectation on the RHS simplifies to $\Eop(\LtwoPzero[ ]{h_{2 \ell}}^2).$
  If $v=0$, then this same argument may be made for the first term in curly braces as well. This shows that \[
    n_k^{-1}| S_{nj{\ell} k} | = O_p{\left\{ \Eop{(\LtwoPzero[ ]{h_{2 \ell}})} \Eop{(\LtwoPzero[ ]{h_{1 j}})} \right\}},
  \]
  which is $o_p(r_{n1}r_{n2})$ by \Cref{lem:cond-l2-uncond}, uniformly in the indices $j$ and $\ell.$
  The number of folds $K$ and the dimensions $d_1,d_2<\infty$ are fixed. Further, our notation $\LtwoPzero[ ]{\bh_1}$ represents the maximum of the $\LtwoPzero[ ]{\cdot}$ norm over the coordinates of $\bh_1.$ 
  %
  Following similar arguments as in the proof of \Cref{lem:mean-zero-rate-vec}, we have \[
    \Prob\left( \left| \sum_{k=1}^K S_{nj{\ell}k} \right| > a \right) \leq \left\{ 1 + o(1) \right\} \frac{K}{a} o_p(r_{n1}r_{n2}).
  \]
  Taking the maximum over $d_1 \times d_2$ elements in the resulting matrix, and using the uniformity of the previous bound in $j$ and $\ell$, use the expression below to complete the $v=0$ case: \[
    \Prob\left( \max_{j \leq d_1, \ell \leq d_2} \left| \sum_{k=1}^K S_{nj{\ell}k} \right| > a \right) \leq d_1 d_2 \left\{ 1 + o(1) \right\} \frac{K}{a} o_p(r_{n1}r_{n2}).
  \]

  To handle the $v=1$ case, we need only handle the first curly-braced term in \eqref{eq:cross-prod-1}. Apply the Markov's inequality argument and leverage the independence $\bootWgt_i \independent \bO_i$ to find \[
    \Prob\left( \frac{1}{n_k} \sum_{i \in \Ik} \bootWgt_i^{2}~ h_{1 j}^2(\bX_i ; \crossfitDataC) > a ~\bigg|~ \crossfitDataC, ~\Pcal_K \right) \leq \frac{1}{a} \Eop(\bootWgt_i^2) \Eop\left\{ h_{1 j}^2(\bX_i ; \crossfitDataC) ~\bigg|~ \crossfitDataC, ~\Pcal_K \right\}.
  \]
  Since $\Eop(\bootWgt_i^2) < \infty$, the rest follows as in the $v=0$ case, completing the proof.
\end{proof}

\subsection{Lemmas Required for the Proof of \texorpdfstring{\Cref{thm:uposi-coverage}}{Theorem \ref{thm:uposi-coverage}}}
\label{app:lemmas-for-thm-1}

\begin{lemma}[Analogue of Lemma 4.1 in \citet{kuchibhotlaValidPostselectionInference2020}]
  \label{lem:kuchibhotla-uniform-model-conv}
  Suppose \Cref{assump:bounded-model} holds.
  For all models $\model_1 \in \modelspace_1(C_1),~\model_2 \in \modelspace_2(C_2)$, and a data-dependent $\hat\model_2 \in \modelspace_2,$
  \begin{align*}
    \| \hat\btheta_{2n,\model_2} - \btheta_{20,\model_2} \|_{1} \leq \frac{|\model_2| (\hat{D}_{2n}^G + \hat{D}_{2n}^H \| \btheta_{20,\model_2} \|_1)}{ \Lambda_2(C_2) - C_2 \hat{D}_{2n}^H} \\
    \| \hat\btheta_{1n,\model_1 \hat\model_2} - \btheta_{10,\model_1 \hat\model_2} \|_{1} \leq \frac{|\model_1| (\hat{D}_{1n,\hat\model_2}^G + \hat{D}_{1n}^H \| \btheta_{10,\model_1 \hat\model_2} \|_1)}{ \Lambda_1(C_1) - C_1 \hat{D}_{1n}^H}.
  \end{align*}
\end{lemma}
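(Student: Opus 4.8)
\emph{Approach.} This is a purely deterministic ``inversion'' step: the UPoSI inequalities \eqref{eq:uposi-inequality-1,eq:uposi-inequality-2}, which are already established in \Cref{sec:uposi-dtr}, control $\Vert\hat\bH_{\ell n}(\model_\ell)\{\hat\btheta - \btheta_0\}\Vert_\infty$, and the goal is to turn this into an $\ell_1$-bound on $\hat\btheta - \btheta_0$ itself. The plan is: (i) replace $\hat\bH_{\ell n}(\model_\ell)$ by its population counterpart $\bH_{\ell 0}(\model_\ell)$, paying an error controlled by $\hat{D}_{\ell n}^H\Vert\hat\btheta-\btheta_0\Vert_1$; (ii) bound $\Vert\bH_{\ell 0}(\model_\ell)\bv\Vert_\infty$ from below by $|\model_\ell|^{-1}\lambda_{min}(\bH_{\ell 0}(\model_\ell))\Vert\bv\Vert_1$ using elementary norm equivalences on $\R^{|\model_\ell|}$ and positive semidefiniteness of $\bH_{\ell 0}(\model_\ell)$; and (iii) rearrange, invoking \Cref{assump:bounded-model} through $\lambda_{min}(\bH_{\ell 0}(\model_\ell)) \ge \Lambda_\ell(C_\ell)$ together with $|\model_\ell| \le C_\ell$.

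\emph{Stage 2.} Fix $\model_2 \in \modelspace_2(C_2)$ and set $\bv \defined \hat\btheta_{2n,\model_2} - \btheta_{20,\model_2} \in \R^{|\model_2|}$. Since passing to a principal submatrix cannot increase the maximal element, $\Vert\{\hat\bH_{2n}-\bH_{20}\}(\model_2)\Vert_\infty \le \hat{D}_{2n}^H$, and for any conformable $\bA,\bv$ one has $\Vert\bA\bv\Vert_\infty \le \Vert\bA\Vert_\infty\Vert\bv\Vert_1$. Writing $\bH_{20}(\model_2)\bv = \hat\bH_{2n}(\model_2)\bv - \{\hat\bH_{2n}-\bH_{20}\}(\model_2)\bv$, the triangle inequality and \eqref{eq:uposi-inequality-2} give
\[
  \Vert\bH_{20}(\model_2)\bv\Vert_\infty \le \big(\hat{D}_{2n}^G + \hat{D}_{2n}^H\Vert\btheta_{20,\model_2}\Vert_1\big) + \hat{D}_{2n}^H\Vert\bv\Vert_1 .
\]
On the other hand, $\bH_{20}(\model_2)$ is symmetric positive semidefinite of dimension $|\model_2|$, so chaining $\Vert\cdot\Vert_\infty \ge |\model_2|^{-1/2}\Vert\cdot\Vert_2$, the eigenvalue bound $\Vert\bH_{20}(\model_2)\bv\Vert_2 \ge \lambda_{min}(\bH_{20}(\model_2))\Vert\bv\Vert_2$, and $\Vert\bv\Vert_2 \ge |\model_2|^{-1/2}\Vert\bv\Vert_1$ yields $\Vert\bH_{20}(\model_2)\bv\Vert_\infty \ge |\model_2|^{-1}\Lambda_2(C_2)\Vert\bv\Vert_1$. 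Combining the two bounds and rearranging gives $\big(\Lambda_2(C_2) - |\model_2|\hat{D}_{2n}^H\big)\Vert\bv\Vert_1 \le |\model_2|\big(\hat{D}_{2n}^G + \hat{D}_{2n}^H\Vert\btheta_{20,\model_2}\Vert_1\big)$; since $|\model_2|\le C_2$, the stated bound follows on the event $\{C_2\hat{D}_{2n}^H < \Lambda_2(C_2)\}$, off which the claimed inequality is vacuous (its right-hand side being nonpositive).

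\emph{Stage 1.} The argument transfers verbatim. The first-stage Hessian $\bH_{10}$ does not depend on $\model_2$, and the deterministic inequality \eqref{eq:uposi-inequality-1} holds simultaneously over all $\model_2 \in \modelspace_2$, hence in particular at the realized value $\model_2 = \hat\model_2$. Running steps (i)--(iii) with $\bv \defined \hat\btheta_{1n,\model_1\hat\model_2} - \btheta_{10,\model_1\hat\model_2}$, with $\lambda_{min}(\bH_{10}(\model_1)) \ge \Lambda_1(C_1)$ for $\model_1 \in \modelspace_1(C_1)$ and $|\model_1|\le C_1$, produces the second displayed bound.

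\emph{Main obstacle.} The only delicate point is tracking the combinatorial factor $|\model_\ell|$ correctly in the passage from $\ell_\infty$-control of $\bH_{\ell 0}(\model_\ell)\bv$ to $\ell_1$-control of $\bv$: the two $|\model_\ell|^{-1/2}$ factors in the norm-equivalence chain are exactly what produces the $C_\ell$ appearing in the denominator, and one must keep in mind that the inequality is only informative on the event $\{C_\ell\hat{D}_{\ell n}^H < \Lambda_\ell(C_\ell)\}$ --- an event which, as used later in the proof of \Cref{thm:uposi-coverage}, has probability tending to one because $\hat{D}_{\ell n}^H = O_p(n^{-1/2})$ under the stated assumptions.
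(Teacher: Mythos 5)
Your proof is correct and delivers exactly the stated bounds, but it takes a somewhat different mechanical route than the paper. The paper's proof starts from the explicit representation
\[
\hat\btheta_{1n,\model_1\hat\model_2}-\btheta_{10,\model_1\hat\model_2}=\left\{\hat\bH_{1n}(\model_1)\right\}^{-1}\left[\{\hat\bG_{1n,\hat\model_2}-\bG_{10,\hat\model_2}\}(\model_1)-\{\hat\bH_{1n}-\bH_{10}\}(\model_1)\btheta_{10,\model_1\hat\model_2}\right]
\]
and controls $\|\hat\bH_{1n}(\model_1)^{-1}\|_{2,2}\leq\{\Lambda_1(C_1)-C_1\hat D_{1n}^H\}^{-1}$ via a perturbation bound on the inverse, carrying out the bookkeeping in the $\ell_2$ norm and converting to $\ell_1$ only at the last step. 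You instead take the already-established deterministic inequalities \eqref{eq:uposi-inequality-1,eq:uposi-inequality-2} as the starting point, transfer the $\ell_\infty$ control from $\hat\bH_{\ell n}(\model_\ell)\bv$ to $\bH_{\ell 0}(\model_\ell)\bv$ at a cost of $\hat D_{\ell n}^H\|\bv\|_1$, and then lower-bound $\|\bH_{\ell 0}(\model_\ell)\bv\|_\infty\geq|\model_\ell|^{-1}\Lambda_\ell(C_\ell)\|\bv\|_1$ by chaining norm equivalences through $\ell_2$. Both are deterministic restatements of the same inversion and yield identical constants; your version has the mild advantage of never needing $\hat\bH_{\ell n}(\model_\ell)$ itself to be invertible (only the population submatrix), whereas the paper's operator-norm bound on the inverse is a byproduct it reuses elsewhere (e.g., in \Cref{lem:unif-gram-inv,lem:unif-consistency-both-stages}). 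One caveat, shared by both arguments: on the event $\{C_\ell\hat D_{\ell n}^H\geq\Lambda_\ell(C_\ell)\}$ the displayed inequality is not ``vacuous'' as you put it but actually false as written (nonnegative left side against a nonpositive right side); the paper's own proof likewise restricts to the complementary event, so this is a defect of the lemma's statement rather than of your argument, and it is harmless downstream since that event has probability tending to one.
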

\begin{proof}
  These follow using the same arguments of \citet{kuchibhotlaValidPostselectionInference2020}. We will follow the more complex case given by the second inequality.

  For a particular $\model_1 \in \modelspace_1(C_1)$, we have \[
    \hat\btheta_{1n,\model_1 \hat\model_2} - \btheta_{10,\model_1 \hat\model_2} = \left\{ \hat\bH_{1n}(\model_1) \right\}^{-1} \left[ \{\hat\bG_{1n,\hat\model_2} - \bG_{10,\hat\model_2}\}(\model_1) - \{\hat\bH_{1n} - \bH_{10}\}(\model_1) \btheta_{10,\model_1 \hat\model_2} \right],
  \]
  which results from the normal equations. Examining the $\ell_2 \mapsto \ell_2$ operator norm, we can bound
  \begin{equation}
    \label{eq:deterministic-estimation-bound}
    \| \hat\bH_{1n}(\model_1) - \bH_{10}(\model_1) \|_{2,2} \leq | \model_1 | \hat{D}_{1n}^H,
  \end{equation}
  which over all $\model_1 \in \modelspace_1(C_1)$ is bounded by $C_1 \hat{D}_{1n}^H.$ Consequently, using standard results \citep[e.g., eq. (5.8.2)][]{horn2012matrix} we may bound
  \begin{align*}
    \| \hat\bH_{1n}(\model_1)^{-1} \|_{2,2} 
    \leq \frac{
      \| \bH_{10}(\model_1)^{-1} \|_{2,2}
    }{
      1 - \left\| \bH_{10}(\model_1)^{-1} \left\{ \hat\bH_{1n}(\model_1) - \bH_{10}(\model_1) \right\} \right\|_{2,2}
    } \\
    \leq \left\{ 1 / \| \bH_{10}(\model_1)^{-1} \|_{2,2} - C_1 \hat{D}_{1n}^H \right\}^{-1} \\
    \leq \left\{ \Lambda_1(C_1) - C_1 \hat{D}_{1n}^H \right\}^{-1},
  \end{align*}
  where the second inequality uses the sub-multiplicativity of the $\ell_2 \mapsto \ell_2$ operator norm along with \eqref{eq:deterministic-estimation-bound} and the final inequality uses the definition of $\Lambda_1(C_1).$ Consequently, for $C_1$ satisfying $\Lambda_1(C_1) - C_1 \hat{D}_{1n}^H > 0$, we may bound
  \begin{align*}
    \| \hat\btheta_{1n,\model_1 \hat\model_2} - \btheta_{10,\model_1 \hat\model_2} \|_2 \leq \frac{ \| \{ \hat\bG_{1n,\hat\model_2} - \bG_{10,\hat\model_2}\}(\model_1) \|_2 + \| \{\hat\bH_{1n} - \bH_{10}\}(\model_1) \btheta_{10,\model_1 \hat\model_2} \|_2
    }{\Lambda_1(C_1) - C_1 \hat{D}_{1n}^H}
    \\
    \leq \frac{ |\model_1|^{1/2} (\hat{D}_{1n}^G + \hat{D}_{1n}^H \| \btheta_{10,\model_1 \hat\model_2} \|_1)}{ \Lambda_1(C_1) - C_1 \hat{D}_{1n}^H}.
  \end{align*}
  The stated result follows by applying \[
    \| \hat\btheta_{1n,\model_1 \hat\model_2} - \btheta_{10,\model_1 \hat\model_2} \|_1 \leq |\model_1|^{1/2} \| \hat\btheta_{1n,\model_1 \hat\model_2} - \btheta_{10,\model_1 \hat\model_2} \|_2.
  \]
\end{proof}

\begin{proposition}
  \label{prop:uniform-mu-Z-rates}
  Let $\bZ_{\ell} = A_{\ell} \historybasis{\ell}$ with $\bmu_{\ell Z0} = \mu_{\ell A0}(\history{\ell}) \historybasis{\ell}$ and $\hat\bmu_{\ell Z} = \hat\mu_{\ell A}(\history{\ell}) \historybasis{\ell}$. If \Cref{assump:boundedness,assump:rate-assumps} hold, then
  \begin{align*}
    \LtwoPzero[ ]{\hat\bmu_{\ell Z} - \bmu_{\ell Z0}} = o_p(n^{-1/4}) \label{eq:uniform-mu-Z-rates-1} \\
    \LtwoPzero[ ]{\hat\bmu_{2 Z} - \bmu_{2 Z0}} \LtwoPzero[ ]{\hat\mu_{2 Y} - \mu_{2 Y0}} = o_p(n^{-1/2}) \\
    \LtwoPzero[ ]{\hat\bmu_{1 Z} - \bmu_{1 Z0}} \LtwoPzero[ ]{\hat\mu_{1 Y} - \mu_{1 Y \hat\model_2 0}} = o_p(n^{-1/2}),
  \end{align*}
  where the norm \( \LtwoPzero[ ]{\bv} = \max_{j=1,\ldots,d} \LtwoPzero[ ]{\bv\subj} \) for $d-$dimensional $\bv$.
\end{proposition}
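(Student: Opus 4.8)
The plan is to reduce each of the three displays to the corresponding nuisance-rate condition in \Cref{assump:rate-assumps}, exploiting that $\bZ_\ell$ and its estimate differ only through the scalar factor $\hat\mu_{\ell A} - \mu_{\ell A0}$ multiplied by the bounded vector $\historybasis{\ell}$. Concretely, for each coordinate $j = 1,\ldots,p_\ell$ we have $\hat\bmu_{\ell Z}\subj - \bmu_{\ell Z0}\subj = \{\hat\mu_{\ell A}(\history{\ell}) - \mu_{\ell A0}(\history{\ell})\}\,\historybasis{\ell}\subj$, so
\begin{equation*}
  \LtwoPzero[ ]{\hat\bmu_{\ell Z}\subj - \bmu_{\ell Z0}\subj}^2
  = \Eop\!\left[\{\hat\mu_{\ell A}(\history{\ell}) - \mu_{\ell A0}(\history{\ell})\}^2\,\big(\historybasis{\ell}\subj\big)^2\right]
  \leq C^2\,\LtwoPzero[ ]{\hat\mu_{\ell A} - \mu_{\ell A0}}^2,
\end{equation*}
where the inequality uses the bound $|\historybasis{\ell}\subj| \le C$ supplied by \Cref{assump:boundedness} (the transformation $\historybasis{\ell}$ being a fixed, bounded function of the bounded history $\history{\ell}$), and where the expectation, as throughout the cross-fitting development, is taken conditionally on the out-of-fold sample used to fit $\hat\mu_{\ell A}$. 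Since $p_\ell$ is fixed, taking the maximum over $j$ preserves this bound and yields the key inequality $\LtwoPzero[ ]{\hat\bmu_{\ell Z} - \bmu_{\ell Z0}} \le C\,\LtwoPzero[ ]{\hat\mu_{\ell A} - \mu_{\ell A0}}$, which holds almost surely as an inequality between the random nuisance-error norms (and if one prefers, one may pass to unconditional statements via \Cref{lem:cond-l2-uncond}).

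Given this inequality, the three claims follow immediately. The first is the first line of \Cref{assump:rate-assumps}, which gives $\LtwoPzero[ ]{\hat\mu_{\ell A} - \mu_{\ell A0}} = o_p(n^{-1/4})$ for $\ell = 1,2$. For the second, multiply the $\ell = 2$ version of the key inequality by the nonnegative quantity $\LtwoPzero[ ]{\hat\mu_{2 Y} - \mu_{2Y0}}$ and apply the third line of \Cref{assump:rate-assumps}. For the third, multiply the $\ell = 1$ version by $\LtwoPzero[ ]{\hat\mu_{1 Y \hat\model_2} - \mu_{1 Y \hat\model_2 0}}$ and apply the fifth line of \Cref{assump:rate-assumps}; here the random second-stage model $\hat\model_2$ enters only through the nuisance function $\hat\mu_{1 Y \hat\model_2}$, whose rate is assumed directly at $\hat\model_2$, so nothing further is needed (one could alternatively invoke \Cref{assump:model-selection-consistency} to replace $\hat\model_2$ by the fixed $\model_2^*$).

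This proposition is essentially bookkeeping, and I do not expect a genuine obstacle. The only point requiring a little care is the cross-fitting convention: $\LtwoPzero[ ]{\hat\mu_{\ell A} - \mu_{\ell A0}}$ and the companion error norms are random quantities, defined conditionally on the training fold, so the displayed inequality should be understood — and verified — conditionally on that fold, where $\historybasis{\ell}\subj$ is still bounded by $C$ almost surely and the expectation is over an independent draw of $(\history{\ell}, A_\ell)$.
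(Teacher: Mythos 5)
Your proof is correct and takes essentially the same approach as the paper: the paper's argument is precisely the one-line bound $\LtwoPzero[ ]{\hat\bmu_{\ell Z} - \bmu_{\ell Z0}} \leq C \LtwoPzero[ ]{\hat\mu_{\ell A} - \mu_{\ell A0}}$ followed by a direct appeal to \Cref{assump:rate-assumps}. Your coordinate-wise write-up and the remark about the cross-fitting convention simply make explicit what the paper leaves implicit.
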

\begin{proof}
  Simple inequalities show that
  \[
    \LtwoPzero[ ]{\hat\bmu_{\ell Z} - \bmu_{\ell Z0}} = \LtwoPzero[ ]{ (\hat\mu_{\ell A} - \mu_{\ell A0})(\history{\ell}) \historybasis{\ell}} \leq C \LtwoPzero[ ]{ \hat\mu_{\ell A} - \mu_{\ell A0}}.
  \]
  Consequently, the stated rates follow directly from \Cref{assump:rate-assumps}.
\end{proof}

\begin{lemma} \label{lem:unif-gram-inv}
  If \Cref{assump:bounded-model,assump:rate-assumps} hold, then:
  \begin{align}
    \label{eq:Hn-approx}
    \| \hat{\bH}_{\ell n} - \bHOracle{\ell} \|_{\infty} = o_p(n^{-1/2})
    \\
    \label{eq:Hn-b-approx}
    \| \hat{\bH}^b_{\ell n} - \bHOracle{\ell}^b \|_{\infty} = o_p(n^{-1/2})
    \\
    \label{eq:Hn-inv-approx}
    \max_{\model_{\ell} \in \modelspace_{\ell}} \| \hat{\bH}_{\ell n}(\model_{\ell})^{-1} - \bHOracle{\ell}(\model_{\ell})^{-1} \|_{\infty} = o_p(n^{-1/2})
    \\
    \label{eq:Hn-b-inv-approx}
    \max_{\model_{\ell} \in \modelspace_{\ell}} \| \hat{\bH}^b_{\ell n}(\model_{\ell})^{-1} - \bHOracle{\ell}^b(\model_{\ell})^{-1} \|_{\infty} = o_{p^*}(n^{-1/2})
  \end{align}
\end{lemma}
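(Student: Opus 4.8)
The plan is to reduce all four assertions to a single algebraic decomposition of $\hat\bH_{\ell n}-\bHOracle{\ell}$, then invoke the cross-fitting lemmas of the previous subsection for \eqref{eq:Hn-approx}--\eqref{eq:Hn-b-approx} and a matrix-perturbation argument for \eqref{eq:Hn-inv-approx}--\eqref{eq:Hn-b-inv-approx}. Since $\hat\bH_{\ell n}$ and $\bHOracle{\ell}$ differ only in that the former uses the cross-fitted propensity $\hat\mu_{\ell A}$ where the latter uses $\mu_{\ell A0}$, I would start from the entrywise identity
\[
  (A_{\ell i}-\hat\mu_{\ell Ai})^2-(A_{\ell i}-\mu_{\ell A0i})^2 = 2(A_{\ell i}-\mu_{\ell A0i})(\mu_{\ell A0i}-\hat\mu_{\ell Ai})+(\mu_{\ell A0i}-\hat\mu_{\ell Ai})^2,
\]
which yields $\hat\bH_{\ell n}-\bHOracle{\ell}=T_{1n}+T_{2n}$, where $T_{1n}=\tfrac{2}{n}\sum_{i=1}^n(A_{\ell i}-\mu_{\ell A0i})(\mu_{\ell A0i}-\hat\mu_{\ell Ai})(\historybasis{\ell i})^{\otimes 2}$ and $T_{2n}=\tfrac{1}{n}\sum_{i=1}^n(\mu_{\ell A0i}-\hat\mu_{\ell Ai})^2(\historybasis{\ell i})^{\otimes 2}$.

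For $T_{1n}$, the key observation is that each entry of $(A_{\ell i}-\mu_{\ell A0i})(\historybasis{\ell i})^{\otimes 2}$ has conditional mean zero given $\history{\ell i}$ (because $\historybasis{\ell}$ is $\history{\ell}$-measurable and $\Eop(A_\ell\mid\history{\ell})=\mu_{\ell A0}(\history{\ell})$) and has uniformly bounded second moment by \Cref{assump:boundedness}; applying \Cref{lem:mean-zero-rate-vec} with cross-fitted function $\mu_{\ell A0}-\hat\mu_{\ell A}$, whose $L_2(P_0)$ rate is $o_p(n^{-1/4})$ by \Cref{assump:rate-assumps}, gives $\|T_{1n}\|_\infty=o_p(n^{-1/2}\cdot n^{-1/4})=o_p(n^{-3/4})$. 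For $T_{2n}$, \Cref{assump:boundedness} bounds the entries of $(\historybasis{\ell i})^{\otimes 2}$, so $\|T_{2n}\|_\infty\le C^2\,n^{-1}\sum_{i=1}^n(\mu_{\ell A0i}-\hat\mu_{\ell Ai})^2$, which \Cref{lem:cross-rate-vec} (with $v=0$, $\bh_1=\bh_2=\mu_{\ell A0}-\hat\mu_{\ell A}$) together with the $o_p(n^{-1/4})$ rate controls at $o_p(n^{-1/2})$. Summing proves \eqref{eq:Hn-approx}. For \eqref{eq:Hn-b-approx} I would repeat the decomposition with the weight $\bootWgt_i$ inserted into each sum: the cross term is handled by the multiplier clause of \Cref{lem:mean-zero-rate-vec} (using that the cross-fitted nuisance does not involve the $\bootWgt_i$ and that $\Eop\bootWgt_1^2<\infty$), and the quadratic term by the $v=1$ case of \Cref{lem:cross-rate-vec}, giving the stated rate (with respect to the product measure of data and weights).

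For \eqref{eq:Hn-inv-approx} I would first establish uniform invertibility of the oracle Grams over the sparse class (the maximum is effectively over $\modelspace_\ell(C_\ell)$, which is the only case where the inverses are guaranteed to exist and the only case used downstream via \Cref{assump:model-selection-consistency}). Since $\bHOracle{\ell}$ is an i.i.d.\ average of matrices with entries bounded by \Cref{assump:boundedness} and mean $\bH_{\ell0}$, a coordinatewise CLT plus a union bound over the finitely many coordinates ($p_\ell$ fixed) give $\|\bHOracle{\ell}-\bH_{\ell0}\|_\infty=O_p(n^{-1/2})$, hence $\|\bHOracle{\ell}(\model_\ell)-\bH_{\ell0}(\model_\ell)\|_{2,2}\le|\model_\ell|\,\|\bHOracle{\ell}-\bH_{\ell0}\|_\infty=o_p(1)$ uniformly over $\model_\ell\in\modelspace_\ell(C_\ell)$; by Weyl's inequality and \Cref{assump:bounded-model}, $\lambda_{\min}(\bHOracle{\ell}(\model_\ell))\ge c_0/2$ with probability tending to one, uniformly in $\model_\ell$, so $\|\bHOracle{\ell}(\model_\ell)^{-1}\|_{2,2}\le 2/c_0$. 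Then I would apply \Cref{lem:matrix-inversion-diff} with $\bM=\bHOracle{\ell}(\model_\ell)$, $\hat\bM=\hat\bH_{\ell n}(\model_\ell)$ and the operator norm $\|\cdot\|_{2,2}$: its small-perturbation hypothesis holds with probability tending to one because $\|\hat\bH_{\ell n}(\model_\ell)-\bHOracle{\ell}(\model_\ell)\|_{2,2}\le|\model_\ell|\,\|\hat\bH_{\ell n}-\bHOracle{\ell}\|_\infty=o_p(n^{-1/2})$ by \eqref{eq:Hn-approx}, and its conclusion gives $\|\hat\bH_{\ell n}(\model_\ell)^{-1}-\bHOracle{\ell}(\model_\ell)^{-1}\|_{2,2}\le(\text{const})\,\|\hat\bH_{\ell n}-\bHOracle{\ell}\|_\infty$ with a constant depending only on $c_0$ and $C_\ell$; finally $\|\cdot\|_\infty\le\|\cdot\|_{2,2}$ and a maximum over finitely many models give \eqref{eq:Hn-inv-approx}. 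Assertion \eqref{eq:Hn-b-inv-approx} follows identically once one notes $\bHOracle{\ell}^b$ is an i.i.d.\ average over $(\bO_i,\bootWgt_i)$ with mean $\bH_{\ell0}$, so $\|\bHOracle{\ell}^b-\bH_{\ell0}\|_\infty=o_{p^*}(1)$ yields the same uniform-invertibility conclusion under the product measure, and \eqref{eq:Hn-b-approx} supplies the perturbation bound.

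I expect the matrix-perturbation step to be the only real obstacle: one must guarantee that $\bHOracle{\ell}(\model_\ell)$ and $\bHOracle{\ell}^b(\model_\ell)$ are invertible with a bound on the inverse that is \emph{uniform} over all sparse submodels, so that the constant produced by \Cref{lem:matrix-inversion-diff} does not depend on $\model_\ell$. \Cref{assump:bounded-model} supplies exactly this, but only for the population matrices $\bH_{\ell0}$, so the argument must first transfer it to the empirical oracle matrices through the uniform $\|\bHOracle{\ell}-\bH_{\ell0}\|_\infty$ control above (trivial here because the model class is finite in the fixed-dimensional regime) before \Cref{lem:matrix-inversion-diff} can be invoked; everything else is the routine decomposition-and-rate bookkeeping.
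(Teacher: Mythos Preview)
Your proposal is correct and follows essentially the same route as the paper: the same cross-plus-quadratic decomposition of $\hat\bH_{\ell n}-\bHOracle{\ell}$ (the paper writes it in the vector form $\bZ_\ell=A_\ell\historybasis{\ell}$ rather than factoring out the scalar propensity, but these are equivalent), the same applications of \Cref{lem:mean-zero-rate-vec} and \Cref{lem:cross-rate-vec} for the two pieces, and the same matrix-perturbation step via \Cref{lem:matrix-inversion-diff} after transferring the eigenvalue lower bound of \Cref{assump:bounded-model} from $\bH_{\ell 0}$ to $\bHOracle{\ell}$. Your remark that the maximum is effectively over $\modelspace_\ell(C_\ell)$ (the only place the inverses are guaranteed to exist) is a fair reading of how the result is actually used downstream.
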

\begin{proof}
  We begin by defining the difference
  \[
    \bm{\varphi}_{\ell i}^H \defined (\bZ_{\ell i} - \hat{\bmu}_{\ell Z i})^{\otimes 2} - (\bZ_{\ell i} - \bmu_{\ell Z0i})^{\otimes 2}
  \]
  for $i=1,\ldots,n.$ By adding and subtracting terms, we may express this as
  \begin{align}
    \nonumber
    \bm{\varphi}_{\ell i}^H
    &= (\bZ_{\ell i} - \bmu_{\ell Z0i} + \hat{\bmu}_{\ell Zi} - \bmu_{\ell Z0i})^{\otimes 2} - (\bZ_{\ell i} - \bmu_{\ell Z0i})^{\otimes 2} \\
    &= (\hat{\bmu}_{\ell Zi} - \bmu_{\ell Z0i})^{\otimes 2} 
    + (\hat{\bmu}_{\ell Zi} - \bmu_{\ell Z0i})(\bZ_{\ell i} - \bmu_{\ell Z0i})^\top + (\bZ_{\ell i} - \bmu_{\ell Z0i})(\hat{\bmu}_{\ell Zi} - \bmu_{\ell Z0i})^\top. 
    \label{eq:app-phi-diff-H}
  \end{align}
  To prove \eqref{eq:Hn-approx,eq:Hn-b-approx}, re-express the differences as
  \begin{align*}
    \hat{\bH}_{\ell n} - \bHOracle{\ell} &= \frac{1}{n} \sum_{i=1}^n \bm{\varphi}_{\ell i}^H \\
    \hat{\bH}^b_{\ell n} - \bHOracle{\ell}^b &= \frac{1}{n} \sum_{i=1}^n \bootWgt_i \bm{\varphi}_{\ell i}^H.
  \end{align*}
  Since \Cref{lem:mean-zero-rate-vec,lem:cross-rate-vec} apply either in the presence or absence of these random multipliers, we focus on the argument for \eqref{eq:Hn-approx}.
  Of the three terms comprising \eqref{eq:app-phi-diff-H}, the first satisfies the conditions of \Cref{lem:cross-rate-vec} while the final two satisfy those of \Cref{lem:mean-zero-rate-vec}.
  Consequently,
  \begin{align*}
    \frac{1}{n} \sum_{i=1}^n \left\{ (\hat{\bmu}_{\ell Zi} - \bmu_{\ell Z0i})(\bZ_{\ell i} - \bmu_{\ell Z0i})^\top + (\bZ_{\ell i} - \bmu_{\ell Z0i})(\hat{\bmu}_{\ell Zi} - \bmu_{\ell Z0i})^\top \right\} = o_p(n^{-3/4}) \\
    \frac{1}{n} \sum_{i=1}^n (\hat{\bmu}_{\ell Zi} - \bmu_{\ell Z0i})^{\otimes 2} = o_p(n^{-1/2})
  \end{align*}
  under \Cref{assump:boundedness,assump:rate-assumps}, which establishes the first two equations.

  Equation \eqref{eq:Hn-inv-approx} follows from \eqref{eq:Hn-approx} along with \Cref{lem:matrix-inversion-diff}. Specifically, \Cref{assump:bounded-model} and the arguments in the proof of \Cref{lem:kuchibhotla-uniform-model-conv} ensure that
  \[
    \| \bHOracle{\ell}(\model_{\ell})^{-1} \|_{\infty} \leq \left\{ \Lambda_{\ell}(C_{\ell}) - C_{\ell} \| \hat\bH_{\ell n} - \bHOracle{\ell} \|_{\infty} \right\}^{-1},
  \]
  since $\| \bm{A} \|_{\infty} \leq \| \bm{A} \|_{2,2}$ for a matrix $\bA.$
  Condition (ii) follows from this fixed bound along with the previously established rates. Consequently, the $o_p(n^{-1/2})$ results established previously ensure that the inverses also obey this rate. 
  Using the result of \Cref{lem:matrix-inversion-diff} we obtain the first of the following series of inequalities:
  \begin{align*}
    \| \hat\bH_{\ell n}(\model_{\ell})^{-1} - \bHOracle{\ell}(\model_{\ell})^{-1} \|_{\infty} 
    &\leq 
    \frac{2 \| \hat\bH_{\ell n}(\model_{\ell}) - \bHOracle{\ell}(\model_{\ell}) \|_{\infty}
    }{
      \{ \Lambda_{\ell}(C_{\ell}) - C_{\ell} \| \hat\bH_{\ell n} - \bHOracle{\ell} \|_{\infty} \}^2
    } \\
    &\leq 
    \frac{2 \| \hat\bH_{\ell n}(\model_{\ell}) - \bHOracle{\ell}(\model_{\ell}) \|_{\infty}
    }{
      \{ c_0 - C_{\ell} \| \hat\bH_{\ell n} - \bHOracle{\ell} \|_{\infty} \}^2
    } \\
    &\leq 
    \frac{2 \| \hat\bH_{\ell n} - \bHOracle{\ell} \|_{\infty}
    }{
      \{ c_0 - C_{\ell} \| \hat\bH_{\ell n} - \bHOracle{\ell} \|_{\infty} \}^2
    },
  \end{align*}
  where the final two inequalities follow from the lower bound $c_0$ on $\Lambda_{\ell}(C_{\ell})$, and the upper bound $\| \hat\bH_{\ell n}(\model_{\ell}) - \bHOracle{\ell}(\model_{\ell}) \|_{\infty} \leq \| \hat\bH_{\ell n} - \bHOracle{\ell} \|_{\infty},$ respectively.
  In this final expression, the denominator converges to $c_0^2$ in probability and the numerator is $o_p(n^{-1/2}),$ which both follow from \eqref{eq:Hn-approx}. Consequently, the entire expression is $o_p(n^{-1/2}).$ A similar argument establishes the final result \eqref{eq:Hn-b-inv-approx} using \eqref{eq:Hn-b-approx}.
\end{proof}

\subsection{Lemmas Required for the Proof of \texorpdfstring{\Cref{thm:uposi-bootstrap-valid}}{Theorem \ref{thm:uposi-bootstrap-valid}}}
\label{app:lemmas-for-thm-2}

This section includes several results needed to prove the main lemmas used in the proof of \Cref{thm:uposi-bootstrap-valid}. To organize these results, we focus on three sets:
\begin{enumerate}
  \item those used to show that the stage 2 estimators have an influence function representation uniformly over the stage 2 models $\model_2 \in \modelspace_2(C_2)$ (\Cref{app:lemmas-stg-2-inf}),
  \item those used to establish the asymptotic negligibility of cross-fitting in stage 1 (\Cref{app:lemmas-stg-1-cf}); and
  \item those used to show that the stage 1 estimators have an influence function representation uniformly over the stage 1 models $\model_1 \in \modelspace_1(C_1)$ (\Cref{app:lemmas-stg-1-inf}).
\end{enumerate}

\subsubsection{Results Establishing an Influence Function in the Second Stage}
\label{app:lemmas-stg-2-inf}

\begin{proposition}
  \label{prop:unif-g2}
  Under \Cref{assump:boundedness,assump:rate-assumps}, 
  \begin{align}
    \label{eq:app-g2-hat-tilde-diff}
    \| \hat\bG_{2n} - \tilde\bG_{2n} \|_{\infty} &= o_p(n^{-1/2}) \\
    \label{eq:app-g2-hat-tilde-diff-b}
    \| \hat\bG_{2n}^b - \tilde\bG_{2n}^b \|_{\infty} &= o_{p^*}(n^{-1/2}). 
  \end{align}
\end{proposition}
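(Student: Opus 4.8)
The plan is to reduce the statement to the two cross-fitting lemmas, \Cref{lem:mean-zero-rate-vec} and \Cref{lem:cross-rate-vec}, by expanding $\hat\bG_{2n}-\tilde\bG_{2n}$ into an ``oracle error $\times$ nuisance error'' part and a ``product of nuisance errors'' part. Writing $\bZ_{2}=A_{2}\historybasis{2}$, $\bmu_{2Z0}=\mu_{2A0}\historybasis{2}$, $\hat\bmu_{2Z}=\hat\mu_{2A}\historybasis{2}$ as in \Cref{prop:uniform-mu-Z-rates}, so that $\hat\bG_{2n}=n^{-1}\sum_i(\bZ_{2i}-\hat\bmu_{2Zi})(Y_i-\hat\mu_{2Yi})$ and $\tilde\bG_{2n}=n^{-1}\sum_i(\bZ_{2i}-\bmu_{2Z0i})(Y_i-\mu_{2Y0i})$, I would add and subtract $\bmu_{2Z0i}$ and $\mu_{2Y0i}$ inside the products to obtain
\begin{align*}
  \hat\bG_{2n} - \tilde\bG_{2n}
  &= \frac1n\sum_{i=1}^n (\bZ_{2i} - \bmu_{2Z0i})(\mu_{2Y0i} - \hat\mu_{2Yi})
  + \frac1n\sum_{i=1}^n (\bmu_{2Z0i} - \hat\bmu_{2Zi})(Y_i - \mu_{2Y0i}) \\
  &\phantom{{}={}} + \frac1n\sum_{i=1}^n (\bmu_{2Z0i} - \hat\bmu_{2Zi})(\mu_{2Y0i} - \hat\mu_{2Yi}),
\end{align*}
and bound the $\|\cdot\|_\infty$ norm of each of the three summands, call them $\mathrm{T}_1,\mathrm{T}_2,\mathrm{T}_3$, separately.

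For $\mathrm{T}_1$ the multiplier $\bZ_{2i}-\bmu_{2Z0i}=(A_{2i}-\mu_{2A0i})\historybasis{2i}$ has conditional mean zero given $\history{2i}$ and, by \Cref{assump:boundedness}, uniformly bounded second moment, while $\mu_{2Y0}-\hat\mu_{2Y}$ is a cross-fitted function with $\LtwoPzero[ ]{\hat\mu_{2Y}-\mu_{2Y0}}=o_p(1)$ under \Cref{assump:rate-assumps}; \Cref{lem:mean-zero-rate-vec} (with $r_n=1$) then gives $\|\mathrm{T}_1\|_\infty=o_p(n^{-1/2})$. For $\mathrm{T}_2$, note $\Eop(Y-\mu_{2Y0}\mid\history{2})=0$ by definition of $\mu_{2Y0}$, so $\historybasis{2i}(Y_i-\mu_{2Y0i})$ is conditionally mean zero given $\history{2i}$ and has finite second moment under the standing moment conditions (using $Y-\mu_{2Y0}=(A_2-\mu_{2A0})\Delta_2+\varepsilon_2$ from \eqref{eq:rob-model-true}, boundedness of $\historybasis{2}$ and $\Delta_2$, and $\Eop\varepsilon_2^2<\infty$); since $\mu_{2A0}-\hat\mu_{2A}$ is cross-fitted at rate $o_p(n^{-1/4})$, \Cref{lem:mean-zero-rate-vec} gives $\|\mathrm{T}_2\|_\infty=o_p(n^{-3/4})=o_p(n^{-1/2})$. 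Finally $\mathrm{T}_3$ is a product of two cross-fitted nuisance errors, so \Cref{lem:cross-rate-vec} (case $v=0$) bounds $\|\mathrm{T}_3\|_\infty$ by $O_p$ of the product $\LtwoPzero[ ]{\hat\bmu_{2Z}-\bmu_{2Z0}}\,\LtwoPzero[ ]{\hat\mu_{2Y}-\mu_{2Y0}}$, which is $o_p(n^{-1/2})$ by the product-rate bound of \Cref{prop:uniform-mu-Z-rates}. Adding the three bounds proves \eqref{eq:app-g2-hat-tilde-diff}.

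For the bootstrap statement \eqref{eq:app-g2-hat-tilde-diff-b} I would repeat the same decomposition with each $i$th term carrying the multiplier $\bootWgt_i$. Because $\Eop\bootWgt_1=1$ and $\Eop(\bootWgt_1-1)^2=1$ force $\Eop\bootWgt_1^2=2<\infty$, the multiplier clause of \Cref{lem:mean-zero-rate-vec} covers the analogues of $\mathrm{T}_1$ and $\mathrm{T}_2$ and the $v=1$ case of \Cref{lem:cross-rate-vec} covers that of $\mathrm{T}_3$, with the same rates, now holding $o_{p^*}$ relative to $P_0\times P_{\bootWgt}$.

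The only substantive step is $\mathrm{T}_3$: it is the single piece at the exact $n^{-1/2}$ boundary, and controlling it requires the Neyman-orthogonality-type \emph{product} rate of \Cref{assump:rate-assumps} (repackaged in \Cref{prop:uniform-mu-Z-rates}) rather than the individual nuisance rates; $\mathrm{T}_1$ and $\mathrm{T}_2$ are comfortably slack because their mean-zero multipliers each supply an extra $n^{-1/2}$. The one routine check I would not skip is the finite-second-moment hypothesis of \Cref{lem:mean-zero-rate-vec} for the multiplier $\historybasis{2i}(Y_i-\mu_{2Y0i})$ appearing in $\mathrm{T}_2$, which I discharge via the standing assumption $\Eop\varepsilon_2^2<\infty$.
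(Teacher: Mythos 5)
Your proposal is correct and follows essentially the same route as the paper: the identical three-term decomposition (mean-zero oracle factor times cross-fitted error, twice, plus the product of nuisance errors), with \Cref{lem:mean-zero-rate-vec} handling the first two terms, \Cref{lem:cross-rate-vec} together with \Cref{prop:uniform-mu-Z-rates} handling the boundary product term, and the multiplier clauses of those lemmas covering the bootstrap version. Your explicit check of the second-moment hypothesis for $\historybasis{2i}(Y_i-\mu_{2Y0i})$ is a small point the paper leaves implicit, but it does not change the argument.
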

\begin{proof}
  As in the proof of \Cref{lem:unif-gram-inv}, begin by writing \eqref{eq:app-g2-hat-tilde-diff,eq:app-g2-hat-tilde-diff-b} as
  \begin{align*}
    \hat\bG_{2n} - \tilde\bG_{2n} &= \frac{1}{n} \sum_{i=1}^n \bm{\varphi}_{2i}^G
    \\
    \hat\bG_{2n}^b - \tilde\bG_{2n}^b &= \frac{1}{n} \sum_{i=1}^n \bootWgt_i \bm{\varphi}_{2i}^G,
  \end{align*}
  where we define for $i=1,\ldots,n,$ the term inside the sum as
  \[
    \bm{\varphi}_{2i}^G \defined (\bZ_{2i} - \hat\bmu_{2Zi}) (Y_i - \hat\mu_{2Yi}) - (\bZ_{2i} - \bmu_{2Z0i}) (Y_i - \mu_{2Y0i}).
  \]
  By adding and subtracting terms, we arrive at the expression
  \begin{align}
    \bm{\varphi}_{2i}^G
    =~& (\bZ_{2i} - \bmu_{2Z0i}) (\mu_{2Y0i} - \hat\mu_{2Yi}) \label{eq:unif-g2-1} \\
    &+ (\mu_{2A0i} - \hat\mu_{2Ai}) \left\{ \historybasis{2i} (Y_i - \mu_{2Y0i}) \right\} \label{eq:unif-g2-2} \\
    &+ (\bmu_{2Z0i} - \hat\bmu_{2Zi}) (\mu_{2Y0i} - \hat\mu_{2Yi}), \label{eq:unif-g2-3}
  \end{align}
  where we've substituted the identity $\bmu_{2Z0i} - \hat\bmu_{2Zi} = (\mu_{2A0i} - \hat\mu_{2Ai}) \historybasis{2i}$ in \eqref{eq:unif-g2-2}.

  As in the proof of \Cref{lem:unif-gram-inv}, we plan to use \Cref{lem:mean-zero-rate-vec,lem:cross-rate-vec} to show the results \eqref{eq:app-g2-hat-tilde-diff,eq:app-g2-hat-tilde-diff-b}. Consequently, we focus only on the former result, as the random multiplier does not impact the argument.
  The proof will be complete if we show that each of these labeled terms is $o_p(n^{-1/2})$ when averaged over $i=1,\ldots,n.$ 
  
  Notice that \Cref{lem:mean-zero-rate-vec} applies to \eqref{eq:unif-g2-1}, since the first vector factor has conditional mean zero on $\history{2i},$ and the remaining term involves cross-fitting. Equation \eqref{eq:unif-g2-2} also fits this condition, since the term in curly braces has mean zero conditioned upon $\history{2i}.$ Consequently, \Cref{assump:rate-assumps} ensures that
  \[
    \frac{1}{n} \sum_{i=1}^n \left\{ (\bZ_{2i} - \bmu_{2Z0i}) (\mu_{2Y0i} - \hat\mu_{2Yi}) + (\mu_{2A0i} - \hat\mu_{2Ai}) \history{2i} (Y_i - \mu_{2Y0i}) \right\} = o_p(n^{-1/2}).
  \]
  For the final term \eqref{eq:unif-g2-3}, we may apply \Cref{prop:uniform-mu-Z-rates} with \Cref{lem:cross-rate-vec} to show
  \[
    \frac{1}{n} \sum_{i=1}^n (\bmu_{2Z0i} - \hat\bmu_{2Zi}) (\mu_{2Y0i} - \hat\mu_{2Yi}) = o_p(n^{-1/2}),
  \]
  concluding the proof of \eqref{eq:app-g2-hat-tilde-diff}. The result \eqref{eq:app-g2-hat-tilde-diff-b} follows similarly.
\end{proof}

\begin{lemma}
  \label{lem:unif-consistency-both-stages}
  Under \Cref{assump:bounded-model,assump:model-selection-consistency}, the following inequalities hold for any $\model_1 \in \modelspace_1(C_1)$ and $\model_2 \in \modelspace_2(C_2)$:
  \begin{align*}
    \| \hat\btheta_{2n,\model_2} - \tilde\btheta_{2n,\model_2} \|_{1} \leq \frac{|\model_2| ( \| \hat\bG_{2n} - \tilde\bG_{2n} \|_{\infty} + \| \hat\bH_{2n} - \bHOracle{2} \|_{\infty} \| \tilde\btheta_{2n,\model_2} \|_1 )}{ \Lambda_2(C_2) - C_2 \hat{D}_{2n}^H} \\
    \| \hat\btheta_{1n,\model_1 \hat\model_2} - \tilde\btheta_{1n,\model_1 \hat\model_2} \|_{1} \leq \frac{|\model_1| ( \| \hat\bG_{1n,\model_2^*} - \tilde\bG_{1n,\model_2^*} \|_{\infty} + \| \hat\bH_{1n} - \bHOracle{1} \|_{\infty} \| \tilde\btheta_{1n,\model_1 \model_2^*} \|_1 )}{ \Lambda_1(C_1) - C_1 \hat{D}_{1n}^H}.
  \end{align*}
  Consequently, under the conditions of \Cref{thm:uposi-coverage}, cross-fitting approximates the oracle estimators uniformly in probability over sparse models:
  \begin{align*}
    \max_{\model_2 \in \modelspace_2(C_2)} \| \hat\btheta_{2n,\model_2} - \tilde\btheta_{2n,\model_2} \|_{1} = o_p(n^{-1/2}) \\
    \max_{\model_1 \in \modelspace_1(C_1)} \| \hat\btheta_{1n,\model_1 \hat\model_2} - \tilde\btheta_{1n,\model_1 \hat\model_2} \|_{1} = o_p(n^{-1/2}).
  \end{align*}
\end{lemma}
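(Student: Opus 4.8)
The plan is to follow, essentially verbatim, the argument used for \Cref{lem:kuchibhotla-uniform-model-conv}, with the oracle estimators $\tilde\btheta_{\ell n}$ now playing the role that the population parameters $\btheta_{\ell 0}$ played there. For the deterministic part, fix $\model_1 \in \modelspace_1(C_1)$ and $\model_2 \in \modelspace_2(C_2)$. Subtracting the oracle normal equation \eqref{eq:norm-eq-oracle-2} from the cross-fitted one \eqref{eq:norm-eq-cf-2} and using $\tilde\bG_{2n}(\model_2) = \bHOracle{2}(\model_2)\tilde\btheta_{2n,\model_2}$ gives
\[
  \hat\btheta_{2n,\model_2} - \tilde\btheta_{2n,\model_2} = \hat\bH_{2n}(\model_2)^{-1}\Big[\{\hat\bG_{2n} - \tilde\bG_{2n}\}(\model_2) - \{\hat\bH_{2n} - \bHOracle{2}\}(\model_2)\,\tilde\btheta_{2n,\model_2}\Big],
\]
and subtracting \eqref{eq:norm-eq-oracle-1} from \eqref{eq:norm-eq-cf-1} gives the analogous identity in the first stage with $\hat\bH_{1n}(\model_1)^{-1}$ in front (note $\hat\bH_{1n}$ and $\bHOracle{1}$ do not depend on $\model_2$). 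Exactly as in \Cref{lem:kuchibhotla-uniform-model-conv} — a Neumann-series perturbation of $\hat\bH_{\ell n}(\model_\ell)$ about $\bH_{\ell 0}(\model_\ell)$, using \Cref{assump:bounded-model} and the elementary bound \eqref{eq:deterministic-estimation-bound} — I would obtain $\|\hat\bH_{\ell n}(\model_\ell)^{-1}\|_{2,2} \le \{\Lambda_\ell(C_\ell) - C_\ell\hat{D}_{\ell n}^H\}^{-1}$ on the event that this denominator is positive (which also yields invertibility). Applying $\|\bv(\model_\ell)\|_2 \le |\model_\ell|^{1/2}\|\bv\|_\infty$, $\|\bA(\model_\ell)\bv(\model_\ell)\|_\infty \le \|\bA\|_\infty\|\bv(\model_\ell)\|_1$, and $\|\cdot\|_1 \le |\model_\ell|^{1/2}\|\cdot\|_2$ then yields the two displayed bounds, the first-stage one with $\model_2$ in place of $\hat\model_2$. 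Since that bound holds for every fixed $\model_2$, it holds for $\model_2=\hat\model_2$, and on $\{\hat\model_2=\model_2^*\}$ — of probability tending to one by \Cref{assump:model-selection-consistency} — the right-hand side may be rewritten with $\model_2^*$, giving the stated inequality.

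For the $o_p(n^{-1/2})$ consequence I would simply substitute rates. The denominators stay bounded away from zero with probability tending to one: $\hat{D}_{\ell n}^H \le \|\hat\bH_{\ell n} - \bHOracle{\ell}\|_\infty + \|\bHOracle{\ell} - \bH_{\ell 0}\|_\infty$, the first term being $o_p(n^{-1/2})$ by \eqref{eq:Hn-approx} and the second $O_p(n^{-1/2})$ (a sample mean of bounded i.i.d. terms under \Cref{assump:boundedness}), so $\Lambda_\ell(C_\ell) - C_\ell\hat{D}_{\ell n}^H \ge c_0 - o_p(1)$ by \Cref{assump:bounded-model}. In the numerators, $\|\hat\bG_{2n} - \tilde\bG_{2n}\|_\infty = o_p(n^{-1/2})$ by \Cref{prop:unif-g2}; its first-stage analogue $\|\hat\bG_{1n,\model_2^*} - \tilde\bG_{1n,\model_2^*}\|_\infty = o_p(n^{-1/2})$ follows from the cross-fitting results of \Cref{app:lemmas-stg-1-cf}; and $\|\hat\bH_{\ell n} - \bHOracle{\ell}\|_\infty = o_p(n^{-1/2})$ by \eqref{eq:Hn-approx}. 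Finally, $\|\tilde\btheta_{2n,\model_2}\|_1$ and $\|\tilde\btheta_{1n,\model_1\model_2^*}\|_1$ are $O_p(1)$ \emph{uniformly} over $\modelspace_2(C_2)$ and $\modelspace_1(C_1)$: writing $\tilde\btheta = \bHOracle{\ell}(\model_\ell)^{-1}\tilde\bG(\model_\ell)$, the oracle Gram inverse has operator norm at most $2/c_0$ with probability tending to one (the same eigenvalue perturbation), while $\|\tilde\bG_{2n}\|_\infty$ and $\|\tilde\bG_{1n,\model_2^*}\|_\infty$ converge in probability to $\|\bG_{20}\|_\infty$ and $\|\bG_{10,\model_2^*}\|_\infty$, both finite under \Cref{assump:boundedness} since $\Delta_2$ and $\max_{\model_2\in\modelspace_2(C_2)}|\Delta_{1,\model_2}|$ are bounded. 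Because every factor other than $|\model_\ell|\le C_\ell$ is free of the model, the right-hand sides are $o_p(n^{-1/2})$ uniformly over the sparse model classes, and the first-stage claim follows after invoking $\{\hat\model_2=\model_2^*\}$ once more.

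The main obstacle is organizational rather than technical: the first-stage bound cannot be established in isolation, because the pseudo-outcome $\hat Y_{1\model_2}$ is built from $\hat\btheta_{2n,\model_2}$, so controlling $\hat\bG_{1n,\model_2} - \tilde\bG_{1n,\model_2}$ presupposes the second-stage conclusion $\max_{\model_2\in\modelspace_2(C_2)}\|\hat\btheta_{2n,\model_2} - \tilde\btheta_{2n,\model_2}\|_1 = o_p(n^{-1/2})$ (propagated through the Lipschitz behaviour of $\xi$ in \eqref{eq:blip-fun}). Hence the two halves of the lemma must be proven in that order, and the first-stage half is essentially a corollary of the second-stage half together with \Cref{app:lemmas-stg-1-cf}. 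The only remaining point requiring care is that the $o_p$ and $O_p$ bounds above be uniform over the model classes, which holds because the $\bG$- and $\bH$-differences do not involve the model and $|\model_\ell|$ is bounded by $C_\ell$.
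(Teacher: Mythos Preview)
Your proposal is correct and follows essentially the same route as the paper: derive the difference from the normal equations, bound $\|\hat\bH_{\ell n}(\model_\ell)^{-1}\|_{2,2}$ via the eigenvalue perturbation around $\bH_{\ell 0}(\model_\ell)$, and then plug in the cross-fitting rates from \Cref{lem:unif-gram-inv}, \Cref{prop:unif-g2}, and \Cref{lem:approximation-cross-fit}. The only cosmetic differences are that the paper invokes \Cref{lem:stochastic-process-random-index} to pass from $\hat\model_2$ to $\model_2^*$ (where you argue directly via the event $\{\hat\model_2=\model_2^*\}$) and cites Lemma~4.1 of \citet{kuchibhotlaValidPostselectionInference2020} for the uniform $O_p(1)$ bound on $\|\tilde\btheta_{1n,\model_1\model_2^*}\|_1$ (where you supply a direct argument).
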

\begin{proof}
  We may prove this using a similar argument as in \Cref{lem:kuchibhotla-uniform-model-conv}. We sketch out a few relevant expressions in the argument for the inequality in Stage 1.
  First, use \Cref{lem:stochastic-process-random-index} along with \Cref{assump:model-selection-consistency} to conclude that \[
    \| \hat\btheta_{1n,\model_1 \hat\model_2} - \tilde\btheta_{1n,\model_1 \hat\model_2} \|_1 = \| \hat\btheta_{1n,\model_1 \model_2^*} - \tilde\btheta_{1n,\model_1 \model_2^*} \|_1 + o_p\left(n^{-1/2}\right).
  \]
  The first-order equations ensure:
  \[
    \hat\btheta_{1n,\model_1 \model_2^*} - \tilde\btheta_{1n,\model_1 \model_2^*} = \left\{ \hat\bH_{1n}(\model_1) \right\}^{-1}
    \left[ 
      \{ \hat\bG_{1n,\model_2^*} - \tilde\bG_{1n,\model_2^*} \}(\model_1) - \{ \hat\bH_{1n} - \bHOracle{1} \}(\model_1) \tilde\btheta_{1n,\model_1 \model_2^*}
     \right].
  \]
  Similarly, the $\ell_2 \mapsto \ell_2$ operator norm bound below follows:
  \[
    \| \hat\bH_{1n}(\model_1) - \bHOracle{1}(\model_1) \|_{2,2} \leq | \model_1 | \| \hat\bH_{1n} - \bHOracle{1} \|_{\infty}.
  \]
  Using the previously-established bound on $\| \hat\bH_{1n}(\model_2)^{-1} \|_{2,2}, $ we can derive the expression 
  \begin{equation}
    \label{eq:app-ineq-crossfit-1}
    \| \hat\btheta_{1n,\model_1 \model_2^*} - \tilde\btheta_{1n,\model_1 \model_2^*} \|_{2} \leq |\model_1|^{1/2} \frac{
      \| \hat\bG_{1n,\model_2^*} - \tilde\bG_{1n,\model_2^*} \|_{\infty} + \| \hat\bH_{1n} - \bHOracle{1} \|_{\infty} \| \tilde\btheta_{1n,\model_1 \model_2^*} \|_1
    }{
      \Lambda_1(C_1) - C_1 \hat{D}_{1n}^H
    },
  \end{equation}
  which allows us to establish the first result using the relationship between $\ell_1$ and $\ell_2$ norms. 
  
  Now we may establish the uniform rate on $\| \hat\btheta_{1n,\model_1 \model_2^*} - \tilde\btheta_{1n,\model_1 \model_2^*} \|_{1}$ using this inequality.
  The term $\max_{\model_1 \in \modelspace_1(C_1)}\| \tilde\btheta_{1n,\model_1 \model_2^*} \|_1 = O_p(1)$ using Lemma 4.1 in \citet{kuchibhotlaValidPostselectionInference2020} and the two remaining terms in the numerator are $o_p(n^{-1/2})$ by \Cref{lem:unif-gram-inv,lem:approximation-cross-fit}, uniformly over $\modelspace_1(C_1).$ Finally, the denominator converges in probability to a term bounded from below by $c_0$ according to \Cref{assump:bounded-model}. The continuous mapping theorem implies that the RHS of \eqref{eq:app-ineq-crossfit-1} is $o_p(n^{-1/2}),$ completing the proof.
\end{proof}

\begin{lemma}
  \label{lem:uniform-influence-function-stg2}
  Under the conditions for \Cref{lem:unif-gram-inv,lem:unif-consistency-both-stages}, the following property holds for any $\model_2 \in \modelspace_2(C_2)$: 
  \begin{align}
    \label{eq:uniform-influence-function-stg2-1}
    \Big\| (\hat\btheta_{2n,\model_2} - \btheta_{20,\model_2}) - \frac{1}{n} \sum_{i=1}^n \mathrm{Inf}_{2 \model_2 i} \Big\|_{\infty} &= o_p\left( n^{-1/2} \right) 
    \\
    \label{eq:uniform-influence-function-stg2-2}
    \Big\| (\hat\btheta_{2n,\model_2}^b - \btheta_{20,\model_2}) - \frac{1}{n} \sum_{i=1}^n \bootWgt_i \mathrm{Inf}_{2 \model_2 i} \Big\|_{\infty} &= o_{p^*}\left( n^{-1/2} \right)
    \\
    \label{eq:uniform-influence-function-stg2-3}
    \Big\| (\hat\btheta_{2n,\model_2}^b - \hat\btheta_{2n,\model_2}) - \frac{1}{n} \sum_{i=1}^n (\bootWgt_i - 1) \mathrm{Inf}_{2 \model_2 i} \Big\|_{\infty} &= o_{p^*}\left( n^{-1/2} \right)
    ,
  \end{align}
  where we define the function \[
    \mathrm{Inf}_{2 \model_2 i} = \bH_{20}(\model_2)^{-1} (A_{2i} - \mu_{2A0i}) \historybasis{2i}(\model_2) \left\{ Y_i - \mu_{2Y0i} - (A_{2i} - \mu_{2A0i}) \historybasis{2i}(\model_2)^\top \btheta_{20,\model_2} \right\}.
  \]
\end{lemma}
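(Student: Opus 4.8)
The plan is to start from the first-order conditions and linearize. Subtracting the population normal equation \eqref{eq:norm-eq-true-2} from the cross-fitted one \eqref{eq:norm-eq-cf-2} and rearranging gives, for every $\model_2 \in \modelspace_2(C_2)$,
\[
  \hat\bH_{2n}(\model_2)\bigl(\hat\btheta_{2n,\model_2} - \btheta_{20,\model_2}\bigr)
  = \bigl(\hat\bG_{2n} - \bG_{20}\bigr)(\model_2) - \bigl(\hat\bH_{2n} - \bH_{20}\bigr)(\model_2)\,\btheta_{20,\model_2}.
\]
First I would replace the cross-fitted quantities by their oracle counterparts: \Cref{prop:unif-g2} gives $\|\hat\bG_{2n} - \tilde\bG_{2n}\|_\infty = o_p(n^{-1/2})$ and \eqref{eq:Hn-approx} gives $\|\hat\bH_{2n} - \bHOracle{2}\|_\infty = o_p(n^{-1/2})$. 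Since $|\model_2| \le C_2$ is fixed and $\|\btheta_{20,\model_2}\|_1$ is bounded uniformly over $\modelspace_2(C_2)$ — because $\btheta_{20,\model_2} = \bH_{20}(\model_2)^{-1}\bG_{20}(\model_2)$ with $\|\bH_{20}(\model_2)^{-1}\|_{2,2} \le c_0^{-1}$ by \Cref{assump:bounded-model} and $\bG_{20}$ bounded by \Cref{assump:boundedness} — the right-hand side equals $(\tilde\bG_{2n} - \bG_{20})(\model_2) - (\bHOracle{2} - \bH_{20})(\model_2)\btheta_{20,\model_2} + o_p(n^{-1/2})$, and, since $p_2$ is fixed so that $\modelspace_2(C_2)$ is finite, this holds uniformly over $\model_2 \in \modelspace_2(C_2)$.

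The key algebraic step is to recognize the oracle remainder as a centered i.i.d. average. With $\epsilon_{2i,\model_2}$ denoting the $i$-th model-$\model_2$ residual of \eqref{eq:model2-eps} (equivalently $Y_i - \mu_{2Y0i} - (A_{2i} - \mu_{2A0i})\historybasis{2i}(\model_2)^\top\btheta_{20,\model_2}$, which has $\Eop(\epsilon_{2,\model_2}\mid\history{2}) = 0$), a direct computation gives $\tilde\bG_{2n}(\model_2) - \bHOracle{2}(\model_2)\btheta_{20,\model_2} = n^{-1}\sum_{i=1}^n (A_{2i} - \mu_{2A0i})\historybasis{2i}(\model_2)\epsilon_{2i,\model_2}$, while $\bG_{20}(\model_2) - \bH_{20}(\model_2)\btheta_{20,\model_2} = \bzero$ by \eqref{eq:norm-eq-true-2}. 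Left-multiplying by $\hat\bH_{2n}(\model_2)^{-1}$ and then substituting $\bH_{20}(\model_2)^{-1}$ — the difference $\hat\bH_{2n}(\model_2)^{-1} - \bH_{20}(\model_2)^{-1}$ is $O_p(n^{-1/2})$ uniformly, via \eqref{eq:Hn-inv-approx}, the law of large numbers for $\bHOracle{2} - \bH_{20}$, and \Cref{lem:matrix-inversion-diff}, and it multiplies a bracket of size $O_p(n^{-1/2})$, so the substitution error is $o_p(n^{-1/2})$ — yields $\hat\btheta_{2n,\model_2} - \btheta_{20,\model_2} = n^{-1}\sum_i \mathrm{Inf}_{2\model_2 i} + o_p(n^{-1/2})$, which is \eqref{eq:uniform-influence-function-stg2-1}.

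For the bootstrap statements I would repeat the same decomposition with the $b$-superscripted quantities, now invoking \eqref{eq:app-g2-hat-tilde-diff-b} of \Cref{prop:unif-g2} and \eqref{eq:Hn-b-approx}, \eqref{eq:Hn-b-inv-approx} of \Cref{lem:unif-gram-inv}, together with $\|\hat\bH_{2n}^b(\model_2)^{-1}\| = O_{p^*}(1)$ (which follows from those estimates and $\bHOracle{2}^b - \bHOracle{2} = O_{p^*}(n^{-1/2})$). The weighted oracle remainder is $\tilde\bG_{2n}^b(\model_2) - \bHOracle{2}^b(\model_2)\btheta_{20,\model_2} = n^{-1}\sum_i \bootWgt_i(A_{2i}-\mu_{2A0i})\historybasis{2i}(\model_2)\epsilon_{2i,\model_2} = n^{-1}\sum_i\bootWgt_i\,\bH_{20}(\model_2)\mathrm{Inf}_{2\model_2 i}$, and replacing $\hat\bH_{2n}^b(\model_2)^{-1}$ by $\bH_{20}(\model_2)^{-1}$ again costs only a product of two $O_{p^*}(n^{-1/2})$ factors; multiplying through gives \eqref{eq:uniform-influence-function-stg2-2}. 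Subtracting \eqref{eq:uniform-influence-function-stg2-1} from \eqref{eq:uniform-influence-function-stg2-2} gives \eqref{eq:uniform-influence-function-stg2-3}.

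The step I expect to require the most care is the bootstrap bookkeeping: propagating the $o_{p^*}$ and $O_{p^*}$ rates correctly through the products of matrix-inverse differences and the weighted empirical averages, and in particular confirming that $\hat\bH_{2n}^b(\model_2)^{-1}$ is uniformly bounded (in $\model_2$ and under the product measure $P_0 \times P_\omega$), so that multiplying the linearized remainder by it does not inflate the order of the error. Everything else is a reorganization of identities already supplied by \Cref{prop:unif-g2,lem:unif-gram-inv}.
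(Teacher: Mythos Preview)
Your proposal is correct and follows essentially the same approach as the paper. The only organizational difference is that the paper first invokes \Cref{lem:unif-consistency-both-stages} to swap $\hat\btheta_{2n,\model_2}$ for $\tilde\btheta_{2n,\model_2}$ and then linearizes $\tilde\btheta_{2n,\model_2}-\btheta_{20,\model_2}$ using $\bHOracle{2}(\model_2)^{-1}$, whereas you linearize $\hat\btheta_{2n,\model_2}-\btheta_{20,\model_2}$ directly from the normal equations and replace $\hat\bG_{2n},\hat\bH_{2n}$ by $\tilde\bG_{2n},\bHOracle{2}$ term by term; since \Cref{lem:unif-consistency-both-stages} is itself built from the same \Cref{prop:unif-g2} and \Cref{lem:unif-gram-inv} estimates you use, the two routes coincide in substance, and the bootstrap and difference statements are handled identically in both.
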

\begin{proof}
  Begin with \eqref{eq:uniform-influence-function-stg2-1}.
  Under \Cref{lem:unif-consistency-both-stages}, we may replace $\hat\btheta_{2n,\model_2}$ in this expression by $\tilde\btheta_{2n,\model_2}$ without affecting the remainder. Doing so, we write
  \begin{equation*}
    \tilde\btheta_{2n,\model_2} - \btheta_{20,\model_2} = \bHOracle{2}(\model_2)^{-1} \tilde\bG_{2n}(\model_2) - \btheta_{20,\model_2}
    = \bHOracle{2}(\model_2)^{-1} \left\{ \tilde\bG_{2n}(\model_2) - \bHOracle{2}(\model_2) \btheta_{20,\model_2}  \right\}
  \end{equation*}
  Next, we express the sum
  \[
    \frac{1}{n} \sum_{i=1}^n \mathrm{Inf}_{2 \model_2 i} = \bH_{20}(\model_2)^{-1} \left\{ \tilde\bG_{2n}(\model_2) - \bHOracle{2}(\model_2) \btheta_{20,\model_2} \right\},
  \]
  and hence $R_{n, \model_2} = \tilde\btheta_{2n,\model_2} - \btheta_{20,\model_2} - \frac{1}{n} \sum_{i=1}^n \mathrm{Inf}_{2 \model_2 i}$ may be written
  \begin{equation}
    R_{n, \model_2} = \left\{ \bHOracle{2}(\model_2)^{-1} - \bH_{20}(\model_2)^{-1} \right\} \left\{ \tilde\bG_{2n}(\model_2) - \bHOracle{2}(\model_2) \btheta_{20,\model_2} \right\}.
    \label{eq:app-infl-remainder}
  \end{equation}
  The $\ell_0$ norm of the second quantity in curly braces is bounded by $C_2$ by the definition of $\modelspace_2(C_2).$
  Using the relationship $\|\bm{A}\bm{b}\|_{\infty} \leq \| \bm{A} \|_{\infty} \| \bm{b} \|_0 \| \bm{b} \|_{\infty}$ for matrix $\bm{A}$ and vector $\bm{b}$, we may bound the remainder \eqref{eq:app-infl-remainder} by \[
    C_2 \| \bHOracle{2}(\model_2)^{-1} - \bH_{20}(\model_2)^{-1} \|_{\infty} \| \tilde\bG_{2n}(\model_2) - \bHOracle{2}(\model_2) \btheta_{20,\model_2} \|_{\infty}. 
  \]
  Under \Cref{lem:matrix-inversion-diff}, the first normed quantity is $O_p(n^{-1/2})$ and so the proof will be complete if $\| \tilde\bG_{2n}(\model_2) - \bHOracle{2}(\model_2) \btheta_{20,\model_2} \|_{\infty} = o_p(1).$
  To this end, express the quantity \[
    \tilde\bG_{2n}(\model_2) - \bHOracle{2}(\model_2) \btheta_{20,\model_2} = \frac{1}{n} \sum_{i=1}^n (A_{2i} - \mu_{2A0i}) \history{2}(\model_2) \epsilon_{2,\model_2},
  \]
  which is a sample average of $|\model_2|-$dimensional random variables having mean zero. Consequently, the above term is $O_p(n^{-1/2})$.
  Consequently, the remainder term is $O_p(n^{-1}) = o_p(n^{-1/2}).$

  Next, we use similar arguments to handle \eqref{eq:uniform-influence-function-stg2-2}. Again, we may replace $\hat\btheta_{2n,\model_2}^b$ by the oracle term $\tilde\btheta_{2n,\model_2}^b.$ Writing the bootstrap error
  \[
    \tilde\btheta_{2n,\model_2}^b - \btheta_{20,\model_2} = \bHOracle{2}^b(\model_2)^{-1} \left\{ \tilde\tilde\bG_{2n}^b(\model_2) - \bHOracle{2}^b(\model_2) \btheta_{20,\model_2} \right\},
  \]
  along with the desired influence function-based sample average:
  \[
    \frac{1}{n} \sum_{i=1}^n \bootWgt_i \mathrm{Inf}_{2 \model_2 i} = \bH_{20}(\model_2)^{-1} \left\{ \tilde\bG_{2n}^b(\model_2) - \bHOracle{2}^b(\model_2) \btheta_{20,\model_2} \right\},
  \]
  we can express their difference with a similar remainder term:
  \begin{align*}
    \tilde\btheta_{2n,\model_2}^b - \btheta_{20,\model_2} &- \frac{1}{n} \sum_{i=1}^n \bootWgt_i \mathrm{Inf}_{2 \model_2 i} 
    \\ 
    &= \left\{ \bHOracle{2}^b(\model_2)^{-1} - \bH_{20}(\model_2)^{-1} \right\} \left\{ \tilde\bG_{2n}^b(\model_2) - \bHOracle{2}^b(\model_2) \btheta_{20,\model_2} \right\}
    .
  \end{align*}
  By similar arguments as in the \eqref{eq:uniform-influence-function-stg2-1} case, this remainder can be shown to be $o_{p^*}(n^{-1/2}).$

  The final expression \eqref{eq:uniform-influence-function-stg2-3} is a direct consequence of the two previous expressions, along with the triangle inequality.
\end{proof}

\subsubsection{Results Establishing the Asymptotic Negligibility of Cross-fitting in the First Stage}
\label{app:lemmas-stg-1-cf}

\begin{lemma}
  \label{lem:approximation-cross-fit}
  Under \Cref{assump:boundedness,assump:bounded-model,assump:rate-assumps}, the following rates hold:
  \begin{align}
    &\Vert \hat{\bG}_{1n,\model_2^*} - \tilde{\bG}_{1n,\model_2^*} \Vert_{\infty} = o_p(n^{-1/2})
    \label{eq:G1-approx} \\
    &\Vert \hat{\bG}^b_{1n,\model_2^*} - \tilde{\bG}^b_{1n,\model_2^*} \Vert_{\infty} = o_{p^*}(n^{-1/2}).
    \label{eq:G1-approx-boot}
  \end{align}
\end{lemma}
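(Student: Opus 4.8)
The plan is to follow the template of the proof of \Cref{prop:unif-g2}, treating \eqref{eq:G1-approx} and \eqref{eq:G1-approx-boot} in parallel since the i.i.d.\ multipliers $\bootWgt_i$ are accommodated throughout by \Cref{lem:mean-zero-rate-vec,lem:cross-rate-vec,lem:sum-with-op}. First I would write $\hat\bG_{1n,\model_2^*}-\tilde\bG_{1n,\model_2^*}=n^{-1}\sum_{i=1}^n\bm{\varphi}_{1i}^G$ (and the bootstrap analogue with an extra $\bootWgt_i$ in the sum), where $\bm{\varphi}_{1i}^G$ is the per-observation difference of $(A_{1i}-\hat\mu_{1Ai})(\hat Y_{1\model_2^* i}-\hat\mu_{1Y\model_2^* i})\historybasis{1i}$ and $(A_{1i}-\mu_{1A0i})(\tilde Y_{1\model_2^* i}-\mu_{1Y\model_2^* 0i})\historybasis{1i}$. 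The feature absent from the second stage is that $\hat\bG$ and $\tilde\bG$ differ not only through the nuisance functions ($\hat\mu_{1A}$ versus $\mu_{1A0}$, and $\hat\mu_{1Y\model_2^*}$ versus $\mu_{1Y\model_2^* 0}$) but also through the pseudo-outcome ($\hat Y_{1\model_2^*}$, built from $\hat\btheta_{2n,\model_2^*}$, versus $\tilde Y_{1\model_2^*}$, built from $\tilde\btheta_{2n,\model_2^*}$). Writing $A_{1i}-\hat\mu_{1Ai}=(A_{1i}-\mu_{1A0i})-(\hat\mu_{1Ai}-\mu_{1A0i})$ and $\hat Y_{1\model_2^* i}-\hat\mu_{1Y\model_2^* i}=(\tilde Y_{1\model_2^* i}-\mu_{1Y\model_2^* 0i})+(\hat Y_{1\model_2^* i}-\tilde Y_{1\model_2^* i})-(\hat\mu_{1Y\model_2^* i}-\mu_{1Y\model_2^* 0i})$ and multiplying out, the leading product cancels against $\tilde\bG_{1n,\model_2^*}$ and $\bm{\varphi}_{1i}^G$ becomes a sum of five terms of three types: products of a single nuisance error with a factor that is conditionally mean zero given $\history{1i}$, one product of two nuisance errors, and two terms carrying the pseudo-outcome gap $\hat Y_{1\model_2^* i}-\tilde Y_{1\model_2^* i}$.

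For the first type I would use \Cref{lem:mean-zero-rate-vec}. The term $(A_{1i}-\mu_{1A0i})(\hat\mu_{1Y\model_2^* i}-\mu_{1Y\model_2^* 0i})\historybasis{1i}$ pairs a factor that is conditionally mean zero given $\history{1i}$ with a nuisance error whose $L_2(P_0)$ rate is $o_p(1)$ by \Cref{assump:rate-assumps}, hence it is $o_p(n^{-1/2})$. For the term $(\hat\mu_{1Ai}-\mu_{1A0i})\historybasis{1i}(\tilde Y_{1\model_2^* i}-\mu_{1Y\model_2^* 0i})$ I would first substitute $\tilde Y_{1\model_2^* i}=Y_{1\model_2^* i}+(\tilde Y_{1\model_2^* i}-Y_{1\model_2^* i})$ with $Y_{1\model_2^* i}$ the genuinely i.i.d.\ population pseudo-outcome of \eqref{eq:stage-1-pseudo-outcome}, for which $Y_{1\model_2^* i}-\mu_{1Y\model_2^* 0i}$ is conditionally mean zero given $\history{1i}$; the piece retaining $Y_{1\model_2^* i}$ is then $o_p(n^{-3/4})$ by \Cref{lem:mean-zero-rate-vec} with the $o_p(n^{-1/4})$ propensity rate, and the piece with $\tilde Y_{1\model_2^* i}-Y_{1\model_2^* i}$ is folded into the pseudo-outcome analysis. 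The double-error term $(\hat\mu_{1Ai}-\mu_{1A0i})\historybasis{1i}(\hat\mu_{1Y\model_2^* i}-\mu_{1Y\model_2^* 0i})=(\hat\bmu_{1Zi}-\bmu_{1Z0i})(\hat\mu_{1Y\model_2^* i}-\mu_{1Y\model_2^* 0i})$ with $\bZ_1=A_1\historybasis{1}$ is $o_p(n^{-1/2})$ by \Cref{lem:cross-rate-vec} together with the product rate $o_p(n^{-1/2})$ furnished by \Cref{prop:uniform-mu-Z-rates} at the fixed model $\model_2^*$.

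The pseudo-outcome terms are the main obstacle, and the key is that $\xi$ of \eqref{eq:blip-fun}, although it contains an indicator, is globally Lipschitz in its parameter argument: rewriting $\xi(a,\bx;\btheta)=(\bx^\top\btheta)_+-a\,\bx^\top\btheta$ and using that $(\cdot)_+$ is $1$-Lipschitz gives $|\xi(a,\bx;\btheta_1)-\xi(a,\bx;\btheta_2)|\le 2\|\bx\|_\infty\|\btheta_1-\btheta_2\|_1$, so under \Cref{assump:boundedness} both $\max_i|\hat Y_{1\model_2^* i}-\tilde Y_{1\model_2^* i}|\le 2C\|\hat\btheta_{2n,\model_2^*}-\tilde\btheta_{2n,\model_2^*}\|_1$ and $\max_i|\tilde Y_{1\model_2^* i}-Y_{1\model_2^* i}|\le 2C\|\tilde\btheta_{2n,\model_2^*}-\btheta_{20,\model_2^*}\|_1$. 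The first bound is $o_p(n^{-1/2})$ by the second-stage conclusion of \Cref{lem:unif-consistency-both-stages} --- which rests only on \Cref{prop:unif-g2,lem:unif-gram-inv} and therefore does not invoke the present lemma, so no circularity arises --- and the second is $O_p(n^{-1/2})$ by the $\sqrt{n}$-consistency of the oracle second-stage estimator implicit in \Cref{lem:uniform-influence-function-stg2}. Each pseudo-outcome summand, taken coordinatewise, is of the form $n^{-1}\sum_i X_iY_i$ where $Y_i$ is one of these gaps: when $X_i$ is a coordinate of $(A_{1i}-\mu_{1A0i})\historybasis{1i}$, then $n^{-1}\sum_i|X_i|$ converges by the law of large numbers and $\max_i|Y_i|=o_p(n^{-1/2})$, so \Cref{lem:sum-with-op} forces $o_p(n^{-1/2})$; when $X_i$ is a coordinate of $(\hat\mu_{1Ai}-\mu_{1A0i})\historybasis{1i}$, then $n^{-1}\sum_i|X_i|=o_p(n^{-1/4})$ by Cauchy--Schwarz and the cross-fitted propensity rate, which against $\max_i|Y_i|=O_p(n^{-1/2})$ gives $o_p(n^{-3/4})$. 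Taking the maximum over the fixed number of coordinates of $\historybasis{1}$ completes \eqref{eq:G1-approx}.

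The bootstrap claim \eqref{eq:G1-approx-boot} follows by the same steps: \Cref{lem:mean-zero-rate-vec,lem:cross-rate-vec} already carry the multipliers, \Cref{lem:sum-with-op} still applies since $n^{-1}\sum_i|\bootWgt_i||X_i|$ converges by the law of large numbers under $\Eop(\bootWgt_1^2)<\infty$, and the bootstrapped pseudo-outcome gap $\max_i|\hat Y_{1\model_2^* i}^b-\tilde Y_{1\model_2^* i}^b|\le 2C\|\hat\btheta_{2n,\model_2^*}^b-\tilde\btheta_{2n,\model_2^*}^b\|_1=o_{p^*}(n^{-1/2})$ by the bootstrap analogue of the second-stage part of \Cref{lem:unif-consistency-both-stages}, read off from the bootstrap versions \eqref{eq:app-g2-hat-tilde-diff-b} and \eqref{eq:Hn-b-approx} by the identical matrix manipulation.
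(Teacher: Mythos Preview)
Your proof is correct and follows essentially the same route as the paper: the same five-term expansion of $\bm{\varphi}_{1i}^G$, the same invocations of \Cref{lem:mean-zero-rate-vec,lem:cross-rate-vec} for the nuisance-error terms, and the same Lipschitz bound on $\xi$ combined with \Cref{lem:sum-with-op} and the Stage~2 half of \Cref{lem:unif-consistency-both-stages} for the pseudo-outcome gap $\hat Y_{1\model_2^*}-\tilde Y_{1\model_2^*}$. The one place you depart from the paper is the term $(\hat\mu_{1A}-\mu_{1A0})\historybasis{1}(\tilde Y_{1\model_2^*}-\mu_{1Y\model_2^*0})$: the paper applies \Cref{lem:mean-zero-rate-vec} to it directly, whereas you first substitute $\tilde Y_{1\model_2^*}=Y_{1\model_2^*}+(\tilde Y_{1\model_2^*}-Y_{1\model_2^*})$ so that the $\bW_i$ fed to that lemma is genuinely $\bO_i$-measurable, and you absorb the $O_p(n^{-1/2})$ residual $\tilde Y_{1\model_2^*}-Y_{1\model_2^*}$ into the pseudo-outcome analysis via a direct $(\,n^{-1}\sum|X_i|\,)\cdot\max_i|Y_i|$ bound. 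This is arguably more rigorous than the paper's shortcut, and your explicit check that invoking only the Stage~2 conclusion of \Cref{lem:unif-consistency-both-stages} avoids circularity is likewise a point the paper leaves implicit.
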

\begin{proof}

  We will follow the general strategy as in the proof of \Cref{prop:unif-g2}, which demonstrated similar properties in the second stage. Additionally, we must handle complexities arising from the pseudo-outcomes. To establish \eqref{eq:G1-approx}, we first write the sums
  \begin{align*}
    \hat\bG_{1n,\model_2^*} - \tilde\bG_{1n,\model_2^*} &= \frac{1}{n} \sum_{i=1}^n \bm{\varphi}_{1i}^G
    \\
    \hat\bG_{1n,\model_2^*}^b - \tilde\bG_{1n,\model_2^*}^b &= \frac{1}{n} \sum_{i=1}^n \bootWgt_i \bm{\varphi}_{1i}^G,
  \end{align*}
  where we define for $i=1,\ldots,n,$ the term inside the sum as
  \[
    \bm{\varphi}_{1i}^G \defined (\bZ_{1i} - \hat\bmu_{1Zi}) (\hat{Y}_{1 \model_2^* i} - \hat\mu_{1Y \model_2^* i}) - (\bZ_{1i} - \bmu_{1Z0i}) (\tilde{Y}_{1 \model_2^* i} - \mu_{1Y \model_2^* 0i}).
  \]
  Following along the previous proof, we expand the representation of $\bm{\varphi}_{1i}^G$ to find
  \begin{align}
    \nonumber
    \bm{\varphi}_{1i}^G = 
    (\bZ_{1i} - \bmu_{1Z0i})(\hat{Y}_{1 \model_2^* i} - \tilde{Y}_{1 \model_2^* i}) + (\bZ_{1i} - \bmu_{1Z0 i})( \mu_{1 Y \model_2^* 0} - \hat{\mu}_{1 Y \model_2^* i}) 
    \\
    + \label{eq:app-phi-diff-G-1}
    (\bmu_{1Z0 i} - \hat{\bmu}_{1Z i}) (\hat{Y}_{1 \model_2^* i} - \hat{\mu}_{1 Y \model_2^* i}).
  \end{align}
  This last line involves both the estimated conditional expectation as well as the estimated pseudo-outcome. Expand this problematic term into three summands: \[
    \hat{Y}_{1 \model_2^* i} - \hat{\mu}_{1 Y \model_2^* i} = (\hat{Y}_{1 \model_2^* i} - \tilde{Y}_{1 \model_2^* i}) + (\tilde{Y}_{1 \model_2^* i} - \mu_{1 Y \model_2^* 0 i}) + (\mu_{1 Y \model_2^* 0 i} - \hat{\mu}_{1 Y \model_2^* i}).
  \]
  Substituting this into \eqref{eq:app-phi-diff-G-1}, we obtain
  \begin{align}
    \bm{\varphi}_{1i}^G =~&
    (\bZ_{1i} - \bmu_{1Z0i})(\hat{Y}_{1 \model_2^* i} - \tilde{Y}_{1 \model_2^* i}) \label{eq:app-uposi-g-diff-1}
    \\
    &+ (\bmu_{1Z0 i} - \hat{\bmu}_{1Z i})(\hat{Y}_{1 \model_2^* i} - \tilde{Y}_{1 \model_2^* i}) \label{eq:app-uposi-g-diff-5}
    \\
    &+ 
    (\bZ_{1i} - \bmu_{1Z0 i})( \mu_{1 Y \model_2^* 0} - \hat{\mu}_{1 Y \model_2^* i}) \label{eq:app-uposi-g-diff-2}
    \\
    &+ \left\{ \historybasis{1i} (\mu_{1A0i} - \hat\mu_{1Ai}) \right\}(\tilde{Y}_{1 \model_2^* i} - \mu_{1 Y \model_2^* 0 i}) \label{eq:app-uposi-g-diff-3}
    \\
    &+ (\bmu_{1Z0 i} - \hat{\bmu}_{1Z i})(\mu_{1 Y \model_2^* 0 i} - \hat{\mu}_{1 Y \model_2^* i}), \label{eq:app-uposi-g-diff-4}
  \end{align}
  where we made the additional substitution $(\bmu_{1Z0 i} - \hat{\bmu}_{1Z i}) = \historybasis{1i} (\mu_{1A0i} - \hat\mu_{1Ai})$ in \eqref{eq:app-uposi-g-diff-3}. Analogous to the proof of \Cref{prop:unif-g2}, 
  \Cref{lem:mean-zero-rate-vec} applies to \eqref{eq:app-uposi-g-diff-2,eq:app-uposi-g-diff-3}, while \Cref{lem:cross-rate-vec} applies to \eqref{eq:app-uposi-g-diff-4}. \Cref{assump:rate-assumps} as well as \Cref{prop:uniform-mu-Z-rates} ensure that \eqref{eq:app-uposi-g-diff-2}-\eqref{eq:app-uposi-g-diff-4} are $o_p(n^{-1/2})$ when averaged over $i=1,\ldots,n.$ The outcome of these lemmas are not impacted by the random multipliers.

  To examine the rates of \eqref{eq:app-uposi-g-diff-1} and \eqref{eq:app-uposi-g-diff-5}, we re-write the expression
  \[
    \hat{Y}_{1 \model_2^*} - \tilde{Y}_{1 \model_2^*} = - A_2 \history{2}^\top (\hat\btheta_{2n,\model_2^*} - \tilde\btheta_{2n,\model_2^*}) + \{\historybasis{2}(\model_2^*)^\top \hat\btheta_{2n,\model_2^*}\}_+ - \{\historybasis{2}(\model_2^*)^\top \tilde\btheta_{2n,\model_2^*}\}_+.
  \]
  Now we use the triangle inequality and the inequality $| A_+ - B_+ | \leq 2| A - B |$:
  \begin{align*}
    | \hat{Y}_{1 \model_2^*} - \tilde{Y}_{1 \model_2^*} | \leq | A_2 \historybasis{2}(\model_2^*)^\top (\hat\btheta_{2n,\model_2^*} - \tilde\btheta_{2n,\model_2^*})| + |\{\historybasis{2}(\model_2^*)^\top \hat\btheta_{2n,\model_2^*}\}_+ - \{\historybasis{2}(\model_2^*)^\top \tilde\btheta_{2n,\model_2^*}\}_+| \\
    \leq C \| \hat\btheta_{2n,\model_2^*} - \tilde\btheta_{2n,\model_2^*} \|_1 + 2 C \| \hat\btheta_{2n,\model_2^*} - \tilde\btheta_{2n,\model_2^*} \|_1.
  \end{align*}
  This is a uniform rate over all the observed samples; as such, the maximum $\max_{i=1,\ldots,n} |\hat{Y}_{1 \model_2^* i} - \tilde{Y}_{1 \model_2^* i}|$ is uniformly bounded by some constant multiple of the difference $\| \hat\btheta_{2n,\model_2^*} - \tilde\btheta_{2n,\model_2^*} \|_1.$
  Use \Cref{lem:unif-consistency-both-stages} to conclude that $\max_{i=1,\ldots,n} |\hat{Y}_{1 \model_2^* i} - \tilde{Y}_{1 \model_2^* i}| = o_p(n^{-1/2}).$
  Multiply the sample average of the terms in \eqref{eq:app-uposi-g-diff-1} by $\sqrt{n}$ and take the modulus to obtain
  \begin{equation}
    \label{eq:G1-approx-1-rhs}
    \left| n^{-1/2} \sum_{i=1}^n (\bZ_{1i} - \bmu_{1Z0i}) (\hat{Y}_{1 \model_2^* i} - \tilde{Y}_{1 \model_2^* i}) \right| \leq n^{-1/2} \sum_{i=1}^n \left\| \bZ_{1i} - \bmu_{1Z0i} \right\|_{\infty} |\hat{Y}_{1 \model_2^* i} - \tilde{Y}_{1 \model_2^* i} |
  \end{equation}
  Because $\|\bZ_{1i} - \bmu_{1Z0i}\|_{\infty} \leq C$ and the maximum is taken over finitely-many elements, its sample mean converges. Furthermore, the bound on the maximum of the remaining factor of the RHS ensures that the conditions of \Cref{lem:sum-with-op} are satisfied. Consequently, \eqref{eq:G1-approx-1-rhs} is $o_p(1).$ The conclusion of this lemma likewise does not change in the presence of random multipliers; to see this, notice that the average \[
    \frac{1}{n}\sum_{i=1}^n \bootWgt_i \|\bZ_{1i} - \bmu_{1Z0i}\|_{\infty}
  \]
  also converges in probability due to independence of $\bootWgt_i$ and the other term.
  
  A very similar argument can be used for \eqref{eq:app-uposi-g-diff-5}. Writing the desired term and applying similar bounds as in \eqref{eq:G1-approx-1-rhs}, we need to show that
  \begin{equation*}
    n^{-1/2} \sum_{i=1}^n \left\| \bmu_{1Z0 i} - \hat{\bmu}_{1Z i} \right\|_{\infty} |\hat{Y}_{1 \model_2^* i} - \tilde{Y}_{1 \model_2^* i} |
  \end{equation*}
  converges in probability to zero, which would follow from \Cref{lem:sum-with-op} if
  \[
    \frac{1}{n} \sum_{i=1}^n \left\| (\mu_{1A0 i} - \hat\mu_{1A i}) \historybasis{1i} \right\|_{\infty} \leq C \frac{1}{n} \sum_{k=1}^K \sum_{i \in \Ik} | \mu_{1A0 i} - \hat\mu_{1A i} |
  \]
  converges in probability to some constant, where we have re-written the sum on the RHS to draw attention to the fold structure. We can apply Markov's inequality to each of the $K$ folds to ensure that this piece is $o_p(1)$. This also holds in the presence of random multipliers: for the $k^{th}$ fold, we have \[
    \frac{1}{n_k} \sum_{i \in \Ik} |\bootWgt_i| | \mu_{1A0 i} - \hat\mu_{1A i} | \overset{p^*}{\longrightarrow} \Eop |G|     \Eop\left\{ |\mu_{1A0} - \hat\mu_{1A}| ~\big|~\crossfitDataC \right\},
  \]
  where we recall that $n_k \defined |\Ik|$ and $n_k / n = K + o_p(1).$ By \Cref{assump:rate-assumps}, this last quantity converges to zero.

  We have handled each of the terms \eqref{eq:app-uposi-g-diff-1,eq:app-uposi-g-diff-2,eq:app-uposi-g-diff-3,eq:app-uposi-g-diff-4,eq:app-uposi-g-diff-5} when averaged over the $n$ indices, also accounting for the addition random multipliers. This completes the proof.
\end{proof}



\begin{lemma}
  \label{lem:g1-hat-rand-model}
  Under \Cref{assump:boundedness,assump:bounded-model,assump:rate-assumps,assump:model-selection-consistency,assump:regularity}, we have
  \begin{align*}
    \| \hat\bG_{1n,\hat\model_2} - \bG_{10,\model_2^*} \|_{\infty} = \| \tilde\bG_{1n,\model_2^*} - \bG_{10,\model_2^*} \|_{\infty} + o_p(n^{-1/2}) \\
    \| \hat\bG^b_{1n,\hat\model_2} - \hat\bG_{1n,\hat\model_2} \|_{\infty} = \| \tilde\bG^b_{1n,\model_2^*} - \tilde\bG_{1n,\model_2^*} \|_{\infty} + o_{p^*}(n^{-1/2})
  \end{align*}
\end{lemma}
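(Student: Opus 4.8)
The plan is to obtain both equalities from two facts already in hand. The first is \Cref{assump:model-selection-consistency}, which lets me replace the data-driven model $\hat\model_2$ by its deterministic limit $\model_2^*$ at no asymptotic cost; the second is \Cref{lem:approximation-cross-fit}, which replaces the cross-fitted first-stage gradient by the oracle gradient \emph{at the fixed model $\model_2^*$} up to $o_p(n^{-1/2})$ in the non-bootstrap case and $o_{p^*}(n^{-1/2})$ in the bootstrap case. A reverse-triangle-inequality step then converts approximations of the underlying $p_1$-vectors in $\ell_\infty$ into the stated equalities of norms.

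For the first display I would view $\{\hat\bG_{1n,\model_2}:\model_2\in\modelspace_2(C_2)\}$ as a stochastic process on $\modelspace_2(C_2)$ with values in $(\R^{p_1},\|\cdot\|_\infty)$ and apply \Cref{lem:stochastic-process-random-index} at the random index $\hat\model_2$. Since $P(\hat\model_2=\model_2^*)\to1$, this yields $\|\hat\bG_{1n,\hat\model_2}-\hat\bG_{1n,\model_2^*}\|_\infty=o_p(r_n)$ for every positive sequence $r_n$, in particular for $r_n=n^{-1/2}$. Chaining this with \eqref{eq:G1-approx} and the triangle inequality gives $\hat\bG_{1n,\hat\model_2}-\tilde\bG_{1n,\model_2^*}=o_p(n^{-1/2})$ in $\ell_\infty$; subtracting the deterministic vector $\bG_{10,\model_2^*}$ and using $\bigl|\,\|u\|_\infty-\|v\|_\infty\,\bigr|\le\|u-v\|_\infty$ with $u=\hat\bG_{1n,\hat\model_2}-\bG_{10,\model_2^*}$ and $v=\tilde\bG_{1n,\model_2^*}-\bG_{10,\model_2^*}$ completes it.

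For the bootstrap display I would first note that the same random-index reduction applies to the perturbed process $\{\hat\bG^b_{1n,\model_2}\}$: the selection event $\{\hat\model_2=\model_2^*\}$ still has probability tending to one under the product law $P_0\times P_{\omega}$, since the selection rule is a functional of $\bO_1,\dots,\bO_n$ alone. Hence both $\|\hat\bG^b_{1n,\hat\model_2}-\hat\bG^b_{1n,\model_2^*}\|_\infty$ and $\|\hat\bG_{1n,\hat\model_2}-\hat\bG_{1n,\model_2^*}\|_\infty$ are $o_{p^*}(n^{-1/2})$. I then split
\begin{multline*}
  \bigl(\hat\bG^b_{1n,\hat\model_2}-\hat\bG_{1n,\hat\model_2}\bigr)-\bigl(\tilde\bG^b_{1n,\model_2^*}-\tilde\bG_{1n,\model_2^*}\bigr)\\
  =\bigl(\hat\bG^b_{1n,\hat\model_2}-\hat\bG^b_{1n,\model_2^*}\bigr)+\bigl(\hat\bG^b_{1n,\model_2^*}-\tilde\bG^b_{1n,\model_2^*}\bigr)\\
  -\bigl(\hat\bG_{1n,\hat\model_2}-\hat\bG_{1n,\model_2^*}\bigr)-\bigl(\hat\bG_{1n,\model_2^*}-\tilde\bG_{1n,\model_2^*}\bigr),
\end{multline*}
where the first and third brackets on the right are $o_{p^*}(n^{-1/2})$ by the random-index argument, the second is $o_{p^*}(n^{-1/2})$ by \eqref{eq:G1-approx-boot}, and the fourth is $o_p(n^{-1/2})$ (hence $o_{p^*}(n^{-1/2})$) by \eqref{eq:G1-approx}. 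The left side is therefore $o_{p^*}(n^{-1/2})$ in $\ell_\infty$, and one more reverse-triangle step gives the claimed equality of norms.

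I do not expect a genuine obstacle: once \Cref{lem:approximation-cross-fit} is available the argument is pure bookkeeping plus elementary norm inequalities. The one place worth stating carefully is that the random-index step remains valid after enlarging the sample space by the bootstrap weights --- that is, $P(\hat\model_2=\model_2^*)\to1$ is unaffected because selection ignores the multipliers --- so that the $o_{p^*}$ accounting is airtight.
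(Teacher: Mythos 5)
Your proposal is correct and follows essentially the same route as the paper: reduce the random model $\hat\model_2$ to $\model_2^*$ via \Cref{lem:stochastic-process-random-index} and \Cref{assump:model-selection-consistency}, replace the cross-fitted gradient by the oracle gradient via \Cref{lem:approximation-cross-fit}, and conclude with the (reverse) triangle inequality; the same chain handles the bootstrap display. Your explicit four-term split for the bootstrap case and the remark that $P(\hat\model_2=\model_2^*)\to 1$ is unaffected by the multipliers are more careful than the paper's one-line dismissal, but they are not a different argument.
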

\begin{proof}
  Decompose the first term as \[
    \| \hat\bG_{1n,\hat\model_2} - \bG_{10,\model_2^*} \|_{\infty} = \| \hat\bG_{1n,\hat\model_2} - \hat\bG_{1n,\model_2^*} \|_{\infty} + \| \hat\bG_{1n,\model_2^*} - \tilde\bG_{1n,\model_2^*} \|_{\infty} + \| \tilde\bG_{1n,\model_2^*} - \bG_{10,\model_2^*} \|_{\infty}.
  \]
  We need to show that the first two terms are $o_p(n^{-1/2}).$ Viewing $\left\{ \hat\bG_{1n,\model_2} : \model_2 \in \modelspace_2 \right\}$ as a stochastic process taking values in the norm-induced metric space $(\R^{p_1}, \| \cdot \|_{\infty}),$ notice that it is measurable given the observed data $\bO_1,\ldots,\bO_n,$ the partition $\Pcal_K,$ and any additional randomness in the machine learning procedures which yield cross-fitted estimates. Given all of this information, the model selection procedure $\hat\model_2$ is also measurable. Consequently, \Cref{lem:stochastic-process-random-index} applies, establishing that the first term is $o_p(n^{-1/2}).$
  The second term is $o_p(n^{-1/2})$ by \Cref{lem:approximation-cross-fit}, which gives the desired result.

  This same argument establishes the result for the bootstrap version.
\end{proof}

\subsubsection{Results Establishing an Influence Function in the First Stage}
\label{app:lemmas-stg-1-inf}

Now, we need to show that the UPoSI bootstrap results apply in stage 1. To this end, recall the definition in \Cref{app:defining-pseudo-outcomes} of $B_{\model_2} \defined \Ind{(\historybasis{2}(\model_2)^\top \btheta_{20,\model_2} > 0)} - A_{2}.$ First, we need the blip function to behave like a smooth function:
\begin{lemma}[The blip function does not have unsmooth behavior]
  \label{lem:blip-inf}
  Let $\bootWgt_i$ be i.i.d. random multipliers as described in \Cref{sec:cf-perturb-risk}. Further, let the random variable $\bM \in \R^{p_1 \times | \model_2^* |}$ be defined according to the expression \[
    \bM \defined B_{\model_2^*} \left\{ A_{1} - \mu_{1A0}(\history{1}) \right\} \historybasis{1} \historybasis{2}(\model_2^*)^\top,
  \]
  with $\bM_i,~i=1,\ldots,n$ the observed realizations from $\bO_1,\ldots,\bO_n.$
  For both $v=0$ and $v=1$ under \Cref{assump:boundedness,assump:bounded-model,assump:rate-assumps,assump:model-selection-consistency,assump:regularity},
  \begin{align}
    \nonumber
    \frac{1}{n} \sum_{i=1}^n \bootWgt_i^v \historybasis{1i} (A_{1i} - \mu_{1A0i}) 
    \left[
      \xi\left\{A_{2i}, \historybasis{2i}(\model_2^*); \tilde\btheta_{2n,\model_2^*} \right\} - \xi\left\{A_{2i}, \historybasis{2i}(\model_2^*); \btheta_{20,\model_2^*} \right\}
    \right] \\ 
    \label{eq:blip-inf-1}
    = \frac{1}{n} \sum_{i=1}^n \bootWgt_i^v \bM_i \left( \tilde\btheta_{2n,\model_2^*} - \btheta_{20,\model_2^*} \right) +  o_{p^*}(n^{-1/2}) \\
    \label{eq:blip-inf-2}
    = \Eop( \bM) \frac{1}{n} \sum_{i=1}^n \mathrm{Inf}_{2 \model_2^* i} +  o_{p^*}(n^{-1/2}).
  \end{align}
  Similarly, the perturbation bootstrap version satisfies
  \begin{align}
    \nonumber
    \frac{1}{n} \sum_{i=1}^n \bootWgt_i^v \historybasis{1i} (A_{1i} - \mu_{1A0i}) \left[
      \xi\left\{A_{2i}, \historybasis{2i}(\model_2^*); \tilde\btheta_{2n,\model_2^*}^b \right\} - \xi\left\{A_{2i}, \historybasis{2i}(\model_2^*); \btheta_{20,\model_2^*} \right\}
    \right]
    \\
    \label{eq:blip-inf-3}
    = \Eop( \bM) \frac{1}{n} \sum_{i=1}^n \bootWgt_i \mathrm{Inf}_{2 \model_2^* i} +  o_{p^*}(n^{-1/2}).
  \end{align}
\end{lemma}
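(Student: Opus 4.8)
The plan is to exploit the representation of the blip function through the positive-part operator and to show that the only non-smoothness --- the kink at $\historybasis[\top]{2}\btheta_{20,\model_2^*}=0$ --- contributes a negligible remainder, so that the increment $\xi\{A_2,\historybasis{2}(\model_2^*);\tilde\btheta_{2n,\model_2^*}\}-\xi\{A_2,\historybasis{2}(\model_2^*);\btheta_{20,\model_2^*}\}$ is, up to $o_{p^*}(n^{-1/2})$, linear in $\tilde\btheta_{2n,\model_2^*}-\btheta_{20,\model_2^*}$ with coefficient furnished by $\bM$. Concretely, since $\xi(a_2,\bx;\btheta)=(\bx^\top\btheta)_+ - a_2(\bx^\top\btheta)$, setting $\rho(u,u')\defined (u')_+-(u)_+-(u'-u)\Ind{(u>0)}$ one has the pointwise bound
\[
  |\rho(u,u')|\ \le\ |u'-u|\,\Ind{(|u|\le|u'-u|)},
\]
because $\rho$ vanishes unless $u$ and $u'$ have opposite signs, which forces $|u|\le|u'-u|$. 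Plugging in $u_i=\historybasis{2i}(\model_2^*)^\top\btheta_{20,\model_2^*}$ and $u_i'=\historybasis{2i}(\model_2^*)^\top\tilde\btheta_{2n,\model_2^*}$ and using $\Ind{(u_i>0)}-A_{2i}=B_{\model_2^* i}$ gives, after multiplying by $\historybasis{1i}(A_{1i}-\mu_{1A0i})$ and averaging, precisely $n^{-1}\sum_i\bootWgt_i^v\bM_i(\tilde\btheta_{2n,\model_2^*}-\btheta_{20,\model_2^*})$ plus the remainder $n^{-1}\sum_i\bootWgt_i^v\historybasis{1i}(A_{1i}-\mu_{1A0i})\rho_i$.

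The second step bounds that remainder. By \Cref{assump:boundedness}, $\|\historybasis{1i}\|_\infty\vee\|\historybasis{2i}(\model_2^*)\|_\infty\le C$ and $|A_{1i}-\mu_{1A0i}|\le1$, hence $|\rho_i|\le \delta_n\,\Ind{(|u_i|\le\delta_n)}$ where $\delta_n\defined C\|\tilde\btheta_{2n,\model_2^*}-\btheta_{20,\model_2^*}\|_1=O_p(n^{-1/2})$, the rate coming from \Cref{lem:uniform-influence-function-stg2}. For $v=0$ the remainder (in $\ell_\infty$) is at most $C\delta_n\cdot n^{-1}\sum_i\Ind{(|u_i|\le\delta_n)}$; conditioning on the high-probability event $\{\delta_n\le M n^{-1/2}\}$ and invoking Markov's inequality together with $P(|u_i|\le M n^{-1/2})\to P(\historybasis[\top]{2}\btheta_{20,\model_2^*}=0)=0$ --- which is exactly \Cref{assump:regularity} --- gives $n^{-1}\sum_i\Ind{(|u_i|\le\delta_n)}=o_p(1)$, so the remainder is $O_p(n^{-1/2})\,o_p(1)=o_p(n^{-1/2})$. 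For $v=1$ I would instead apply Cauchy--Schwarz, using $n^{-1}\sum_i\bootWgt_i^2\to\Eop(\bootWgt^2)<\infty$ and $n^{-1}\sum_i\rho_i^2\le\delta_n^2\,n^{-1}\sum_i\Ind{(|u_i|\le\delta_n)}=o_p(n^{-1})$. This establishes \eqref{eq:blip-inf-1} for both $v$; the bootstrap version is the same with $\tilde\btheta_{2n,\model_2^*}^b$ in place of $\tilde\btheta_{2n,\model_2^*}$, using $\|\tilde\btheta_{2n,\model_2^*}^b-\btheta_{20,\model_2^*}\|_1=O_{p^*}(n^{-1/2})$ from \eqref{eq:uniform-influence-function-stg2-2} and noting that each $u_i$ is free of the multipliers, so the random-threshold argument passes through after conditioning on $\bootWgt_1,\dots,\bootWgt_n$.

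The third step converts $n^{-1}\sum_i\bootWgt_i^v\bM_i(\cdot)$ into $\Eop(\bM)\,n^{-1}\sum_i(\cdot)$. The entries of $\bM$ are bounded (again by \Cref{assump:boundedness}, since $|B_{\model_2^*}|\le1$), so $n^{-1}\sum_i\bootWgt_i^v\bM_i=\Eop(\bM)+o_{p^*}(1)$: for $v=0$ by the law of large numbers, for $v=1$ because $\Eop(\bootWgt\bM)=\Eop(\bootWgt)\Eop(\bM)=\Eop(\bM)$ by independence of $\bootWgt_i$ from $\bO_i$ and $\Eop(\bootWgt)=1$. Combining with the second-stage expansions $\tilde\btheta_{2n,\model_2^*}-\btheta_{20,\model_2^*}=n^{-1}\sum_i\mathrm{Inf}_{2i}+o_p(n^{-1/2})$ and $\tilde\btheta_{2n,\model_2^*}^b-\btheta_{20,\model_2^*}=n^{-1}\sum_i\bootWgt_i\mathrm{Inf}_{2i}+o_{p^*}(n^{-1/2})$ supplied by \Cref{lem:uniform-influence-function-stg2} at $\model_2=\model_2^*$, and with $n^{-1}\sum_i\mathrm{Inf}_{2i}=O_p(n^{-1/2})$, $n^{-1}\sum_i\bootWgt_i\mathrm{Inf}_{2i}=O_{p^*}(n^{-1/2})$ (averages of independent mean-zero terms, the mean-zero property being \eqref{eq:norm-eq-true-2}), the product of the two factors equals $\Eop(\bM)$ times the leading average plus $o_{p^*}(n^{-1/2})$. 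This is \eqref{eq:blip-inf-2}, and the identical manipulation with $\tilde\btheta_{2n,\model_2^*}^b$ gives \eqref{eq:blip-inf-3}.

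The main obstacle is the remainder estimate of the second step: taming the kink of the positive-part operator requires pitting the $O_p(n^{-1/2})$ size of the second-stage estimation error against the mass that $\historybasis[\top]{2}\btheta_{20,\model_2^*}$ places within $\delta_n$ of zero --- a mass that is only $o(1)$, not $O(\delta_n)$, under \Cref{assump:regularity} alone --- and checking that the product is still $o_p(n^{-1/2})$ even though $\delta_n$ is random, and that this survives conditioning on the bootstrap weights. This is precisely the content of the lemma's title.
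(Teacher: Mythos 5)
Your proof is correct and follows essentially the same route as the paper: your $\rho_i$ is exactly the sum of the paper's two remainder terms \eqref{eq:blip-inf-eq-2} and \eqref{eq:blip-inf-eq-3}, your bound $|\rho_i|\leq|u_i'-u_i|\,\Ind{(|u_i|\leq|u_i'-u_i|)}$ is the paper's $\tilde R_{ni}$/$R_{ni}$ device, and the final linearization via the LLN for $\bM$ and \Cref{lem:uniform-influence-function-stg2} matches the paper's third step. The only difference is that you spell out explicitly the $n^{-1}\sum_i \Ind{(|u_i|\leq\delta_n)}=o_p(1)$ argument (high-probability event for the random threshold, Markov, and \Cref{assump:regularity}) that the paper outsources to the supplement of \citet{ertefaieRobustQLearning2021c}, which is a welcome addition rather than a departure.
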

\begin{proof}
  The arguments for \eqref{eq:blip-inf-1} are exactly the same as those around equations (38) and (39) in the Supplement of \citet{ertefaieRobustQLearning2021}. 
  
  Using the definitions of $\tilde{R}_{ni} = \Ind{ \left( \historybasis{2i}(\model_2^*)^\top \tilde\btheta_{2n,\model_2^*} > 0 \right)} - \Ind{ \left( \historybasis{2i}(\model_2^*)^\top \btheta_{20,\model_2^*} > 0 \right)}$ and $R_{ni} = \Ind{\left\{ 0 \leq | \historybasis{2i}(\model_2^*)^\top\btheta_{20,\model_2^*} | \leq |\historybasis{2i}(\model_2^*)^\top (\tilde\btheta_{2n,\model_2^*} - \btheta_{20,\model_2^*})| \right\}}$, which satisfies $|\tilde{R}_{ni}|\leq R_{ni}$, we expand the first term:
  \begin{align}
    \nonumber
    \frac{1}{n} \sum_{i=1}^n& \bootWgt_i^v \historybasis{1i} (A_{1i} - \mu_{1A0i}) \left[
      \xi\left\{A_{2i}, \historybasis{2i}(\model_2^*); \tilde\btheta_{2n,\model_2^*} \right\} - \xi\left\{A_{2i}, \historybasis{2i}(\model_2^*); \btheta_{20,\model_2^*} \right\}
    \right]
    \\
    \label{eq:blip-inf-eq-1}
    &= \frac{1}{n} \sum_{i=1}^n \bootWgt_i^v \historybasis{1i} (A_{1i} - \mu_{1A0i}) B_{\model_2^* i} \historybasis{2i}(\model_2^*)^\top (\tilde\btheta_{2n,\model_2^*} - \btheta_{20,\model_2^*})
    \\
    \label{eq:blip-inf-eq-2}
    &+ \frac{1}{n} \sum_{i=1}^n \bootWgt_i^v \historybasis{1i} (A_{1i} - \mu_{1A0i}) \historybasis{2i}(\model_2^*)^\top\btheta_{20,\model_2^*} \tilde{R}_{ni}
    \\
    \label{eq:blip-inf-eq-3}
    &+ \frac{1}{n} \sum_{i=1}^n \bootWgt_i^v \historybasis{1i} (A_{1i} - \mu_{1A0i}) \historybasis{2i}(\model_2^*)^\top (\tilde\btheta_{2n,\model_2^*} - \btheta_{20,\model_2^*}) \tilde{R}_{ni}.
  \end{align}
  Equation \eqref{eq:blip-inf-eq-2} can be bounded using 
  \begin{align*}
    \left\| \frac{1}{n} \sum_{i=1}^n \bootWgt_i^v \historybasis{1i} (A_{1i} - \mu_{1A0i}) \historybasis{2i}(\model_2^*)^\top\btheta_{20,\model_2^*} \tilde{R}_{ni} \right\|_{\infty}
    \\ 
    \leq C \frac{1}{n} \sum_{i=1}^n |\bootWgt_i^v| |\historybasis{2i}(\model_2^*)^\top (\tilde\btheta_{2n,\model_2^*} - \btheta_{20,\model_2^*})| R_{ni},
  \end{align*}
  which follows from the triangle inequality, $\| \historybasis{1i} (A_{1i} - \mu_{1A0i}) \|_{\infty} \leq C,$ and \[
    |\historybasis{2i}(\model_2^*)^\top\btheta_{20,\model_2^*} \tilde{R}_{ni} | \leq |\historybasis{2i}(\model_2^*)^\top (\tilde\btheta_{2n,\model_2^*} - \btheta_{20,\model_2^*})| R_{ni}.
  \]
  A similar argument bounds \eqref{eq:blip-inf-eq-3}: 
  \begin{align*}
    \left\| \frac{1}{n} \sum_{i=1}^n \bootWgt_i^v \historybasis{1i} (A_{1i} - \mu_{1A0i}) \historybasis{2i}(\model_2^*)^\top (\tilde\btheta_{2n,\model_2^*} - \btheta_{20,\model_2^*}) \tilde{R}_{ni} \right\|_{\infty} 
    \\
    \leq C \frac{1}{n} \sum_{i=1}^n |\bootWgt_i^v| |\historybasis{2i}(\model_2^*)^\top (\tilde\btheta_{2n,\model_2^*} - \btheta_{20,\model_2^*})| R_{ni}.
  \end{align*}
  Apply the Cauchy-Schwarz inequality to find
  \begin{align*}
    \frac{1}{n} \sum_{i=1}^n |\bootWgt_i^v| |\historybasis{2i}(\model_2^*)^\top (\tilde\btheta_{2n,\model_2^*} - \btheta_{20,\model_2^*})| R_{ni} 
    \\
    \leq C \left[  \frac{1}{n} \sum_{i=1}^n |\bootWgt_i^v| |\historybasis{2i}(\model_2^*)^\top (\tilde\btheta_{2n,\model_2^*} - \btheta_{20,\model_2^*})|^2 \right]^{1/2} \left[ \frac{1}{n} \sum_{i=1}^n R_{ni}^2 \right]^{1/2}.
  \end{align*}
  Using Holder's inequality, we can bound $|\historybasis{2i}(\model_2^*)^\top (\tilde\btheta_{2n,\model_2^*} - \btheta_{20,\model_2^*})| \leq \| \historybasis{2i}(\model_2^*)^\top \|_{\infty} \| \tilde\btheta_{2n,\model_2^*} - \btheta_{20,\model_2^*} \|_1.$ Markov's inequality and the independence of $\bootWgt_i$ and $\bO_i$ can be used to show that the first bracketed quantity is $O_{p^*}(\| \tilde\btheta_{2n,\model_2^*} - \btheta_{20,\model_2^*} \|_1)$. By the arguments on page 14 of the supplement of \citet{ertefaieRobustQLearning2021}, the second term is $o_{p^*}(1).$ Consequently, the rate \(\| \tilde\btheta_{2n,\model_2^*} - \btheta_{20,\model_2^*} \|_1 = O_{p^*}(n^{-1/2})\) demonstrates that \eqref{eq:blip-inf-eq-2}-\eqref{eq:blip-inf-eq-3} are each $o_{p^*}(n^{-1/2}).$ This demonstrates the desired remainder term \eqref{eq:blip-inf-1}.

  Next, we establish \eqref{eq:blip-inf-2}.
  We can write \eqref{eq:blip-inf-eq-1} as an i.i.d. sum of a matrix-valued random variable $\bM_i \in \R^{p_1 \times | \model_2^* |}$ with $\| \bM \|_{\infty} \leq C^2$:
  \[
    \left( n^{-1}\sum_{i=1}^n \bootWgt_i^v \bM_i \right) \left( \tilde\btheta_{2n,\model_2^*} - \btheta_{20,\model_2^*} \right).
  \]
  Adding and subtracting \(\Eop\left( G^v \bM \right) \frac{1}{n} \sum_{i=1}^n \mathrm{Inf}_{2 \model_2^* i},\) equation \eqref{eq:blip-inf-eq-1} equals:
  \begin{align*}
    \Eop\left( G^v \bM \right) \frac{1}{n} \sum_{i=1}^n \mathrm{Inf}_{2 \model_2^* i} + \bR_{n G1 \model_2^*} \\ 
    \bR_{n G1 \model_2^*} \defined \left( n^{-1}\sum_{i=1}^n \bootWgt_i^v \bM_i \right) \left( \tilde\btheta_{2n,\model_2^*} - \btheta_{20,\model_2^*} \right) - \Eop\left( G^v \bM \right) \frac{1}{n} \sum_{i=1}^n \mathrm{Inf}_{2 \model_2^* i}.
  \end{align*}
  We arrive at the result if we show the remainder satisfies $\| \bR_{n G1 \model_2^*} \|_{\infty} = o_{p^*}(n^{-1/2}).$ Expanding the remainder, we have
  \begin{align*}
    \bR_{n G1 \model_2^*} = 
    \left\{ n^{-1}\sum_{i=1}^n \bootWgt_i^v \bM_i - \Eop\left( G^v \bM \right) \right\} \frac{1}{n} \sum_{i=1}^n \mathrm{Inf}_{2 \model_2^* i} \\
    + \Eop\left( G^v \bM \right) \left( \tilde\btheta_{2n,\model_2^*} - \btheta_{20,\model_2^*} - \frac{1}{n} \sum_{i=1}^n \mathrm{Inf}_{2 \model_2^* i} \right) \\
    + \left\{ n^{-1}\sum_{i=1}^n \bootWgt_i^v \bM_i - \Eop\left( G^v \bM \right) \right\} \left( \tilde\btheta_{2n,\model_2^*} - \btheta_{20,\model_2^*} - \frac{1}{n} \sum_{i=1}^n \mathrm{Inf}_{2 \model_2^* i} \right).
  \end{align*}
  Next, apply the triangle inequality to the $\ell_{\infty}$ norm and use
  the property $\| \bM \bv \|_{\infty} \leq \| \bM \|_{\infty} \| \bv \|_{1} \leq \| \bM \|_{\infty} \| \bv \|_{0} \| \bv \|_{\infty}$ for any matrix $\bM$ and vector $\bv$ to find
  \begin{align*}
    \left\| \bR_{n G1 \model_2^*} \right\|_{\infty} = |\model_2^*|
    \left\| n^{-1}\sum_{i=1}^n \bootWgt_i^v \bM_i - \Eop\left( G^v \bM \right) \right\|_{\infty} \left\| \frac{1}{n} \sum_{i=1}^n \mathrm{Inf}_{2 \model_2^* i} \right\|_{\infty} \\
    + |\model_2^*| \| \Eop\left( G^v \bM \right) \|_{\infty} \left\| \tilde\btheta_{2n,\model_2^*} - \btheta_{20,\model_2^*} - \frac{1}{n} \sum_{i=1}^n \mathrm{Inf}_{2 \model_2^* i} \right\|_{\infty} \\
    + |\model_2^*| \left\| n^{-1}\sum_{i=1}^n \bootWgt_i^v \bM_i - \Eop\left( G^v \bM \right) \right\|_{\infty} \left\| \tilde\btheta_{2n,\model_2^*} - \btheta_{20,\model_2^*} - \frac{1}{n} \sum_{i=1}^n \mathrm{Inf}_{2 \model_2^* i} \right\|_{\infty}.
  \end{align*}

  By \Cref{lem:uniform-influence-function-stg2}, \(\left\| \tilde\btheta_{2n,\model_2^*} - \btheta_{20,\model_2^*} - \frac{1}{n} \sum_{i=1}^n \mathrm{Inf}_{2 \model_2^* i} \right\|_{\infty} = o_{p^*}(n^{-1/2}).\) By the law of large numbers, \(\left\| \frac{1}{n} \sum_{i=1}^n \mathrm{Inf}_{2 \model_2^* i} \right\|_{\infty} = o_{p^*}(1).\) If \(\left\| n^{-1}\sum_{i=1}^n \bootWgt_i^v \bM_i - \Eop\left( G^v \bM \right) \right\|_{\infty} = O_{p^*}(n^{-1/2}),\) then $\| \bR_{n G1 \model_2^*} \|_{\infty} = o_{p^*}(n^{-1/2}).$ Since $\| \bM \|_{\infty} \leq C^2,$ the term being analyzed is the maximum over a finite number of sample means with expectation zero and finite variance. Consequently, the required rate holds, establishing \eqref{eq:blip-inf-2}.

  The equation \eqref{eq:blip-inf-3} follows by similar arguments. Notably, we can define $\tilde{R}_{ni}^b$ and $R_{ni}^b$ which are defined similarly to the non-bootstrapped versions $\tilde{R}_{ni}$ and $R_{ni}$, respectively. The definitions of these bootstrap analogs, replace $\tilde\btheta_{2n,\model_2^*}$ wherever it appears by $\tilde\btheta_{2n,\model_2^*}^b.$ Following through with the arguments, we merely need to show that \[
    \frac{1}{n} \sum_{i=1}^n R_{ni}^b = o_{p^*}(1).
  \]
  Using Markov's inequality, this follows if $\Eop(R_{n1}^b) \rightarrow 0$. Given its definition,
  \begin{align*}
    \Eop\left( R_{n1}^b  \right) 
    = \Prob\left\{ 0 \leq |\historybasis{2}(\model_2^*)^\top \btheta_{20,\model_2}| \leq |\historybasis{2}(\model_2^*)^\top (\tilde\btheta_{2n,\model_2}^b - \btheta_{20,\model_2})| \right\}.
  \end{align*}
  By the influence function representation of \Cref{lem:uniform-influence-function-stg2}, $\tilde\btheta_{2n,\model_2}^b - \btheta_{20,\model_2} = o_{p^*}(1)$. Consequently, if $|\historybasis{2}(\model_2^*)^\top \btheta_{20,\model_2}| > 0$ almost surely, then for any $\delta >0$ we may find $\gamma>0$ such that \(
    \Prob\left( |\historybasis{2}(\model_2^*)^\top \btheta_{20,\model_2}| < \gamma \right) \leq \delta.
  \) Notice that we are assured $\gamma>0$ since $|\historybasis{2}(\model_2^*)^\top \btheta_{20,\model_2}| > 0$ almost surely.
  Then bound the expectation \[
    \Eop\left( R_{n1}^b  \right) \leq \Prob\left\{ 0 \leq |\historybasis{2}(\model_2^*)^\top \btheta_{20,\model_2}| \leq |\historybasis{2}(\model_2^*)^\top (\tilde\btheta_{2n,\model_2}^b - \btheta_{20,\model_2})| , ~ |\historybasis{2}(\model_2^*)^\top \btheta_{20,\model_2}| \geq \gamma\right\} + \delta.
  \]
  The probability on the RHS is bounded by \[
    \Prob\left\{ 0 \leq \gamma \leq |\historybasis{2}(\model_2^*)^\top (\tilde\btheta_{2n,\model_2}^b - \btheta_{20,\model_2})| , ~ |\historybasis{2}(\model_2^*)^\top \btheta_{20,\model_2}| \geq \gamma\right\} \rightarrow 0,
  \]
  where the convergence results from the consistency of the bootstrap estimator.
  Since $\delta>0$ is arbitrary, this ensures that $\Eop\left( R_{n1}^b  \right) \rightarrow 0.$

  Continuing along in the analogous argument, we have shown that the top line of \eqref{eq:blip-inf-3} is equal to \[
    \left( n^{-1}\sum_{i=1}^n \bootWgt_i^v \bM_i \right) \left( \tilde\btheta^b_{2n,\model_2^*} - \btheta_{20,\model_2^*} \right).
  \]
  The remaining arguments follow as before, using the influence function representation of $n^{-1} \sum_{i=1}^n \bootWgt_i \mathrm{Inf}_{2 \model_2^* i}$ resuling from \Cref{lem:uniform-influence-function-stg2}.

\end{proof}

\begin{lemma}
  \label{lem:inf-g-stg-1}
  Let the random variable $\bM$ be defined as in \Cref{lem:blip-inf}.
  Define the function \[ 
    \mathrm{Inf}_{1 G \model_2^* i} \defined \historybasis{1i} (A_{1i} - \mu_{1A0i}) \epsilon_{1 \model_2^* i} + \Eop(\bM) \mathrm{Inf}_{2 \model_2^* i}.
    \]
  Under \Cref{assump:boundedness,assump:bounded-model,assump:rate-assumps,assump:model-selection-consistency,assump:regularity},
  \begin{align}
    \label{eq:inf-g-stg-1-statement-1}
    \tilde\bG_{1n,\model_2^*} - \bG_{10,\model_2^*} = \frac{1}{n} \sum_{i=1}^n \mathrm{Inf}_{1 G \model_2^* i} + o_p(n^{-1/2}) \\
    \label{eq:inf-g-stg-1-statement-2}
    \tilde\bG_{1n,\model_2^*}^b - \tilde\bG_{1n,\model_2^*} = \frac{1}{n} \sum_{i=1}^n (\bootWgt_i - 1) \mathrm{Inf}_{1 G \model_2^* i} + o_{p^*}(n^{-1/2})
  \end{align}
\end{lemma}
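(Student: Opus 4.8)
\emph{Proof sketch.}
The plan is to isolate the only non-standard feature of $\tilde\bG_{1n,\model_2^*}$, namely its dependence on the oracle second-stage estimate $\tilde\btheta_{2n,\model_2^*}$ through the pseudo-outcome $\tilde{Y}_{1\model_2^*}$; apart from that, it is a sample average of i.i.d.\ terms. So I would replace $\tilde{Y}_{1\model_2^*}$ by the ideal pseudo-outcome $Y_{1\model_2^*}$ of \eqref{eq:stage-1-pseudo-outcome} (built from the true $\btheta_{20,\model_2^*}$), splitting
\begin{align*}
  \tilde\bG_{1n,\model_2^*}
  &= \frac{1}{n}\sum_{i=1}^n \historybasis{1i}(A_{1i}-\mu_{1A0i})\big(Y_{1\model_2^* i} - \mu_{1Y\model_2^* 0i}\big) \\
  &\quad + \frac{1}{n}\sum_{i=1}^n \historybasis{1i}(A_{1i}-\mu_{1A0i})\big[\xi\{A_{2i},\historybasis{2i}(\model_2^*);\tilde\btheta_{2n,\model_2^*}\} - \xi\{A_{2i},\historybasis{2i}(\model_2^*);\btheta_{20,\model_2^*}\}\big]
  \;=:\; \mathrm{A}_n + \mathrm{B}_n .
\end{align*}
The term $\mathrm{A}_n$ is a clean i.i.d.\ average whose summand has mean $\bG_{10,\model_2^*}$ by \eqref{eq:rob-model-y1-model-1-true-delta}; using \eqref{eq:rob-model-y1-model-1} and the full-model normal equations \eqref{eq:norm-eq-true-1} to rewrite the centered summand in terms of $\epsilon_{1\model_2^*}$, one gets $\mathrm{A}_n - \bG_{10,\model_2^*} = n^{-1}\sum_i \historybasis{1i}(A_{1i}-\mu_{1A0i})\epsilon_{1\model_2^* i} + o_p(n^{-1/2})$, the remainder absorbing an $O_p(n^{-1/2})$ Gram fluctuation. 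The term $\mathrm{B}_n$ is precisely the quantity handled in \Cref{lem:blip-inf}: applying \eqref{eq:blip-inf-2} with $v=0$ gives $\mathrm{B}_n = \Eop(\bM)\,n^{-1}\sum_i \mathrm{Inf}_{2\model_2^* i} + o_p(n^{-1/2})$. Adding the two pieces yields \eqref{eq:inf-g-stg-1-statement-1}.

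For the bootstrap statement \eqref{eq:inf-g-stg-1-statement-2} I would expand $\tilde\bG_{1n,\model_2^*}^b - \tilde\bG_{1n,\model_2^*}$ and regroup the multipliers via $\bootWgt_i(\tilde Y_{1\model_2^* i}^b - \mu_{1Y\model_2^* 0i}) - (\tilde Y_{1\model_2^* i} - \mu_{1Y\model_2^* 0i}) = (\bootWgt_i-1)(\tilde Y_{1\model_2^* i} - \mu_{1Y\model_2^* 0i}) + \bootWgt_i(\tilde Y_{1\model_2^* i}^b - \tilde Y_{1\model_2^* i})$, producing a piece $(\mathrm I)$ weighted by $\bootWgt_i-1$ and a piece $(\mathrm{II})$ weighted by $\bootWgt_i$. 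In $(\mathrm I)$ I split $\tilde Y_{1\model_2^* i} - \mu_{1Y\model_2^* 0i}$ exactly as above: the $(Y_{1\model_2^* i}-\mu_{1Y\model_2^* 0i})$ part contributes $n^{-1}\sum_i(\bootWgt_i-1)\historybasis{1i}(A_{1i}-\mu_{1A0i})\epsilon_{1\model_2^* i} + o_{p^*}(n^{-1/2})$ (the $\bootWgt_i-1$ weights annihilate the deterministic centering), while the blip-difference part is $o_{p^*}(n^{-1/2})$, since writing $\bootWgt_i-1=\bootWgt_i^{1}-\bootWgt_i^{0}$ and invoking \eqref{eq:blip-inf-2} once with $v=1$ and once with $v=0$ produces the same $\Eop(\bM)\,n^{-1}\sum_i \mathrm{Inf}_{2\model_2^* i}$, which cancels in the difference. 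In $(\mathrm{II})$ I write $\tilde Y_{1\model_2^* i}^b - \tilde Y_{1\model_2^* i} = [\xi\{\cdots;\tilde\btheta_{2n,\model_2^*}^b\}-\xi\{\cdots;\btheta_{20,\model_2^*}\}] - [\xi\{\cdots;\tilde\btheta_{2n,\model_2^*}\}-\xi\{\cdots;\btheta_{20,\model_2^*}\}]$ and apply \eqref{eq:blip-inf-3} to the first bracket and \eqref{eq:blip-inf-2} (with $v=1$) to the second, getting $(\mathrm{II}) = \Eop(\bM)\,n^{-1}\sum_i(\bootWgt_i-1)\mathrm{Inf}_{2\model_2^* i} + o_{p^*}(n^{-1/2})$. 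Summing $(\mathrm I)$ and $(\mathrm{II})$ reconstitutes $n^{-1}\sum_i(\bootWgt_i-1)\mathrm{Inf}_{1G\model_2^* i}$ and gives \eqref{eq:inf-g-stg-1-statement-2}.

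The main obstacle is the second step of each decomposition: the pseudo-outcome difference $\xi\{\cdots;\tilde\btheta_{2n,\model_2^*}\}-\xi\{\cdots;\btheta_{20,\model_2^*}\}$ carries the non-smooth thresholding indicator $\Ind{(\historybasis{2}(\model_2^*)^\top\btheta > 0)}$, so a first-order expansion is not automatic. This is exactly what \Cref{lem:blip-inf} supplies (via \Cref{assump:regularity}, which rules out mass of $\historybasis{2}(\model_2^*)^\top\btheta_{20,\model_2^*}$ at zero), so the real work is to invoke that lemma at the correct level of weighting ($v=0$ versus $v=1$) and to keep careful track of $o_p$ versus $o_{p^*}$ remainders. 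The subsidiary point worth attention is to exploit the mean-zero property of $\bootWgt_i-1$ \emph{before} bounding in $(\mathrm I)$, so that terms that are individually only $O_{p^*}(n^{-1/2})$ collapse to $o_{p^*}(n^{-1/2})$ once the two applications of \Cref{lem:blip-inf} are differenced.
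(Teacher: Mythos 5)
Your overall route is the paper's: for \eqref{eq:inf-g-stg-1-statement-1} you split $\tilde\bG_{1n,\model_2^*}$ into the ideal-pseudo-outcome average plus the blip-difference term and invoke \eqref{eq:blip-inf-2}, exactly as in the paper's display \eqref{eq:inf-g-stg-1-1}; for \eqref{eq:inf-g-stg-1-statement-2} your regrouping into pieces $(\mathrm I)$ and $(\mathrm{II})$ differs algebraically from the paper's (which centers both the data and bootstrap gradients at $Y_{1\model_2^*}$ and then subtracts, so that the whole pseudo-outcome contribution is $\bootWgt_i(\tilde Y^b_{1\model_2^* i}-Y_{1\model_2^* i})-(\tilde Y_{1\model_2^* i}-Y_{1\model_2^* i})$, handled by one application each of \eqref{eq:blip-inf-3} with $v=1$ and \eqref{eq:blip-inf-2} with $v=0$), but your bookkeeping of which $v$ to use where is correct and the two regroupings are equivalent. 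Your identification of the thresholding indicator as the real obstacle, resolved by \Cref{lem:blip-inf} via \Cref{assump:regularity}, is also exactly the paper's reading.

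The one place you should not sign off as written is the treatment of the non-blip part. You assert that $\mathrm{A}_n-\bG_{10,\model_2^*}$ equals $n^{-1}\sum_i\historybasis{1i}(A_{1i}-\mu_{1A0i})\epsilon_{1\model_2^* i}$ up to ``an $O_p(n^{-1/2})$ Gram fluctuation'' absorbed into an $o_p(n^{-1/2})$ remainder; an $O_p(n^{-1/2})$ term with nonvanishing $\sqrt n$-scaled variance cannot be absorbed there. Concretely, with the model-based residual the discrepancy is $(\bHOracle{1}-\bH_{10})\btheta_{10,\model_1^F\model_2^*}$, a mean-zero i.i.d.\ average that is exactly $O_p(n^{-1/2})$; it must be carried inside the influence function (equivalently, $\historybasis{1i}(A_{1i}-\mu_{1A0i})\epsilon_{1\model_2^* i}$ must be read as the exactly centered gradient summand $\historybasis{1i}(A_{1i}-\mu_{1A0i})(Y_{1\model_2^* i}-\mu_{1Y\model_2^* 0 i})-\bG_{10,\model_2^*}$ so that the identity is exact, which is how the paper's display \eqref{eq:inf-g-stg-1-1} must be interpreted for the lemma to be true). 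The same issue recurs, more visibly, in your piece $(\mathrm I)$: $n^{-1}\sum_i(\bootWgt_i-1)\historybasis{1i}(A_{1i}-\mu_{1A0i})(Y_{1\model_2^* i}-\mu_{1Y\model_2^* 0 i})$ differs from $n^{-1}\sum_i(\bootWgt_i-1)$ times the centered summand by $\bG_{10,\model_2^*}\cdot n^{-1}\sum_i(\bootWgt_i-1)$, which is again exactly $O_{p^*}(n^{-1/2})$ and is \emph{not} annihilated; this matters here precisely because the multiplier bootstrap of a non-centered summand and of its centered version have different limiting covariances. To be fair, the paper's own proof finesses the identical terms through the symbol $\epsilon_{1\model_2^*}$ without pinning down its definition, so this is a shared imprecision rather than a divergence from the paper's argument --- but your sketch makes the inconsistency explicit, and you should repair it by fixing the influence-function summand to be exactly centered before differencing the data and bootstrap representations.
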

\begin{proof}
  Represent the quantity $\tilde\bG_{1n,\model_2^*} - \bG_{10,\model_2^*}$ as a sum:
  \begin{equation}
    \label{eq:inf-g-stg-1-1}
    \tilde\bG_{1n,\model_2^*} - \bG_{10,\model_2^*} = \frac{1}{n} \sum_{i=1}^n \historybasis{1i} (A_{1i} - \mu_{1A0i}) \left\{ \epsilon_{1 \model_2^* i} + (\tilde{Y}_{1 \model_2^* i} - Y_{1 \model_2^* i}) \right\}.
  \end{equation}
  The difference \( \tilde{Y}_{1 \model_2^* i} - Y_{1 \model_2^* i} \) can be represented as the difference in the blips: \[
    \tilde{Y}_{1 \model_2^* i} - Y_{1 \model_2^* i} = 
    \xi\left\{A_{2i}, \historybasis{2i}(\model_2^*); \tilde\btheta_{2n,\model_2^*} \right\} - \xi\left\{A_{2i}, \historybasis{2i}(\model_2^*); \btheta_{20,\model_2^*} \right\}.
  \]
  The term resulting from this blip difference is analyzed in \Cref{lem:blip-inf}, resulting in the second term in the definition of $\mathrm{Inf}_{1G \model_2^* i}$, along with an $o_p(n^{-1/2})$ remainder. Plugging this in \eqref{eq:inf-g-stg-1-1} yields the first result, \eqref{eq:inf-g-stg-1-statement-1}.

  Similarly, we consider the bootstrap version \[
    \tilde\bG_{1n,\model_2^*}^b - \bG_{10,\model_2^*} = \frac{1}{n} \sum_{i=1}^n \bootWgt_i \historybasis{1i} (A_{1i} - \mu_{1A0i}) \left\{ \epsilon_{1 \model_2^* i} + (\tilde{Y}_{1 \model_2^* i}^b - Y_{1 \model_2^* i}) \right\},
  \]
  where the bootstrap pseudo-outcomes \( \tilde{Y}_{1 \model_2^* i}^b := Y + \xi(\historybasis{2i}(\model_2^*); \tilde\btheta_{2n,\model_2^*}^b) \) use the bootstrapped estimator $\tilde\btheta_{2n,\model_2^*}^b.$ Subtracting off \eqref{eq:inf-g-stg-1-1} from the previous display, we arrive at the representation 
  \begin{align}
    \nonumber
    \tilde\bG_{1n,\model_2^*}^b - \tilde\bG_{1n,\model_2^*} 
    = \frac{1}{n} \sum_{i=1}^n (\bootWgt_i - 1) \historybasis{1i} (A_{1i} - \mu_{1A0i}) \epsilon_{1 \model_2^* i} \\
    \label{eq:inf-g-stg-1-2}
    + \frac{1}{n} \sum_{i=1}^n \historybasis{1i} (A_{1i} - \mu_{1A0i}) \left\{ \bootWgt_i (\tilde{Y}_{1 \model_2^* i}^b - Y_{1 \model_2^* i}) - (\tilde{Y}_{1 \model_2^* i} - Y_{1 \model_2^* i}) \right\}
  \end{align}
  In light of \eqref{eq:blip-inf-2} in \Cref{lem:blip-inf}, and using the definition of $\bM$ therein,
  \[
    \frac{1}{n} \sum_{i=1}^n \historybasis{1i} (A_{1i} - \mu_{1A0i}) (\tilde{Y}_{1 \model_2^* i} - Y_{1 \model_2^* i}) = (\Eop\bM) \frac{1}{n} \sum_{i=1}^n \mathrm{Inf}_{2 \model_2^* i} + o_{p^*}(n^{-1/2}).
  \]
  Similarly, \eqref{eq:blip-inf-3} in this same lemma ensures that 
  \[
    \frac{1}{n} \sum_{i=1}^n \bootWgt_i \historybasis{1i} (A_{1i} - \mu_{1A0i}) (\tilde{Y}_{1 \model_2^* i}^b - Y_{1 \model_2^* i}) = (\Eop\bM) \frac{1}{n} \sum_{i=1}^n \bootWgt_i \mathrm{Inf}_{2 \model_2^* i} + o_{p^*}(n^{-1/2}).
  \]
  Subtracting these representations, we obtain
  \begin{align*}
    \frac{1}{n} \sum_{i=1}^n \historybasis{1i} (A_{1i} - \mu_{1A0i}) \left\{ \bootWgt_i (\tilde{Y}_{1 \model_2^* i}^b - Y_{1 \model_2^* i}) - (\tilde{Y}_{1 \model_2^* i} - Y_{1 \model_2^* i}) \right\}\\ 
    = (\Eop\bM) \frac{1}{n} \sum_{i=1}^n (\bootWgt_i - 1) \mathrm{Inf}_{2 \model_2^* i} + o_{p^*}(n^{-1/2}).
  \end{align*}
  The result \eqref{eq:inf-g-stg-1-statement-2} can be found by plugging the above representation into \eqref{eq:inf-g-stg-1-2}.

\end{proof}

\section{Notes Regarding the Risk Functions in Stage 1}
\label{app:sec-stg1-risk-deriv}

In this section we provide more details in our justification of the limiting estimating equation \eqref{eq:norm-eq-true-1} in \Cref{sec:rql-param-oracle}. This derivation requires some additional care due to the dependence of the pseudo-outcomes $\tilde{Y}_{1\model_2}$ on the entire sample through $\tilde\btheta_{2n,\model_2}$.

In the Stage 2 problem, we began with the oracle risk functions $R_{2n,\model_2}(\btheta_{2,\model_2})$ and determined the limiting parameter $\btheta_{20,\model_2}$ by minimizing the expected risk---namely, we derived a closed-form representation for the function
\[
  \Eop{ \left[R_{2n,\model_2}(\btheta_{2,\model_2}) - R_{2n,\model_2}(\bzero) \right]  },
\]
and found a corresponding minimizer $\btheta_{20,\model_2}$ of this function in the $\R^{|\model_2|}-$valued argument $\btheta_{2,\model_2}$. This line of argument benefitted from the fact that the ``expected function'' did not change with $n$, and that a closed form was available.

In the Stage 1 problem, the analogous ``expected risk'' function
\[
  \Eop{ \left[ R_{1n,\model_1,\model_2}(\btheta_{1,\model_1}) - R_{1n,\model_1,\model_2}(\bzero) \right] }
\]
is not constant in $n$, and does not admit a simple closed-form solution. Nonetheless, the argument is not altered significantly if we instead take the limiting expectation as $n\rightarrow\infty$. In fact, the following lemma shows that such a line of reasoning leads to a desirable form which can be directly minimized, leading to
the ``expected normal equation'' system given in \eqref{eq:norm-eq-true-1}.

\begin{lemma}
  \label{lem:pop-risk-stg1}
  Suppose \Cref{assump:bounded-model,assump:boundedness} hold and that $\model_2 \in \modelspace_2$ satisfies \Cref{assump:regularity}. Then
  \[
    \lim_{n\rightarrow \infty} \Eop{ \left[ R_{1n,\model_1,\model_2}(\btheta_{1,\model_1}) - R_{1n,\model_1,\model_2}(\bzero) \right] } = -2 \bG_{10,\model_2}(\model_1)^\top \btheta_{1,\model_2} + \btheta_{1,\model_2}^\top \bH_{10}(\model_1) \btheta_{1,\model_2}.
  \]
\end{lemma}


\begin{proof}
Beginning with \eqref{eq:oracle-risk-decomp-1}, notice that the result is shown if we can determine that $\lim_{n\rightarrow\infty} \Eop \tilde\bG_{1n,\model_2} = \bG_{10,\model_2}$ and $\lim_{n\rightarrow\infty}\Eop \tilde\bH_{1n} = \bH_{10}$. The latter is true, since $\Eop \tilde\bH_{1n} = \bH_{10}$ for all $n$. In the remainder, we derive the result for $\bG_{10,\model_2}$.

Using similar derivations as in the stage 2 case, we have
\begin{align*}
  Y_{1 \model_2} - \mu_{1Y \model_2 0}(\history{1}) = \left\{ A_1 - \mu_{1A0}(\history{1}) \right\} \Delta_{1,\model_2}(\history{1}) + \varepsilon_{1,\model_2}
\end{align*}
with $\Eop(\varepsilon_{1,\model_2} ~|~ \history{1}) = 0.$ Next, we multiply both sides by $\left\{ A_1 - \mu_{1A0}(\history{1}) \right\} \historybasis{1}(\model_1)$. Taking expectations with respect to both sides of the resulting equation, we find
\begin{align*}
  \Eop\left[\left\{ Y_{1 \model_2} - \mu_{1Y \model_2 0}(\history{1}) \right\} \left\{ A_1 - \mu_{1A0}(\history{1}) \right\} \historybasis{1}(\model_1)\right] = \Eop\left[\left\{ A_1 - \mu_{1A0}(\history{1}) \right\}^2 \Delta_{1,\model_2}(\history{1})\historybasis{1}(\model_1)\right] + \bzero,
\end{align*}
where the final zero term comes from the conditional expectation of the error in the previous system. 

Consequently, the stated limiting representation for $\tilde\bG_{1n,\model_2}$ holds if we establish that the quantity $\lim_{n\rightarrow \infty}\Eop\left[ \left\{ \tilde{Y}_{1 \model_2} - Y_{1 \model_2} \right\} \left\{ A_1 - \mu_{1A0}(\history{1}) \right\} \historybasis{1}(\model_1) \right] = \bzero.$

The two pseudo-outcomes only differ based on the parameter used inside the $\xi$ functions. As such, we may write the equality
\begin{align*}
  \Eop\left[\left\{ \tilde Y_{1 \model_2} - \mu_{1Y \model_2 0}(\history{1}) \right\} \left\{ A_1 - \mu_{1A0}(\history{1}) \right\} \historybasis{1}(\model_1)\right] 
  \\
  = \Eop\left[\left\{ \xi\left(A_{2i}, \historybasis{2i}(\model_2); \tilde\btheta_{2n,\model_2} \right) - \xi\left(A_{2i}, \historybasis{2i}(\model_2); \btheta_{20,\model_2} \right) \right\} \left\{ A_1 - \mu_{1A0}(\history{1}) \right\} \historybasis{1}(\model_1)\right].
\end{align*}
This is exactly the piece that was analyzed in \Cref{lem:blip-inf}, although the lemma specialized to $\model_2^*$ rather than a general $\model_2$. This was because the proof generally required the regularity condition \Cref{assump:regularity} to hold. Using the result of the lemma, we find this line equal to
\begin{align*}
  \Eop( \bM) \Eop(\mathrm{Inf}_{2 \model_2 i}) + o(n^{-1/2}),
\end{align*}
where $\mathrm{Inf}_{2 \model_2 i}$ is the influence function for $\tilde\btheta_{2n,\model_2}$ (which is the same as that for $\hat\btheta_{2n,\model_2}$) derived in \Cref{lem:uniform-influence-function-stg2}, and the $o(n^{-1/2})$ remainder follows due to the uniform integrability implied by \Cref{assump:bounded-model,assump:boundedness}. Consequently, $\Eop(\mathrm{Inf}_{2 \model_2 i})=\bzero$, and the remainder term vanishes in the limit, establishing the desired result.

\end{proof}






\end{document}